\definecolor{violet}{RGB}{120,6,250}
\definecolor{orange}{RGB}{250,90,0}
\title{A Categorical Model for\\
  the Lambda Calculus with Constructors}
\author{Barbara Petit}
\date{Focus - INRIA - Università di Bologna\\
  \url{barbara.petit@ens-lyon.org}}
\renewcommand{\abstract}[1]{
  \begin{center}
    \bf Abstract:
  \end{center}
  \hfil
  \parbox{.9\linewidth}{#1}\\
}
\newcommand{\keywords}[1]{
  \vspace{20pt}\noindent
  \textsc{Keywords: }
  \parbox[t]{.84\linewidth}{#1}
}
\newtheorem{theorem}{Theorem}[section]
\newtheorem{lemma}[theorem]{Lemma}
\newtheorem{proposition}[theorem]{Proposition}
\newtheorem{corollary}[theorem]{Corollary}
\newtheorem{definition}{Definition}[section]
\newtheorem{fact}[definition]{Fact}
\newenvironment{proof}{

\noindent
\textsl{Proof:~}}{

}
\newcommand{\qed}{\hfill $\square$}
\newcommand{\ie}{\textit{i.e.}}                  
\newcommand{\cf}{\textit{cf.}}                   
\newcommand{\etc}{\textit{etc.}}                   
\newcommand{\via}{\textit{via}}                   
\newcommand{\resp}{\textit{resp.}}               
\newcommand{\wrt}{\textit{w.r.t.}}               
\newcommand{\ssi}{\textit{iff}}              
\newcommand{\tosc}{\ensuremath{\rightarrowtriangle}} 
\newcommand{\lc}{\ensuremath{`l_{\mathscr{C}}}} 
\newcommand{\eqlc}{\ensuremath{`~_{\lc}}}       
\newcommand{\lcm}{\ensuremath{`l^-_{\mathscr{C}}}} 
\newcommand{\fv}[1]{\ensuremath{\mathrm{fv}(#1)}}        
\newcommand{\dom}[1]{\ensuremath{\mathrm{dom}(#1)}}      
\newcommand{\subs}[2][x]{\ensuremath{[#1:=#2]}} 
\newcommand{\co}[1][c]{\ensuremath{\mathsf{#1}}}  
\newcommand{\terms}{\ensuremath{`L}}            
\newcommand{\rul}[1]{\ensuremath{\text{\sc #1}}}  
\newcommand{\cas}[2][`q]{\ensuremath{
        \{\hspace{-2pt}|#1|\hspace{-2pt}\}\cdot #2 }} 
\newcommand{\C}{\ensuremath{\mathbb{C}}}        
\newcommand{\ccc}{\ensuremath{\textup{CCC}}}         
\newcommand{\uno}{\ensuremath{\mathbf{1}}}                   
\newcommand{\term}[1][]{\ensuremath{!_{#1}}}     
\newcommand{\pdt}{\ensuremath{D^n}}     
\newcommand{\proj}[2][]{\ensuremath{\pi_{#2}^{#1}}} 
\newcommand{\eq}{\ensuremath{\cong}}        
\newcommand{\D}{\ensuremath{D}}                   
\newcommand{\ev}{\ensuremath{\mathtt{ev}}}        
\newcommand{\lam}{\ensuremath{\mathtt{lam}}}
\newcommand{\app}{\ensuremath{\mathtt{app}}}
\newcommand{\case}{\ensuremath{\mathtt{case}}}
\newcommand{\fc}[1][c]{\ensuremath{#1^*}}         
\newcommand{\fail}{\ensuremath{\lightning}}       
\newcommand{\id}[1][]{\ensuremath{I\!d}_{#1} }
\newcommand{\comp}{\ensuremath{\bullet}}          
\newcommand{\abstr}[1]{\ensuremath{#1^{`o}}}
\newcommand{\tuple}[1]{\ensuremath{\langle #1 \rangle}} 
\newcommand{\pair}[2][]{\ensuremath{\llparenthesis #2 \rrparenthesis_{#1}}}        
\newcommand{\itp}
[2][`G]{\ensuremath{{[#2]_{#1}}}}   
\newcommand{\fun}[2][x]{\ensuremath{{\hat{#1}.#2}}}   
\newcommand{\drefRefl}{\ensuremath{(D1)}}
\newcommand{\drefCO}{\ensuremath{(D2)}}
\newcommand{\drefCA}{\ensuremath{(D3)}}
\newcommand{\drefCL}{\ensuremath{(D4)}}
\newcommand{\drefCC}{\ensuremath{(D5)}}
\newcommand{\drefFail}{\ensuremath{(D6)}}
\newcommand{\PER}{\ensuremath{\mathbb{P}\textsc{er}_{\lc}}}      
\newcommand{\per}{\ensuremath{\lc\mathrm{-}per}}        
\newcommand{\M}{\ensuremath{\mathscr{M}}}               
\newcommand{\Ms}{\ensuremath{\mathscr{M}_{synt}}}       
\newcommand{\rel}[3][D]{\ensuremath{#2=#3:#1}}          
\newcommand{\class}[2][]{\ensuremath{\overline{#2}^{#1}}}
\newcommand{\cpl}[1]{\ensuremath{\widetilde{#1}}}   
\newcommand{\cnf}[1]{\ensuremath{\Downarrow #1}}    
\newcommand{\tocc}{\ensuremath{→_{\sc cc}}}   
\newcommand{\mes}[1]{\ensuremath{s(#1)}}             
\tikzset{
  every picture/.style={thick,
                        every node/.style={anchor=mid}},
  vcenter/.style = {baseline={([yshift=-.5ex]current bounding box)}},
  ad/.style      = {->,                     
                    bend right,
                    shorten <=0pt},         
  ag/.style      = {->,                     
                    bend left,
                    shorten <=0pt},         
  serpent/.style = {snake=snake,            
                    segment amplitude=.4mm,
                    segment length=2mm,
                    line after snake=1mm},               
  to right/.style= {                         
    to path={ (\tikztostart.mid east)
              --
              (\tikztotarget.mid west)
              \tikztonodes }
          },
  to left/.style={                         
    to path={ (\tikztostart.mid west)
              --
              (\tikztotarget.mid east)
              \tikztonodes }
          },
  m/.style       = {execute at begin node= \small$,
                    execute at end node=$%
                  },
  tiny m/.style  = {execute at begin node= $\scriptstyle,
                    execute at end node=$%
},
  etoile/.style  = {at end,            
                    font=$\scriptstyle *$,
                    auto=left},
  arete/.style = {->,shorten >=2pt,thick},
  }
\begin{document}
\maketitle
\abstract{
  The lambda calculus with constructors is an extension of the lambda calculus with variadic constructors. It decomposes the pattern-matching à la ML into a case analysis on constants and a commutation rule between case and application constructs. Although this commutation rule does not match with the usual computing intuitions, it makes the calculus expressive and confluent, with a rather simple syntax.
In this paper we define a sound notion of categorical model for the lambda calculus with constructors. We then prove that this definition is complete for the fragment of the calculus with no match-failure, using the model of partial equivalence relations. 
}

\keywords{
  Lambda calculus, Pattern matching, Semantics, Categorical model, PER model.
}

\section*{Introduction}
\label{sec:intro}

Pattern matching is now a key feature in most functional programming languages.
Inherited from the simple constants recognition mechanism that appeared in the late 60's  (in \textsl{Snobol} or in \textsl{Pascal} for instance), it is now a elaborated feature in main programming languages (\textsl{ML, Haskell} \etc) and some proof assistants (such as \textsl{Coq} or \textsl{Agda}), able to decompose complex data-structures.

Its theoretical aspects are being intensively studied since the 90's~\cite{Coquand92,OngRamsey11}.
In particular, several lambda calculi with pattern matching have been proposed~\cite{Oostrom90,CirKir98,JayKes06}.
Among them, the lambda calculus with constructors~\cite{AAA06} (or \lc-calculus) offers the advantage of having simple computation rules.
Indeed, the pattern matching \textit{à la} ML is there decomposed into two atomic rules (a constants analysis rule, and a commutation rule).
The rather simple syntax of this calculus together with the decomposition of its powerful computational behaviour into elementary steps stimulate a semantic study of the the \lc-calculus from a categorical point of view.

As far as we know, no categorical model had been proposed so far for a calculus with pattern matching.
Yet category theory allows to express some generic semantic properties on a calculus, and to factorise many of its different concrete models.
Furthermore, when the categorical model is \emph{complete}, it synthesises exactly the extensional properties of the calculus.
Since the description of the models for the pure lambda calculus as Cartesian closed categories with a reflexive object~\cite{Scott80}, 
some complete categorical models have been defined for variants of the lambda calculus~\cite{HofStreich97,Selinger01,FauMiq09}.

In this paper, after a brief presentation of the \lc-calculus (Sec.~\ref{sec:intro}), we establish a categorical definition of models for it (Sec.~\ref{sec:cat}).
We then prove that it is to some extent complete for the \lc-calculus, using the standard PER model and some rewriting techniques (Sec.~\ref{sec:per}).
Notice that we only use very basic notions of category theory (knowledge of the first two chapters of~\cite{AsperLongo91} is sufficient).

\section{The lambda calculus with constructors}
\label{sec:lc}

The lambda calculus with constructors extends the pure lambda calculus with pattern matching features:
a set of constants (that we consider here to be finite of cardinal~$n$) called \emph{constructors} and denoted by \co, \co[d] \etc\ is added, with a simple mechanism of case analysis on these constants (similar to the \texttt{case} instruction of Pascal):
\vspace{5pt}

\hfill
\begin{math}
  \cas[\co_1\mapsto t_1;\dots;\co_k\mapsto t_k]{\co_i}
  \quad → \quad t_i \qquad
\end{math}
\hfill
(\rul{CaseCons})

\noindent
Although only constant constructors can be analysed, a matching on variant constructors can be performed \via\ a commutation rule between case construction and application:

\hfill
\begin{math}
  \cas{(tu)} \quad→\quad (\cas{t})~u
\end{math}
\hfill (\rul{CaseApp})

This commutation rule enables simulating any pattern matching \textit{à la} ML, by generalising the following example: 
in the \lc-calculus, the predecessor function on unary integers (represented with the constructors \co[0] and \co[S]) is implemented as
\begin{math}
  \mathtt{pred}=`lx.
  \cas[{\co[0]\mapsto \co[0];\co[S]\mapsto `ly.y}]{x}.
\end{math}\quad
Applying this function to a non zero integer~$\co[S]\,n$ actually produces the expected result:
\begin{displaymath}
  \begin{array}[t]{ccl}
    \mathtt{pred}~(\co[S]~m) &→& 
    \cas[{\co[0]\mapsto \co[0];\co[S]\mapsto `ly.y}]{(\co[S]~m)}\\
    &→&(\cas[{\co[0]\mapsto \co[0];\co[S]\mapsto `ly.y}]{\co[S]})~m
    ~→~ (`ly.y)~m ~→~ m
  \end{array}
\end{displaymath}

Formally, the syntax of the \lc-calculus is defined by the following grammar:
\\

\hfill
\begin{math}
  \begin{array}[b]{c@{\ :=\quad}l}
    t, u, v & x ~~|~~tu ~~|~~`lx.t~~|~~\co~~|~~\cas{t} \\
    `q,`f & \{\co_1 \mapsto u_1;\hdots; \co_k \mapsto u_k\} 
  \end{array}
  \scriptstyle
  (\text{with } k\geq 0 \text{ and } \co_i \neq \co_j 
  \text{ for } i \neq j)
\end{math}
\\

\noindent
In the terms (denoted by~$t,u$ \etc) the application takes precedence over lambda abstraction and case construct.
Notice that constructors, like any terms, can be applied to any number of arguments and thereby are \emph{variadic} (they have no fix arity).
We call \emph{data-structure} a term on the form $\co\,t_1\cdots t_k$.

A \emph{case-binding}~$`q$ is just a (partial) function from constructors to terms, whose domain is written~\dom{`q}.
By analogy with sequential notation, we may write~$`q_c$ for~$u$ when $\co\mapsto u `: `q$.
In order to ease the reading, we may write
$\cas[\co_1\mapsto u_1;\dots;\co_n\mapsto u_n]{t}$ instead of 
$\cas[\{\co_1\mapsto u_1;\dots;\co_n\mapsto u_n\}]{t}$.
The usual definition of the free variables of a term is naturally extended to the new constructions of the calculus, taking care that constructors are not variables (and therefore not subject to substitution nor $`a$-conversion).

In this calculus, a \emph{match failure} is a term~\cas{\co}\ where $\co`;\dom{`q}$.
We say that a term is \emph{defined} when none of its subterm is a match failure, and that it is \emph{hereditarily defined} when all this reducts (in any number of steps, including zero) are defined.

Reduction  rules are given in Fig.~\ref{fig:rules}.
In addition to the usual $`b$-reduction (called \rul{AppLam}) and to the two rules presented earlier, there is a rule of commutation between case construct and lambda abstraction (\rul{CaseLam}) to ensure confluence~\cite[Cor.~1]{AAA06}, and the usual $`h$-reduction (called \rul{LamApp}) as well as a rule of composition of case-bindings (\rul{CaseCase}) so that the calculus enjoys the \emph{separation property}~\cite[Theo.~2]{AAA06}.
More explanations and examples about this calculus can be found in~\cite{AAA09,PetitTez}.

\begin{figure}[hbt]
    \hrule\vspace{1pt}\hrule
  \centering
    \begin{math}
      \begin{array}
        {@{\qquad}l@{\quad}l@{\quad\quad}rcl@{\qquad}l@{\qquad}}
        &&&&\\
        \rul{ AppLam} & (\rul{AL}) &
        (`lx.t)u    &\tosc& t\subs{x}{u} & \\
        \rul{ LamApp} & (\rul{LA}) &
        `lx.tx &\tosc& t & (x\notin \fv{t}) \\
        &&&&&\\
        \rul{ CaseCons} & (\rul{CO}) &
        \cas{\co} &\tosc& t & ((\co\mapsto t)\in`q) \\
        \rul{ CaseApp} & (\rul{CA}) &
        \cas{(tu)} &\tosc& (\cas{t})u & \\
        \rul{ CaseLam} & (\rul{CL}) &
        \cas{`lx.t} &\tosc&
        `lx.\cas{t} & (x\notin \fv{`q}) \\
        \rul{ CaseCase} & (\rul{CC}) &
        \cas{ \cas[`f]{t} } &\tosc&
        \cas[`q\circ`f]{t} &\\
        \multicolumn{6}{r}{\qquad \text{with }
          `q \circ \big\{\co_1 \mapsto t_1;...; \co_n \mapsto t_n\big\}
          =\big\{ \co_1\mapsto \cas{t_1};...;\co_n \mapsto \cas{t_n}\big\}
          \qquad
        }\\
        &&&&&\\
      \end{array}
    \end{math}
    \hrule\vspace{1pt}\hrule
  \caption{Reduction rules for $\lc$.}
  \label{fig:rules}
\end{figure}

\section{The categorical model}
\label{sec:cat}
In this section we may define a notion of a categorical model for the \lc-calculus, that we prove to be sound.
No deep knowledge in category theory is assumed from the reader, he might just know the definition of a Cartesian closed category (also said a \ccc).

The notations we use are quite standard:
in a \ccc, the product of two objects~$A$ and~$B$ is written~$A`*B$ and their exponential~$B^A$.
The $k$-ary product of~$A$ is denoted by~$A^k$, and the identity morphism on~$A$ by~$\id[A]$ (or simply~$\id$ if it raises no ambiguity).
The $i^{th}$ projection morphism of a $k$-ary product is written~\proj[k]{i}, or~\proj{i} if $k=2$.
Given some morphisms $f:A→B$, $g:A→C$ and $h:A→C$, \tuple{f;g} denotes the pairing of~$f$ and~$g$, and~$f;h$ the composition of~$f$ and~$h$.
The evaluation map of~$A$ and~$B$ is $\ev:B^A`*A→B$ and the curried form of a morphism~$f$ is written~$`L(f)$.

\subsection{\lc-models}

It is well known~\cite{MitchellBook} that Cartesian closed categories have exactly the good structure to interpret the typed lambda calculus.
To cope with the problem of self application of terms, such a category must be provided with a reflexive object~\D\ in order to interpret the \textit{untyped} lambda calculus~\cite{Scott80}.
Terms are then interpreted by points of~\D.
The denotation of applications is constructed with a morphism $\app:\D→\D^\D$, and the one of lambda abstractions with a morphism $\lam:\D^\D→\D$.
Also the correction of the $`b$-reduction is ensured by the equality $\lam;\app=\id[\D^\D]$
(if moreover $\app;\lam=\id[\D]$, then the model satisfies the $`h$-equivalence).

Building a model for the \lc-calculus requires some extra morphisms and equalities for the new constructions and the new rules of the calculus.
In particular, writing $\{\co_1,\dots\co_n\}$ the set of constructors, a special point~$\fc_i$ of~\D\ is needed for each $i\leq n$ to interpret them.
The denotations of case-bindings are then points of~\pdt.
\label{pg:itp-cb}
A case binding~$`q$ is interpreted by the $n$-tuple $\tuple{d_1;\dots;d_n}$ where~$d_i$ is the denotation of~$`q_{\co_i}$ if $\co_i`:\dom{`q}$, and is a special point~\fail\ representing match failure otherwise.
In order to interpret case constructs, we need a morphism $\case:\pdt`*\D→\D$, that transforms the denotation of~$`q$ and~$t$ into the one of $\cas{t}$.

Let us informally confuse terms and their denotations, and write a case-binding~$\{\co_i\mapsto u_i/1\leq i\leq n\}$ as~$\{\vec{\co}\mapsto\vec{u}\}$ and its denotation as~$\vec{u}$.
Then the rule \rul{CaseCons} is valid if $\cas[\vec{\co}\mapsto\vec{u}]{\co_i}$ and~$u_i$ have the same denotation, \ie\ intuitively if $\case(\vec{u},\co_i)=\proj[n]{i}(\vec{u})$.
This is formally expressed by the commutation of the diagram~\drefCO\ in Fig.~\ref{fig:com-diag}.

\noindent
In the same way, the rule \rul{CaseApp} is valid if the diagram~\drefCA\ commutes, \ie~if

\hfil
\begin{tikzpicture}[baseline=(a.base)]
  \matrix[column sep=30pt]{
    \node[m] (a) {\pdt`*D`*D}; &
    \node[m] (b) {\pdt`*D^D`*D}; &
    \node[m] (c) {\pdt`*D}; &
    \node[m] (d) {D}; \\
    \node[m] {(\vec{u},t,t')}; &
    \node[m] {(\vec{u}\,,\,\fun{tx}\,,\,t')}; &
    \node[m] {(\vec{u}~,~tt')}; &
    \node[m] {\cas[\vec\co\mapsto\vec{u}]{(tt')}}; \\
  };
  \draw[->,to right] (a) to node[above,tiny m] {\_`*\app`*\_} (b);
  \draw[->,to right] (b) to node[above,tiny m] {\_`*\ev} (c);
  \draw[->,to right] (c) to node[above,tiny m] {\case} (d);
\end{tikzpicture}
\\
(where $\fun{v}$ represents the function mapping $v_0$ to $v\subs{v_0}$) is equal to

\hfil
\begin{tikzpicture}[baseline=(a.base)]
  \matrix[column sep=15pt]{
    \node[m] (a) {\pdt`*D`*D}; &
    \node[m] (b) {D`*D}; &
    \node[m] (c) {D^D`*D}; &
    \node[m] (d) {D}; \\
    \node[m] {(\vec{u},t,t')}; &
    \node[m] {(\cas[\vec\co\mapsto\vec{u}]{t}~,~t')}; &
    \node[m] {\big(\fun{(\cas[\vec\co\mapsto\vec{u}]{t})x}~,~t'\big)}; &
    \node[m] {(\cas[\vec\co\mapsto\vec{u}]{t})~t'}; \\
  };
  \draw[->,to right] (a) to node[above,tiny m] {\case`*\_} (b);
  \draw[->,to right] (b) to node[above,tiny m] {\app`*\_} (c);
  \draw[->,to right] (c) to node[above,tiny m] {\ev} (d);
\end{tikzpicture}  

To express the rule \rul{CaseLam} we need a morphism that abstracts the case construct \wrt\ a variable:

\hfil
\begin{math}
  \abstr\case=`L(f_{\case}):
  \begin{array}[t]{ccl}
    \pdt`*D^{D} & → & D^{D} \\
    (\vec{u},\fun{t}) & \mapsto &
    \fun{~\cas[\vec\co\mapsto\vec{u}]{t}}
  \end{array}
\end{math}
\\
where $f_{\case}=$
\begin{tikzpicture}[vcenter]
  \matrix[column sep=10pt]{
    \node[m] (a) {(\pdt`*D^{D})`*D}; &
    \node[m] (b) {\pdt`*(D^{D}`*D)}; &&&
    \node[m] (c) {\pdt`*D}; &&
    \node[m] (d) {D}; \\
  };
  \draw[->] (a) to node[above,tiny m] {\eq} (b);
  \draw[->] (b) to node[above,tiny m] {\id[\pdt]`*\ev} (c);
  \draw[->] (c) to node[above,tiny m] {\case} (d);
\end{tikzpicture}.
\\
Then the rule \rul{CaseLam} is valid if \drefCL\ commutes:
\\
\begin{tikzpicture}[baseline=(a.base)]
  \matrix[column sep=5pt]{
    \node[m] (a) {\pdt`*D^{D}}; &
    \node[m] (b) {D^{D}}; &
    \node[m] (c) {D}; \\
    \node[m] {(\vec{u},\fun{t})}; &
    \node[m] {\fun{\cas[\vec\co\mapsto\vec{u}]{t}}}; &
    \node[m] {`lx.\cas[\vec\co\mapsto\vec{u}]{t}}; \\
  };
  \draw[->,to right] (a) to node[above,tiny m] {\abstr\case} (b);
  \draw[->,to right] (b) to node[above,tiny m] {\lam} (c);
\end{tikzpicture}
\hspace{-15pt}=
\begin{tikzpicture}[baseline=(a.base)]
  \matrix[column sep=5pt]{
    \node[m] (a) {\pdt`*D^{D}}; &&&
    \node[m] (b) {\pdt`*D}; &
    \node[m] (c) {D}; \\
    \node[m] {(\vec{u},\fun{t})}; &&&
    \node[m] {(\vec{u},`lx.t)}; &
    \node[m] {\cas[\vec\co\mapsto\vec{u}]{`lx.t}}; \\
  };
  \draw[->,to right] (a) to node[above,tiny m] {\_`*\lam} (b);
  \draw[->,to right] (b) to node[above,tiny m] {\case} (c);
\end{tikzpicture}

\vspace{10pt}
\noindent
Also the rule \rul{CaseCase} requires a morphism to compose case-bindings:
\begin{displaymath}
  \comp:
  \begin{array}[t]{ccl}
    \pdt`*\pdt & → & \pdt \\
    (\vec{u},(t_i)_{i=1}^n) & \mapsto &
    (\cas[\vec\co\mapsto\vec{u}]{t_i})_{i=1}^n \\
  \end{array}
\end{displaymath}
It is defined as the pairing of the morphisms $(\id[\pdt]`*\proj[n]{i});\case$, for $1\leq i\leq n$.
So it is the unique morphism that makes the diagram on the following commute.
\begin{center}
  \begin{tikzpicture}
    \matrix[row sep=10pt,column sep=10pt]{
      & \node (pp) {$\pdt`*\pdt$}; & \\
      \node (a1) {$\pdt`*D$}; & \node (a) {$\cdots$}; & \node (an) {$\pdt`*D$}; \\
      \node (b1) {D}; & \node (b) {$\cdots$}; & \node (bn) {D}; \\
      & \node (p) {$\pdt$}; & \\
    };
    \draw[->] (pp) to
    node [above left] {$\scriptstyle\id[]`*\proj[n]{1}$} (a1);
    \draw[->] (pp) to
    node [above right] {$\scriptstyle\id[]`*\proj[n]{n}$} (an);
    \draw[->] (a1) to 
    node [left] {$\scriptstyle\case$} (b1);
    \draw[->] (an) to 
    node [right] {$\scriptstyle\case$} (bn);
    \draw[->] (p) to 
    node[below left] {$\scriptstyle \proj[n]{1}$} (b1);
    \draw[->] (p) to 
    node[below right] {$\scriptstyle \proj[n]{n}$} (bn);
    \draw[ad,dashed] (pp) to 
    node [left] {$\comp$} (p);
  \end{tikzpicture}
\end{center}

\noindent
Then the commutation of the diagram~\drefCC\ validates the rule \rul{CaseCase}.

This leads to the following definition.
\begin{definition}[\lc-model]
  A categorical model for the untyped~\lc-calculus is\\
  \begin{math}
    \M = (\C\,,D\,,\app\,,\lam\,,(\fc[c_i])^n_{i=1},\fail,\case)
  \end{math} where
  \begin{itemize}
  \item \C\ is a Cartesian closed category,
  \item \D\ is an object of~\C,
  \item All the $\fc_i$'s and~\fail\ are points of~$D$,
  \item\app\ is a morphism of~$D→D^D$, \lam\ is a morphism of~$D^D→D$ and \case~a morphism of~$\pdt`*D→D$,
  \item The six diagrams of Fig.~\ref{fig:com-diag} commute (the diagram~\drefCO\
    must commute for every $i`:\llbracket 1..n\rrbracket$). 
  \end{itemize}
\end{definition}

\begin{figure}[ht]
  \centering
  \begin{tabular*}{1.0\linewidth}[t]
    {|@{\,}|m{20pt}|m{205pt}|@{\,}|m{20pt}|m{152pt}|@{\,}|}
    \hline
    \multicolumn{2}{|@{\,}|c|@{\,}|}{\rul{LamApp/AppLam}}&
    \multicolumn{2}{c|@{\,}|}{\rul{CaseCons}}\\
    \hline
    $\!\!$\drefRefl &\qquad
    \begin{tikzpicture}[vcenter]
      \node (d) {$D$};
      \node (dd) [right =of d] {$D^{D}$};
      %
      \draw [->,bend left] (d) to 
      node [above] (a) {$\scriptstyle\app$} (dd);
      \draw [->,bend left] (dd) to 
      node [below] (l) {$\scriptstyle\lam$} (d);
      \draw [->] ([yshift=2.5ex]d.base) arc (45:315:1.5em and 1em);
      \node [left= of a] {$\scriptstyle\id[D]$};
      \draw [->] ([yshift=2.5ex]dd.base) arc (135:-135:1.5em and 1em);
      \node [right=of a] {$\scriptstyle\id[D^D]$};
    \end{tikzpicture}
    &
    $\!\!$\drefCO &
    \quad
    \begin{tikzpicture}[vcenter]
      \node (prod) {$\pdt$};
      \node (prod1) [right = of prod]{$\pdt`*\uno$};
      \node (d) [below = of prod]{$D$};
      \node (dprod) [below = of prod1]{$\pdt`*D$};
      \draw [<->] (prod) to
      node [label=above:{$\scriptstyle\eq$}] {} (prod1);
      \draw [->] (prod) to
      node [label=left:{$\scriptstyle\proj[n]{i}$}] {} (d);
      \draw [->] (prod1) to
      node [label=right:{$\scriptstyle\id[]`*\fc[c_i]$}] {} (dprod);
      \draw [->] (dprod) to
      node [label=below:{$\scriptstyle\case$}] {} (d);
    \end{tikzpicture}
    \\
    \hline
    \hline
    \multicolumn{2}{|@{\,}|c|@{\,}|}{\rul{CaseApp}}&
    \multicolumn{2}{c|@{\,}|}{\rul{CaseLam}}\\
    \hline
    $\!\!$\drefCA &
    \begin{tikzpicture}[vcenter]
      \node (g)  {$(\pdt`*D)`*D$};
      \node (g1) [below= of g] {$D`*D$};
      \node (g2) [below= of g1] {$D^D`*D$};
      \node (d)  [right=.5 of g] {$\pdt`*(D`*D)$};
      \node (d1) at (g1 -| d) {$\pdt`*(D^{D}`*D)$};
      \node (d2) at (g2 -| d) {$\pdt`*D$};
      \node (c)  [below= of $(g2)!0.5!(d2)$] {$D$};
      \draw [->] (g) to
      node [label=left:{$\scriptstyle\case`*\id[]$}] {} (g1);
      \draw [->] (g1) to
      node [label=left:{$\scriptstyle\app`*\id[]$}] {} (g2);
      \draw [->] (g2) to
      node [label=below left:{$\scriptstyle\ev$}] {} (c);
      \draw [<->] (g) to
      node [label=above:{$\scriptstyle\eq$}] {} (d);
      \draw [->] (d) to
      node [label=left:{$\scriptstyle\id[]`*(\app`*\id[])$}] {} (d1);
      \draw [->] (d1) to
      node [label=left:{$\scriptstyle\id[]`*\ev$}] {} (d2);
      \draw [->] (d2) to
      node [label=below right:{$\scriptstyle\case$}] {} (c);
    \end{tikzpicture}&
    $\!\!$\drefCL &
    \quad
    \begin{tikzpicture}[vcenter]
      \node (pdd) {$\pdt`*D^{D}$};
      \node (dd)  [right= of pdd] {$D^{D}$};
      \node (pd)  [below= of pdd] {$\pdt`*D$};
      \node (d) at (pd -| dd)  {$D$};
      \draw [->] (pdd) to 
      node[above] {$\scriptstyle\abstr{\case}$} (dd);
      \draw [->] (pdd) to 
      node[left] {$\scriptstyle\id[]`*\lam$} (pd);
      \draw [->] (dd) to 
      node[right] {$\scriptstyle \lam$} (d);
      \draw [->] (pd) to 
      node[below] {$\scriptstyle \case$} (d);
    \end{tikzpicture}
    \\
    \hline
    \hline
    \multicolumn{4}{|@{\,}|c|@{\,}|}{\rul{CaseCase}}\\
    \hline
    $\!\!$\drefCC &
    \begin{tikzpicture}[vcenter]
      \node (g)  {$(\pdt`*\pdt)`*D$};
      \node (g1) [below= of g] {$\pdt`*D$};
      \node (d)  [right= 0.5 of g] {$\pdt`*(\pdt`*D)$};
      \node (d1) at (g1 -| d) {$\pdt`*D$};
      \node (c)  [below= of $(g1)!0.5!(d1)$] {$D$};
      \draw [->] (g) to
      node [label=left:{$\scriptstyle\comp`*\id[]$}] {} (g1);
      \draw [->] (g1) to
      node [label=below left:{$\scriptstyle\case$}] {} (c);
      \draw [<->] (g) to
      node [label=above:{$\scriptstyle\eq$}] {} (d);
      \draw [->] (d) to
      node [label=left:{$\scriptstyle\id[]`*\case$}] {} (d1);
      \draw [->] (d1) to
      node [label=below right:{$\scriptstyle\case$}] {} (c);
    \end{tikzpicture}
    &
    $\!\!$\drefFail&
    \begin{tikzpicture}[vcenter]
      \matrix[column sep=3em,row sep=2em]
      { \node (a) {$\pdt`*\uno$}; &
        \node (b) {$\pdt`*D$}; \\
        \node (c) {$\uno$}; &
        \node (d) {$D$}; \\        
      };
      \draw[->] (a) to node [above] {$\scriptstyle\id[\pdt]`*\fail$}(b);
      \draw[->] (c) to node [below] {$\scriptstyle\fail$} (d);
      \draw[->] (a) to node [left] {$\scriptstyle\proj{2}$} (c);
      \draw[->] (b) to node [right] {$\scriptstyle\case$} (d);
    \end{tikzpicture}
    \\
    \hline
  \end{tabular*}
  \caption{Commuting diagrams in a \lc-model}
  \label{fig:com-diag}
\end{figure}

\paragraph{Equivalent definition.}

In fact we can simplify the definition of a \lc-model, since the isomorphism~$\D\eq\D^\D$ entails the equivalence of the diagrams~\drefCA\ and~\drefCL.
This can be understood from a syntactical point of view, given that
the commutation of the diagram~\drefCA\ validates the rule \rul{CaseApp} and the one of~\drefCL\ validates \rul{CaseLam}.
Indeed, the only role of \rul{CaseLam} in the calculus is to close a critical pair created by the rule \rul{CaseApp}~\cite[Theo.~1,~\textit{(CC3)}]{AAA06}.

\begin{proposition}
  \label{prop:equiv-cacl}
  If~\lam\ and~\app\ form an isomorphism between~$D$ and~$D^{D}$, then the diagram~\drefCA\ commutes if and only if the diagram~\drefCL\ commutes.
\end{proposition}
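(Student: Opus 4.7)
The plan is to exploit the hypothesis that $\lam$ and $\app$ form an isomorphism: post-composition with $\app$ is then a bijection between morphisms $X\to D$ and morphisms $X\to D^{D}$, and currying furthermore sets up a bijection between morphisms $X\to D^{D}$ and morphisms $X`*D\to D$. Since both \drefCA{} and \drefCL{} are equalities between a pair of parallel morphisms, I expect \drefCL{} to be exactly the image of \drefCA{} under this double transport; the equivalence then follows from the bijection being faithful in both directions.

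First I would name the two legs of each diagram. Let $\alpha=\lam\circ\abstr{\case}$ and $\beta=\case\circ(\id`*\lam)$ be the two sides of \drefCL{}, and let $f,g:\pdt`*D`*D\to D$ denote the two sides of \drefCA{}. The claim is that $\alpha=\beta$ iff $f=g$. To connect them, I would uncurry $\alpha$ and $\beta$ along the exponential adjunction into morphisms $(\pdt`*D^{D})`*D\to D$. Using $\app\circ\lam=\id$ together with the defining equation $\ev\circ(`L(h)`*\id)=h$ of currying, the uncurry of $\alpha$ simplifies to $f_{\case}=\case\circ(\id`*\ev)$ (modulo the associativity iso that is baked into the definition of $f_\case$), whereas the uncurry of $\beta$ becomes $\ev\circ\big((\app\circ\case\circ(\id`*\lam))`*\id\big)$.

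Next I would pre-compose each of $f$ and $g$ with the isomorphism $\id[\pdt]`*\lam`*\id[D]:\pdt`*D^{D}`*D\to\pdt`*D`*D$. Since pre-composition with an iso reflects equality, \drefCA{} commutes iff the two resulting morphisms are equal. Expanding both, and using only $\app\circ\lam=\id$ and the naturality of the associativity iso, I expect to recognise exactly the two uncurried forms just computed above. At that point both directions of the equivalence drop out simultaneously, without having to treat $\Rightarrow$ and $\Leftarrow$ separately.

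The only anticipated obstacle is bookkeeping: one must carefully thread the associativity isomorphism $(A`*B)`*C\eq A`*(B`*C)$ through the definition $f_{\case}=(\text{assoc})\,;(\id[\pdt]`*\ev)\,;\case$ and through the top row of \drefCA{}. No conceptual difficulty arises, since the content of the proposition is simply that \drefCL{} \emph{is} the curried form of \drefCA{} along the reflexive isomorphism $D\eq D^{D}$.
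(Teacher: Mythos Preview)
Your proposal is correct and follows essentially the same route as the paper: both arguments amount to recognising that \drefCL\ is precisely the curried form of \drefCA\ along the isomorphism $D\eq D^{D}$, and then using that currying and pre-composition by an iso reflect equalities. The only cosmetic difference is that the paper first post-composes \drefCL\ with $\app$ to obtain the single equation $(\id`*\lam);\case;\app=\abstr{\case}$ and then invokes uniqueness of the exponential, whereas you treat the two legs of each diagram in parallel; the diagram chase is otherwise identical.
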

\begin{proof}~
  \\
  \parbox[t]{0.6\linewidth}{
    Since~\drefRefl\ commutes, \drefCL\ commutes \ssi\ the diagram on the right commutes.
    \\
    Write $f=\id[\pdt]`*\lam;\case;\app$.
    \\
    Since~$\abstr{\case}=`L(\eq;\id[\pdt]`*\ev;\case)$, and by     uniqueness of the exponential, $f=\abstr{\case}$ if and only if the following diagram commutes:
  }
  \hfill
  \parbox[t]{0.3\linewidth}{
    \begin{tikzpicture}[baseline=(pdd.north)]
      \node (pdd) {$\pdt`*D^{D}$};
      \node (dd)  [right= of pdd] {$D^{D}$};
      \node (pd)  [below= of pdd] {$\pdt`*D$};
      \node (d) at (pd -| dd)  {$D$};
      \draw [->] (pdd) to 
      node[above] {$\scriptstyle\abstr{\case}$} (dd);
      \draw [->] (pdd) to 
      node[left] {$\scriptstyle\id[]`*\lam$} (pd);
      \draw [->] (d) to 
      node[right] {$\scriptstyle \app$} (dd);
      \draw [->] (pd) to 
      node[below] {$\scriptstyle \case$} (d);
    \end{tikzpicture}
  }
  
  \hfil
  \begin{tikzpicture}[node distance=10pt]
    \node (a) {$(\pdt`*D^D)`*D$};
    \node [right=2.5 of a] (b) {D};
    \node [below=of a] (c) {$D^D`*D$};
    \draw [->] (a) to 
    node[above] {$\scriptstyle \eq~;~\id[\pdt]`*\ev~;~\case$} (b);
    \draw [->] (a) to node[left] {$\scriptstyle f`*\id[D]$} (c);
    \draw [->] (c) to node[below right] {$\scriptstyle \ev$} (b);
  \end{tikzpicture}
  
  \noindent
  We can detail this diagram as follows:
  \begin{center}
    \begin{tikzpicture}[node distance=15pt]
      \node (a1) {$(\pdt`*D^D)`*D$};
      \node [right=of a1] (a2) {$\pdt`*(D^D`*D)$};
      \node [right=of a2] (a3) {$\pdt`*D$};
      \node [below=of a1] (b1) {$(\pdt`*D)`*D$};
      \node (b2) at (b1 -| a2) {$\pdt`*(D`*D)$};
      \node [below=of b1] (c1) {$D`*D$};
      \node (c2) at (c1 -| a2) {$D^D`*D$};
      \node (c3) at (c1 -| a3) {$D$};
      \draw [ad] (a1) to 
      node [left] {$\scriptstyle (\id[]`*\lam)`*\id$} (b1);
      \draw [ad] (b1) to 
      node [left,label=right:{$\scriptstyle         (\id[]`*\app)`*\id$}]{$\eq\,$}(a1);
      \draw [->] (a1) to node[above,to right] {$\scriptstyle \eq$} (a2);
      \draw [->] (a2) to node[above] {$\scriptstyle \id[\pdt]`*\ev$} (a3);
      \draw [->] (a3) to node[right] {$\scriptstyle \case$} (c3);
      \draw [->] (b1) to node[left] {$\scriptstyle \case`*\id[D]$}       (c1);
      \draw [->] (c1) to node[below,to right] {$\scriptstyle \app`*\id[D]$} (c2);
      \draw [->] (c2) to node[below,to right] {$\scriptstyle \ev$} (c3);
      \draw [->] (b1) to node[below,to right] {$\scriptstyle \eq$} (b2);
      \draw [->] (b2) to 
      node[right] {$\scriptstyle \id[\pdt]`*(\app`*\id[D])$} (a2);
      \node at ($(b2)!0.5!(c3)$) {\drefCA};
      \node at ($(b1)!0.4!(a3)$) {\LARGE$\circlearrowright$};
    \end{tikzpicture}
  \end{center}
  Since the sub-diagram in the upper-left corner commutes, then \drefCL\ commutes if and only if \drefCA\ commutes.
  \qed
\end{proof}
Thus we can omit the commutation of~\drefCA\ or the one of~\drefCL\ in the definition of a \lc-model.

\subsection{Soundness}
\label{sec:sound}

In the previous section we gave some intuitions on how to interpret \lc-terms in a \lc-model.
Formally, the denotation~\itp{t} of a term~$t$ in such a category is defined by structural induction (in Fig.~\ref{fig:itp-cat}).
It depends on a list of variables~$`G=x_1,\cdots,x_k$ that must contain all the free variables of~$t$, and its a morphism of $\D^k→\D$.
Similarly, the denotation~\itp{`q} of a case-binding~$`q$ with free variables in~$`G$ is a morphism of $\D^k→\pdt$.
We show that this definition provides a correct model of the \lc-calculus (we write~\eqlc\ for the reflexive symmetric transitive closure of its six rules).

\begin{figure}[htb]
  \centering
  \hrule\vspace{1pt}\hrule\vspace{10pt}
  \begin{tabular}[t]{c@{\ =\ }l}
    \itp{x_i}&$\proj[k]{i}:D^k→D$\\
    \itp{tu}&
    \begin{tikzpicture}
      [baseline={([yshift=-1ex]current bounding box)}]
      \matrix[anchor=south,column sep=4em]
      {
        \node (a) {$D^k$}; &
        \node (b) {$D`*D$}; &
        \node (c) {$D^D`*D$}; &
        \node (d) {$D$};\\
      };
      \draw [->,to right] (a) to 
      node[above] {$\scriptstyle\tuple{\itp{t};\itp{u}}$} (b);
      \draw [->,to right] (b) to 
      node[above] {$\scriptstyle \app`*\id[D]$} (c);
      \draw [->,to right] (c) to 
      node[above] {$\scriptstyle \ev$} (d);
    \end{tikzpicture}\\
    \itp{`lx_{k+1}.t}&
    \begin{tikzpicture}
      [baseline={([yshift=-1ex]current bounding box)}]
      \matrix[column sep=4em]
      {
        \node (a) {$D^k$}; &
        \node (b) {$D^D$}; &
        \node (c) {$D$}; \\
      };
      \draw [->,to right] (a) to 
      node[above] {$\scriptstyle `L(f_t)$} (b);
      \draw [->,to right] (b) to 
      node[above,to right] {$\scriptstyle \lam$} (c);
    \end{tikzpicture} \\
    where $f_t$ &
    \begin{tikzpicture}
      [baseline={([yshift=-1ex]current bounding box)}]
      \matrix[column sep=3em]
      {
        \node (a) {$D^k`*D$}; &
        \node (b) {$D^{k+1}$}; &
        \node (c) {$D$};\\
      };
      \draw [->,to right] (a) to 
      node[above] {\eq} (b);
      \draw [->,to right] (b) to 
      node[above] {$\scriptstyle \itp[`G,x_{k+1}]{t}$} (c);
    \end{tikzpicture}\\
    \itp{\co[c]} &
    \begin{tikzpicture}
      [baseline={([yshift=-1ex]current bounding box)}]
      \matrix[column sep=3em]
      {
        \node (a) {$D^k$}; &
        \node (b) {$\uno$}; &
        \node (c) {$D$};\\
      };
      \draw [->,to right] (a) to 
      node[above] {$\scriptstyle \term[D^k]$} (b);
      \draw [->,to right] (b) to 
      node[above] {$\scriptstyle \fc$} (c);
    \end{tikzpicture}\\
    \itp{\cas{t}} &
    \begin{tikzpicture}
      [baseline={([yshift=-1ex]current bounding box)}]
      \matrix[column sep=4em]
      {
        \node (a) {$D^k$}; &
        \node (b) {$\pdt`*D$}; &
        \node (c) {$D$};\\
      };
      \draw [->,to right] (a) to 
      node[above] {$\scriptstyle \tuple{\itp{`q};\itp{t}}$} (b);
      \draw [->,to right] (b) to 
      node[above] {$\scriptstyle \case$} (c);
    \end{tikzpicture}\\
    \itp{`q} &
    $\tuple{f_1;\cdots;f_n}:D^k→\pdt$
    ,\quad
    where $f_i$ = 
    \begin{math}
      \left\{
        \begin{array}[c]{ll}
          \itp{u_i}&\text{if }\co[c_i]\mapsto u_i `:`q\\
          \term[D^k];\fail &\text{if }\co[c_i]`;\dom{`q}
        \end{array}
      \right.
    \end{math}\\
  \end{tabular}
  \vspace{10pt}\hrule\vspace{1pt}\hrule
  \caption{Interpretation of \lc-terms in a categorical model}
  \label{fig:itp-cat}
\end{figure}

\begin{theorem}[Soundness]
  \label{theo:sound}
  If $\M=(\C,D,\lam,\app,(\fc[\co_i])^n_{i=1},\case,\fail)$ is
a \lc-model, then for any \lc-term~$t,t'$ whose free variables are in~$`G$,
  \begin{displaymath}
    t\eqlc t' \quad\implies\quad
    \itp{t}=\itp{t'}
  \end{displaymath}
\end{theorem}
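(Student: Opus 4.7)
The plan is to proceed by induction on the derivation of $t \eqlc t'$, reducing everything to the six base rules plus contextual closure, and then showing that each rule is validated by the corresponding commuting diagram in Fig.~\ref{fig:com-diag}.

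Before tackling the rules themselves, I would establish two auxiliary lemmas about the interpretation. The first is a \emph{weakening/exchange lemma}: if $\Gamma \subseteq \Gamma'$ (as ordered lists of variables), then $\itp[\Gamma']{t}$ factors through $\itp[\Gamma]{t}$ via the obvious projection $D^{|\Gamma'|} \to D^{|\Gamma|}$. In particular, the side condition $x \notin \fv{t}$ in \rul{LamApp} and $x \notin \fv{\theta}$ in \rul{CaseLam} can be handled uniformly. The second, and the real technical heart of the proof, is a \emph{substitution lemma}: if $\Gamma = x_1,\dots,x_k$ and $u$ has free variables in $\Gamma$, then
\begin{displaymath}
\itp[\Gamma]{t\subs[x_{k+1}]{u}} \;=\; \tuple{\id[D^k]; \itp[\Gamma]{u}}; \itp[\Gamma,x_{k+1}]{t}.
\end{displaymath}
This is proved by structural induction on $t$, using the universal property of the exponential for the $\lambda$-case and the definition of \case\ together with $\comp$ for the case construct. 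I expect this lemma to be the main technical obstacle: the pairing/tupling manipulations and the use of the isomorphism $D^k \times D \cong D^{k+1}$ hidden in $f_t$ need to be tracked carefully, especially inside case-bindings.

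Next I would prove a \emph{contextual lemma}: if $\itp{t} = \itp{t'}$ (in any suitable context), then $\itp{C[t]} = \itp{C[t']}$ for any term or case-binding context $C$. This is immediate from the inductive definition of $\itp{\cdot}$ in Fig.~\ref{fig:itp-cat}, since each clause builds the denotation functorially from the denotations of the subterms and case-bindings.

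Finally, for the six basic rules I verify equality of denotations one by one. For \rul{AppLam} I combine the substitution lemma with diagram \drefRefl\ (which gives $\lam;\app = \id$) and the $\beta$-rule of the \ccc. For \rul{LamApp} I use $\app;\lam = \id$ together with the weakening lemma for the side condition. For \rul{CaseCons} I unfold $\itp{\cas{\co_i}}$, observe that $\itp{\co_i}$ factors through the $i$-th injection $\uno \xrightarrow{\fc_i} D$, and invoke \drefCO\ to replace $\case$ by $\proj[n]{i}$. The rules \rul{CaseApp} and \rul{CaseLam} are handled directly by \drefCA\ and \drefCL\ (and by Proposition~\ref{prop:equiv-cacl} I need only verify one of them). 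For \rul{CaseCase} I use the universal property defining $\comp$ to identify $\itp{\theta \circ \varphi}$ with $\tuple{\itp{\theta}; \itp{\varphi}}; \comp$, then apply \drefCC. Once every rule is validated, the extension from one-step reduction to $\eqlc$ is a routine induction on reflexivity, symmetry, transitivity, and contextual closure, which gives the claimed implication.
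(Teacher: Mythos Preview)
Your overall architecture matches the paper's almost exactly: establish weakening/exchange (Lem.~\ref{lem:ctx-chg}) and substitution (Lem.~\ref{lem:subst}), then prove preservation under one-step reduction by a case analysis on the six rules (Prop.~\ref{prop:sound}), and finally close under the equivalence.

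There is, however, one genuine gap in your treatment of \rul{CaseCase}. You write that you would ``use the universal property defining $\comp$ to identify $\itp{\theta \circ \varphi}$ with $\tuple{\itp{\theta}; \itp{\varphi}}; \comp$, then apply \drefCC''. The universal property of $\comp$ only tells you that $\tuple{\itp{\theta},\itp{\varphi}};\comp;\proj[n]{i} = \tuple{\itp{\theta},(\itp{\varphi};\proj[n]{i})};\case$ for each $i$. When $\co_i \in \dom{\varphi}$ this is indeed the $i$-th component of $\itp{\theta\circ\varphi}$. But when $\co_i \notin \dom{\varphi}$, the $i$-th component of $\itp{\varphi}$ is $\term;\fail$, so you get $\tuple{\itp{\theta},\term;\fail};\case$, whereas the $i$-th component of $\itp{\theta\circ\varphi}$ is again $\term;\fail$ (since $\co_i \notin \dom{\theta\circ\varphi}$ either). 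Equating these two is precisely the content of diagram~\drefFail, which you never invoke. In the paper this step is isolated as Lemma~\ref{lem:cat-cc}, and the authors remark explicitly that \drefFail\ is needed here even though its semantic meaning is less transparent than that of the other diagrams. Without \drefFail\ the identification $\itp{\theta\circ\varphi} = \tuple{\itp{\theta},\itp{\varphi}};\comp$ fails for partial case-bindings, and your \rul{CaseCase} case does not go through.

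A minor remark: the substitution lemma does not require $\comp$; it is a straightforward structural induction in which the case construct behaves like any other $n$-ary operator on denotations.
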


\noindent
To prove this theorem, we fix a \lc-model $\M=(\C,D,\lam,\app,(\fc[\co_i])^n_{i=1},\case,\fail)$ and use some preliminary lemmas.
The first one expresses that the morphism~\comp\ actually corresponds to case-composition.
This is where we technically need the diagram~\drefFail, even though its semantic meaning is not as intuitive as for the other one.

\begin{lemma}[Categorical case-composition]
  \label{lem:cat-cc}
  If the diagram~\drefFail\ commutes, then for any case-bindings~$`q$     and~$`f$, whose free variables are in~$`G=\{x_1,\dots,x_k\}$, the   following diagram commute:
  \begin{center}
    \begin{tikzpicture}[vcenter]
      \matrix[column sep=4em,row sep=2em]
      { \node (a) {$D^k$}; &
        \node (b) {$\pdt`*\pdt$}; \\&
        \node (c) {$\pdt$}; \\
      };
      \draw[->](a) to node[above] {$\scriptstyle
        \tuple{\itp{`q},\itp{`f}}$} (b);
      \draw[->](b) to node[right] {$\scriptstyle\comp$} (c);
      \draw[->](a) to node[below left]
      {$\scriptstyle\itp{`q`o`f}$} (c);
    \end{tikzpicture}
  \end{center}
\end{lemma}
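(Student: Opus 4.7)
The plan is to establish the equality of the two morphisms $D^k \to \pdt$ by verifying they agree after each projection $\proj[n]{i}$. Recall that $\comp$ is defined (via the universal property of the product) as the unique morphism satisfying $\comp;\proj[n]{i} = (\id[\pdt]\times\proj[n]{i});\case$ for each $i$. So it suffices to show that, for every $i\leq n$,
\[
\tuple{\itp{`q};\itp{`f}};\bigl(\id[\pdt]\times\proj[n]{i}\bigr);\case \;=\; \itp{`q`o`f};\proj[n]{i}.
\]
By naturality of the pairing, the left-hand side rewrites as $\tuple{\itp{`q};\,f_i};\case$, where $f_i = \itp{`f};\proj[n]{i}$ is the $i$-th component of $\itp{`f}$ according to Fig.~\ref{fig:itp-cat}.

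The argument then splits into two cases, following the definition of $\itp{`f}$:

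\textbf{Case 1: $\co_i \in \dom{`f}$.} Then $f_i = \itp{u_i}$ with $`f_{\co_i} = u_i$, and by construction $(`q`o`f)_{\co_i} = \cas{u_i}$. By the definition of the interpretation of a case-construct, $\itp{\cas{u_i}} = \tuple{\itp{`q};\itp{u_i}};\case$, which is exactly $\tuple{\itp{`q};f_i};\case$. So both sides agree.

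\textbf{Case 2: $\co_i \notin \dom{`f}$.} Then $f_i = \term[D^k];\fail$, and correspondingly $\co_i \notin \dom{`q`o`f}$, so $\itp{`q`o`f};\proj[n]{i} = \term[D^k];\fail$. I must therefore show that $\tuple{\itp{`q};\,\term[D^k];\fail};\case = \term[D^k];\fail$. The trick is to factor the pairing through $\pdt\times\uno$: the composite equals
\[
\tuple{\itp{`q};\,\term[D^k]};\bigl(\id[\pdt]\times\fail\bigr);\case,
\]
and here I invoke the diagram~\drefFail, which rewrites $(\id[\pdt]\times\fail);\case$ as $\proj{2};\fail$. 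The projection $\proj{2}$ kills the $\itp{`q}$ component of the pairing, leaving $\term[D^k];\fail$, as required.

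The main obstacle is precisely this failure case: the other case is a direct computation from the definitions, but handling the undefined components of $`f$ requires reshaping the morphism so that diagram~\drefFail\ can be applied, which justifies why this diagram is needed technically even though its intuitive meaning is less transparent than the others.
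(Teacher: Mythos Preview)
Your proof is correct and follows essentially the same approach as the paper: both arguments compare the two morphisms componentwise (you via postcomposing with each~$\proj[n]{i}$, the paper by writing both sides explicitly as $n$-tuples), split on whether $\co_i\in\dom{`f}$, handle the defined case by unfolding the interpretation of $\cas{u_i}$, and handle the undefined case by factoring through $\pdt\times\uno$ and invoking~\drefFail.
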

\begin{proof}
  If $`f=\{\co[c_i]\mapsto u_i/ i`:J\}$ (with~$J``(=\llbracket   1..n\rrbracket$), then

  \hfill
  \begin{math}
    \itp{`q`o`f}=\tuple{f_1,\dots,f_n}~,\quad
    \text{with } f_i=
    \left\{
      \begin{array}[c]{ll}
        \itp{\cas{u_i}}&\text{if }i`:J\\
        \term[D^k];\fail&\text{if }i`;J
      \end{array}
    \right.
  \end{math}\\
  On the other hand,   $\comp=\tuple{\big((\id[\pdt]`*\proj[n]{1});\case\big),\dots,
    \big((\id[\pdt]`*\proj[n]{1});\case\big)}$.
  So 
  \begin{displaymath}
    \tuple{\itp{`q},\itp{`f}}~;~\comp=\tuple{g_1,\dots,g_n}, 
    \qquad\text{with}\quad
    g_i= \tuple{\itp{`q},(\itp{`f}\,;\,\proj[n]{i})}~;~\case~.
  \end{displaymath}

  If $i`:J$,\quad $\itp{`f}\,;\,\proj[n]{i}=\itp{u_i}$ and then
  $g_i=\tuple{\itp{`q},\itp{u_i}}\,;\,\case$\quad which is~$f_i$.

  If $i`;J$, then\quad $\itp{`f}\,;\,\proj[n]{i}~=~\term[D^k]`*\fail$.
  Hence
  \begin{center}
    \begin{tikzpicture}
      \matrix[column sep=3em,row sep=.5em]{
        \node (a1) {$g_i=D^k$}; &
        \node (a2) {$\pdt`*\uno$}; &
        \node (a3) {$\pdt`*D$}; &
        \node (a4) {$D$}; & 
        \\  
        \node (b1) {$\quad=D^k$}; &
        \node (b2) {$\pdt`*\uno$}; &
        \node (b3) {$\uno$}; &
        \node (b4) {$D$}; & 
        \node (b5) {(by \drefFail)}; \\  
        \node (c1) {$\quad=D^k$}; &&
        \node (c3) {$\uno$}; &
        \node (c4) {$D$}; & 
        \\  
      };
      \draw[->,to right](a1) to node[above] {$\scriptstyle
        \tuple{\itp{`q},\term{D^k}}$} (a2);
      \draw[->,to right](a2) to node[above] {$\scriptstyle\id[\pdt]`*\fail$} (a3);
      \draw[->,to right](a3) to node[above] {$\scriptstyle\case$} (a4);
      \draw[->,to right](b1) to node[above] {$\scriptstyle
        \tuple{\itp{`q},\term{D^k}}$} (b2);
      \draw[->,to right](b2) to node[above] {$\scriptstyle\proj{2}$} (b3);
      \draw[->,to right](b3) to node[above] {$\scriptstyle\fail$} (b4);
      \draw[->,to right](c1) to node[above] {$\scriptstyle\term{D^k}$} (c3);
      \draw[->,to right](c3) to node[above] {$\scriptstyle\fail$} (c4);
    \end{tikzpicture}
  \end{center}
  So $g_i=f_i$ for any $i\leq n$,\quad and \quad $\tuple{\itp{`q},\itp{`f}}~;~\comp~~=~~\itp{`q`o`f}$.
  \qed
\end{proof}

We also need the standard following lemmas.

\begin{lemma}[Contextual rules]
  \label{lem:ctx-chg}
  \textsl{Exchange: }
  Let\/ $`G=\{x_1,\dots,x_k\}$ and~$`s$ be a substitution 
  over~$\llbracket   1..k\rrbracket$.
  Write $`s(`G)=\{`s(1),\dots,`s(k)\}$.
  Then, for any term~$t$ whose free variables are in~$`G$,
  \begin{displaymath}
    \itp{t}\ =\ \tuple{\proj[k]{`s(1)},\dots,\proj[k]{`s(k)}}\;;\;
    \itp[`s(`G)]{t}~.
  \end{displaymath}
  \textsl{Weakening: }
  Let $`G=\{x_1,\dots,x_k\}$ containing all free variables of a     term~$t$, and~$y`;`G$.
  Then
  \begin{displaymath}
    \itp[`G,y]{t} =\ 
    \tuple{\proj[k+1]{1},\dots,\proj[k+1]{k}}\;;\;\itp{t}~.
  \end{displaymath}
\end{lemma}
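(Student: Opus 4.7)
I would prove both statements by a simultaneous structural induction on the term~$t$ and the case-binding~$`q$, following the clauses of Fig.~\ref{fig:itp-cat}. The idea in every case is to propagate the reindexing morphism --- $g=\tuple{\proj[k]{`s(1)};\dots;\proj[k]{`s(k)}}$ for Exchange, and $g=\tuple{\proj[k+1]{1};\dots;\proj[k+1]{k}}$ for Weakening --- through the categorical construction that defines $\itp{t}$, and to recognise the result as the same construction computed in the new context. Since Weakening is the special case of Exchange where $`s$ is the identity on $\llbracket 1..k\rrbracket$ and the context is freely extended by~$y$, I would focus on Exchange; the argument for Weakening is identical, \emph{mutatis mutandis}.

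The base cases and the non-binding inductive cases go through mechanically. On a variable $x_i$, both sides collapse to $\proj[k]{i}$ by the universal property of the product, observing that $x_i$ sits at position $`s^{-1}(i)$ in $`s(`G)$. On a constructor $\co$, both sides equal $\term[D^k];\fc$ by uniqueness of morphisms into~$\uno$. For an application $tu$, the IH combined with the naturality of pairing, $g;\tuple{f_1;f_2}=\tuple{g;f_1;g;f_2}$, immediately yields the equation. The case construct $\cas{t}$ is treated identically and triggers a sub-induction on $`q$, which applies the IH to each branch~$u_i$ and again invokes naturality of pairing. None of the \lc-specific diagrams (\drefCO, \drefCA, \drefCL, \drefCC, \drefFail) intervene at this stage.

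The only step that deserves some care is the lambda abstraction $`lx_{k+1}.t$, whose body is interpreted in the extended context $`G,x_{k+1}$. One applies the IH to~$t$ with the permutation extended by $k+1\mapsto k+1$, which, after composing with the isomorphism $D^k`*D\eq D^{k+1}$, gives $f_t=(g`*\id[D]);f'_t$, where $f'_t$ is the corresponding morphism built from the context $`s(`G),x_{k+1}$. Pulling $g$ past the curry operator via the naturality equation $g;`L(f)=`L((g`*\id[D]);f)$ --- a direct consequence of the uniqueness of the exponential transpose --- and then composing with $\lam$ produces $\itp{`lx_{k+1}.t}=g;\itp[`s(`G)]{`lx_{k+1}.t}$, as required. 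This is the only point where any genuinely CCC-level reasoning is used; everything else is bookkeeping on indices, and I expect this naturality step to be the main obstacle only in that one must carefully track the extension of $`s$ across the binder.
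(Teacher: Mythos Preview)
The paper does not actually prove this lemma: it is introduced only as one of the ``standard following lemmas'' (together with the Substitution Lemma~\ref{lem:subst}) and is used without further justification in the proof of Proposition~\ref{prop:sound}. Your structural induction is the expected standard argument, and in particular your treatment of the abstraction case via the naturality of currying, $g;`L(f)=`L((g`*\id[D]);f)$, is precisely the ingredient one needs there.

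One minor remark: Weakening is not literally a special case of Exchange as the lemma is stated, since Exchange preserves the length of the context while Weakening changes it from $k+1$ to $k$; nevertheless, as you correctly indicate, the induction is structurally identical, simply with the reindexing morphism $\tuple{\proj[k+1]{1};\dots;\proj[k+1]{k}}:D^{k+1}\to D^k$ in place of a permutation $D^k\to D^k$.
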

\begin{lemma}[Substitution]
  \label{lem:subst}
  Given~$`G=\{x_1,\dots,x_k\}$, and two terms $t$ and~$u$ such that   $\fv{u}``(=`G$ and $\fv{t}``(=`G`U\{y\}$,
  \begin{center}
    \itp{t\subs[y]{u}} = 
    \begin{tikzpicture}[baseline=(a.base)]
      \node (a) {$D^k$};
      \node[right=of a] (b) {$D^k`*D$};
      \node[right= of b] (c) {$D^{k+1}$};
      \node[right= of c] (d) {$D$};
      \draw[->] (a) to 
      node[above] {$\scriptstyle\tuple{\id[],\itp{u}}$} (b);
      \draw[->] (b) to 
      node[above] {$\scriptstyle\eq$} (c);
      \draw[->] (c) to 
      node[above] {$\scriptstyle\itp[`G,y]{t}$} (d);    
    \end{tikzpicture}
  \end{center}
\end{lemma}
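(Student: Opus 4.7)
The plan is to proceed by structural induction on~$t$, simultaneously establishing the analogous statement for case-bindings, namely $\itp{`q\subs[y]{u}}=\tuple{\id[],\itp{u}};\eq;\itp[`G,y]{`q}$. Both parts of Lemma~\ref{lem:ctx-chg} (Exchange and Weakening) will be needed, so I assume them throughout.

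For the base cases: if $t$ is a variable $x_i`:`G$, both sides collapse to $\proj[k]{i}$, because the $i$-th projection of $\tuple{\id[],\itp{u}}$ is $\proj[k]{i}$ and $\itp[`G,y]{x_i}=\proj[k+1]{i}$; if $t=y$, the $(k{+}1)$-st projection of $\tuple{\id[],\itp{u}}$ picks out $\itp{u}$, which matches $y\subs[y]{u}=u$. Constructor terms factor uniquely through~$\term[D^k]$, so the equation is forced by terminality of~$\uno$. For an application $t=t_1t_2$ one unfolds the definition in Fig.~\ref{fig:itp-cat}, applies the induction hypothesis to $t_1$ and $t_2$ separately, and uses the universal property of the product to factor the common pre-composition $\tuple{\id[],\itp{u}};\eq$ through the pairing $\tuple{\itp{t_1};\itp{t_2}}$. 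The case construct $\cas{t'}$ is treated identically, invoking the induction hypothesis both on~$t'$ and, through the parallel statement, on~$`q$.

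The real obstacle is the abstraction case $t=`lz.t'$. After $`a$-renaming one may assume $z`;`G$ and $z\neq y$, so that $(`lz.t')\subs[y]{u}=`lz.(t'\subs[y]{u})$. Both $\itp{(`lz.t')\subs[y]{u}}$ and $\tuple{\id[],\itp{u}};\eq;\itp[`G,y]{`lz.t'}$ are of the form $`L(g);\lam$: for the left-hand side this is immediate from Fig.~\ref{fig:itp-cat}, and for the right-hand side it follows from the naturality of currying (which pushes the pre-composition inside the~$`L$). Since $\lam$ is left-invertible by diagram~\drefRefl\ and~$`L$ is injective by the exponential adjunction, it suffices to compare the two underlying maps $D^k`*D→D$. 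Expanding them, the induction hypothesis on~$t'$ in context~$`G,z$ expresses $\itp[`G,z]{t'\subs[y]{u}}$ in terms of $\itp[`G,z]{u}$ and $\itp[`G,z,y]{t'}$, whereas the right-hand side involves~$\itp{u}$ and $\itp[`G,y,z]{t'}$. Reconciling these two presentations is precisely where Weakening (rewriting $\itp[`G,z]{u}$ as $\tuple{\proj[k+1]{1},\dots,\proj[k+1]{k}};\itp{u}$, since $z`;\fv{u}$) and Exchange (transposing the last two arguments of~$t'$) both come into play. Once these rewrites are performed the two uncurried morphisms coincide by the naturality of the product projections. Outside of this abstraction case the argument is a direct CCC calculation; the context-bookkeeping here is the only genuine subtlety.
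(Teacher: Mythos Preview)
The paper does not actually prove this lemma: it is stated together with Lemma~\ref{lem:ctx-chg} under the heading ``We also need the standard following lemmas'' and then used without further comment. Your structural induction is exactly the standard argument one expects here, and it is correct; in particular you have located the only non-trivial point, namely the abstraction case, and handled it properly by invoking Weakening (legitimate because $z\notin`G\supseteq\fv{u}$) and Exchange to reconcile the contexts $`G,z,y$ and $`G,y,z$. One small simplification: you do not need to appeal to left-invertibility of~\lam\ at all, since once you show the uncurried maps agree you get $`L(f)=`L(f')$ and hence $`L(f);\lam=`L(f');\lam$ directly; but your argument is not wrong, just slightly heavier than necessary.
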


The soundness theorem is then a direct corollary of the following proposition, that is proved (in appendix~\ref{prf:snd}) by structural induction:
\begin{proposition}
  \label{prop:sound}
  If~$\M=(\C,D,\lam,\app,(\fc[{\co[c_i]}])^n_{i=1},\case,\fail)$ is a \lc model, then for any~$`G=\{x_1,\dots,x_k\}$ and any terms~$t_1,t_2$ such that $\fv{t_1}``(=`G$ and $t_1→t_2$, the interpretation given in Fig.~\ref{fig:itp-cat} satisfies
  \begin{math}
    \itp{t_1}=\itp{t_2}
  \end{math}.
\end{proposition}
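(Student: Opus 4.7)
The plan is structural induction on the derivation of $t_1 \to t_2$, with one base case per reduction rule of Fig.~\ref{fig:rules} and a contextual-closure step for reduction under each term former. Since the interpretation in Fig.~\ref{fig:itp-cat} is defined compositionally (each construct is built out of its subterms' denotations by pairing, composition, or currying, all of which preserve equality of morphisms), the contextual step is routine: one plugs $\itp{t_1}=\itp{t_2}$ given by the inductive hypothesis into the corresponding position of the larger expression. So the real content is the six head cases.

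For the pure lambda rules, I would use diagram~\drefRefl\ together with the standard lemmas. For \rul{AppLam} the denotation of $(\lambda x.t)u$ unfolds to $\tuple{\lam\circ\Lambda(f_t);\itp{u}};(\app\mathbin{\times}\id);\ev$; naturality of currying plus $\lam;\app=\id[D^D]$ collapses this to the composite $\tuple{\id[];\itp{u}};\eq;\itp[`G,x]{t}$, which by Lemma~\ref{lem:subst} is $\itp{t\subs{u}}$. For \rul{LamApp}, the weakening part of Lemma~\ref{lem:ctx-chg} rewrites $\itp[`G,x]{tx}$ into a form to which $\Lambda$ and then~$\lam$ can be applied, so that $\app;\lam=\id[D]$ collapses the whole expression to~$\itp{t}$.

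For the four case rules I would chase the diagrams of Fig.~\ref{fig:com-diag}. The rules \rul{CaseCons}, \rul{CaseApp}, and \rul{CaseLam} are essentially direct readings of~\drefCO, \drefCA, and~\drefCL\ respectively, after expanding the denotations via the clauses of Fig.~\ref{fig:itp-cat} and using the universal property of the product (together with Proposition~\ref{prop:equiv-cacl} if one prefers to work with only one of \drefCA,~\drefCL). The rule \rul{CaseCase} is the one genuinely interesting step: one first rewrites $\itp{\cas{\cas[`f]{t}}}$ by unfolding and uses~\drefCC\ to reduce a nested \case\ to a single \case\ preceded by~$\comp$, and then invokes Lemma~\ref{lem:cat-cc} to identify $\tuple{\itp{`q};\itp{`f}};\comp$ with $\itp{`q\compo`f}$. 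This is the only place where \drefFail\ enters the proof, via Lemma~\ref{lem:cat-cc}, to handle the coordinates $i\notin\dom{`f}$.

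The main obstacle is bookkeeping rather than mathematics: the denotation expansions involve several implicit associativity isomorphisms $\eq$, the distinction between the context $`G$ and $`G,x$, and the need to keep track of how~$\Lambda$, $\tuple{-;-}$, and $\_\times\_$ interact with the inductive hypothesis in the contextual cases (for instance, when reducing inside a $`q$-branch of a case-binding, one must recognise that replacing one $f_i$ in the pairing $\tuple{f_1;\dots;f_n}$ by an equal morphism leaves the pairing unchanged). Once these manipulations are laid out carefully, each of the six rules reduces to exactly one of the commuting diagrams of the model, and the proposition—and hence Theorem~\ref{theo:sound}, which follows by reflexive–symmetric–transitive closure—is established. \qed
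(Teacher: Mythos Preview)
Your proposal is correct and follows essentially the same route as the paper: structural induction with the contextual cases handled by compositionality, and each of the six head redexes discharged by the corresponding diagram of Fig.~\ref{fig:com-diag} together with Lemmas~\ref{lem:ctx-chg}, \ref{lem:subst}, and~\ref{lem:cat-cc}. Your identification of \rul{CaseCase} as the one place where \drefFail\ enters (via Lemma~\ref{lem:cat-cc}) is exactly right, and the paper's treatment of \rul{CaseLam} indeed hinges on the same curry/uncurry bookkeeping you flag as the main obstacle.
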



\section{Completeness}
\label{sec:per}

In this part we shall prove that the converse of Theo.~\ref{theo:sound} holds in absence of match failure.
Namely if two terms have the same interpretation in any \lc-model then they are convertible using the rules of the calculus.
It means that, without match failure, the diagrams of
Fig.~\ref{fig:com-diag} are minimal.

\begin{theorem}[Completeness]
  \label{theo:compl}
  If~$t$ and~$t'$ are two hereditarily defined \lc-terms such that in any categorical \lc-model \itp[]{t}=\itp[]{t'}, then 
$$t\eqlc t'~.$$
\end{theorem}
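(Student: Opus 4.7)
The plan is to build a specific \lc-model $\Ms$ that is fully abstract on hereditarily defined terms, \ie~in which $\itp{t}=\itp{t'}$ in $\Ms$ already entails $t\eqlc t'$. Completeness then follows immediately: the hypothesis gives equality of denotations in every \lc-model, hence in $\Ms$, hence convertibility. The canonical candidate for $\Ms$ is the standard \emph{syntactic PER model} built from the calculus itself, which is why the abstract advertises the use of partial equivalence relations.

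Concretely, I would take as universe the set of (closed) \lc-terms and define $D$ to be the partial equivalence relation that relates $t$ and $t'$ exactly when both are hereditarily defined and $t\eqlc t'$. The category $\PER$ of PERs on this universe is Cartesian closed by a classical construction, so it can host a \lc-model whose structure is entirely syntactic: $\app$ sends $[t]$ to the class of $\fun[y]{t\,y}$, $\lam$ is the inverse sending $\fun[y]{u}$ to $[\lambda y.u]$, $\fc_i=[\co[c_i]]$, $\case$ sends $([\vec u],[t])$ to $[\cas[\vec\co\mapsto\vec u]{t}]$, and $\fail$ is interpreted by a point outside the reflexive part of $D$. Each of the six diagrams of Fig.~\ref{fig:com-diag} then unfolds, when evaluated on representatives, to a single reduction rule of $\lc$ and therefore commutes by definition of $\eqlc$; soundness in $\Ms$ is then just Theo.~\ref{theo:sound}.

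The core of the argument is a \emph{representation lemma}: for every term $t$ with $\fv{t}\subseteq\{x_1,\dots,x_k\}$, the morphism $\itp{t}:D^k\to D$ sends a tuple of classes $([u_1],\dots,[u_k])$ to the class of the simultaneous substitution $t[u_1/x_1,\dots,u_k/x_k]$. This is proved by structural induction using the substitution and exchange lemmas (Lem.~\ref{lem:subst},~\ref{lem:ctx-chg}) to match the categorical substitution with the syntactic one. Evaluating at the generic tuple of variable classes then gives $\itp{t}=[t]$ and $\itp{t'}=[t']$, so equality of denotations in $\Ms$ yields $t\eqlc t'$, which closes the proof.

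The main obstacle will be controlling the $\fail$ point and verifying diagram~\drefFail\ inside the PER. A naive syntactic model collapses too many failing terms and can accidentally identify distinct convertibility classes, and this is exactly why the statement is restricted to hereditarily defined $t,t'$: on that fragment no reduct ever reaches an unmatched $\cas{\co}$, and the PER construction remains inside the well-behaved part of $D$. Proving that, on this fragment, the PER really does distinguish non-convertible classes — rather than identifying them via some unexpected chain of reductions involving failures — is where the ``rewriting techniques'' alluded to in the abstract enter: the confluence of $\lc$ and the separation property cited in Sec.~\ref{sec:lc} are exactly what guarantees that $\eqlc$ is preserved under the syntactic operations used to define the model.
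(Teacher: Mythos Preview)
Your overall architecture---build a syntactic PER model and read off convertibility from equality of denotations---is the paper's, but two concrete choices in your proposal would make the argument fail.

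First, taking $D$ to relate $t,t'$ only when \emph{both are hereditarily defined} breaks the model construction. The point $\fail:\uno\to D$ must be an element of $\dom{\uno\to D}$, and diagram~\drefFail\ forces $\tuple{\vec u,\fail};\case$ to land on $\fail$ again for every $\vec u$. No hereditarily defined term has this absorbing behaviour under~$\case$ up to~$\eqlc$, so there is no candidate for~$\fail$ inside your $D$; your suggestion of ``a point outside the reflexive part of $D$'' is not a morphism in \PER. The paper instead takes $D=\eqlc$ on \emph{all} terms and lets $\fail=\class{`lx.\cas[\ ]{\co_1}}$; the restriction to hereditarily defined terms appears only in the \emph{statement} of completeness, not in the construction of~\Ms.

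Second, your representation lemma is false as stated. The denotation $\itp{`q}$ of a case-binding with domain $J\subsetneq\{1,\dots,n\}$ is an $n$-tuple whose missing components are filled with~$\fail$ (this is forced by the definition in Fig.~\ref{fig:itp-cat}). Hence $\itp{t}$ is not the class of $t[\vec u/\vec x]$ but of its \emph{case-completion} $\cpl t[\vec u/\vec x]$ (Prop.~\ref{prop:def-itp}). Equality of denotations therefore only gives $\cpl{t_1}\eqlc\cpl{t_2}$, and the heart of the proof---which your proposal does not supply---is showing that this implies $t_1\eqlc t_2$ when $t_1,t_2$ are hereditarily defined (Prop.~\ref{prop:completion-eq}). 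That step is non-trivial: case-completion does \emph{not} commute with single reductions (a \rul{CaseCase} step on $\cpl t$ need not be the completion of a step on~$t$), and the paper has to separate \rul{CaseCase} from the other rules, pass to \rul{cc}-normal forms, and prove a commutation lemma $\cnf{(\cpl t)}=\cpl{\cnf t}$ (Lem.~\ref{lem:cnf-cpl}) together with Lem.~\ref{lem:cpl-red} and~\ref{lem:lcm-cnf}. Confluence alone, or the separation theorem, does not give this; the ``hereditarily defined'' hypothesis is used precisely in Lem.~\ref{lem:cpl-red}.\ref{it:cpl-red-lcm} to ensure every \rul{CaseCons} redex in $\cpl t$ comes from one in~$t$.
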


Notice that this theorem does not hold for undefined terms.
Indeed, every match failure receives the same denotation~\fail in any \lc-model, even though they are not \lc-convertible.
The completeness result is established using the same method as~\cite{FauMiq09}:
\begin{enumerate}
\item We define \PER, the Cartesian closed category of partial
  equivalence relation compatible with~\eqlc.
\item In this syntactic category, we construct a \lc-model~\Ms.
\item Then we show that if $\itp[]{t}=\itp[]{t'}$ in~\Ms, then 
  $t\eqlc t'$.
\end{enumerate}

\subsection{Partial equivalence relations}
\label{sec:pers}

\emph{Partial equivalence relations} (PER) 
are commonly used to transform a model of the untyped lambda calculus into a model of the typed lambda-calculus~\cite{Mitchell86,TanCoq87}.
Yet we use them here to instantiate the definition of \lc-models in the category of PER on \lc-terms.
Thereby we construct a syntactic model of the untyped \lc-calculus.

\begin{definition}[\per]
  \label{dfn:per}
  Given a set~$X$, a \emph{partial equivalence relation} on~$X$ is a binary relation~$R$ that is symmetric and transitive.
  We may write $\rel[R]{x}{y}$ instead of $(x,y)`:R$.
  A \per\ is a partial equivalence relation~$R$ on~\terms\ (the set of all \lc-terms) that is \emph{compatible} with~\lc-equivalence, which means:
  \begin{displaymath}
    \left\{
      \begin{array}[c]{c}
        \rel[R]{t}{t'}\\
        t_0\eqlc t'
      \end{array}
    \right. \quad\text{implies}\quad
    \rel[R]{t}{t_0}
  \end{displaymath}
\end{definition}

We write \class[R]{e} the equivalence class of an element~$e$ modulo~$R$ (or simply~\class{e} when it raises no ambiguity), and if it is non empty we say that~$e$ is \emph{accessible} by~$R$.
This is denoted by~$e`:R$.
We call the \emph{domain} of~$R$ (denoted by \dom{R}) the set of all its accessible elements modulo~$R$:
\begin{math}
  \dom{R}=\{\ \class[R]{e}\,/\, e`:R\ \}
\end{math}.
Notice that if a partial equivalence relation~$R$ is compatible with~\lc\ then by definition
\begin{equation}
  \label{eq:compat-class}
  t\eqlc t' \quad\implies\quad \class[R]{t}=\class[R]{t'}.
\end{equation}
It is well known that the family of partial equivalence relations can be provided with the usual semantic operators (arrow, and product) and constitute a \ccc~\cite[Theo~7.1]{Scott76}
To this end, we use the well-known Church's encoding for tuples:
\begin{displaymath}
  \begin{array}[t]{cclr}
    \pair[k]{x_1,\dots,x_k} &=& `lf.f\,x_1\dots x_k &\\
    \proj[k]{i} &=& `lp.p\,(`lx_1\dots x_k.x_i) & \quad \scriptstyle (i`:\llbracket 1..k \rrbracket) \\
  \end{array}
\end{displaymath}
(We may write $\pair{x,y}$ for $\pair[2]{x,y}$ and $\proj{i}$ for $\proj[2]{i}$).
It satisfies the expected equivalence:

\hfil
\begin{math}
  \proj[k]{i}\ \pair[k]{t_1,\dots ,t_k} \ \eqlc\ t_i 
\end{math}.

\begin{proposition}[Operations on {\per}s]
  Let $(R_i)_{1\leq i\leq n}$ be a family of PERs (with $n\geq 2$).
  Define $R_1→R_2$ and $R_1`*\dots`*R_n$ by
  \begin{displaymath}
    \begin{array}{l@{\quad\text{when}\quad}l}
      \rel[R→R']{t}{t'} & 
      \text{for any }u,u',\ \rel[R]{u}{u'} \implies \rel[R']{tu}{t'u'}
      \\
      \rel[R_1`*\dots`*R_k]{t}{u} & 
      \text{for each } i`:\llbracket 1..k\rrbracket,\quad
      \rel[R_i]{\proj[k]{i}t}{\proj[k]{i}u}      
    \end{array}
  \end{displaymath}
  Then if all the $R_i$'s are {\per}s, so are $R_1→R_2$ and     $R_1`*\dots`*R_n$.
\end{proposition}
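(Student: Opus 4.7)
The plan is to verify, for each of the two constructions, the three defining conditions of a \per: symmetry, transitivity, and compatibility with $\eqlc$. The arguments are all routine and rely on two facts: (i) each $R_i$ already satisfies these properties individually, and (ii) $\eqlc$ is a congruence, so $t\eqlc t'$ implies $tu\eqlc t'u$ and $\proj[k]{i}t\eqlc \proj[k]{i}t'$. I would handle $R\to R'$ first, then the product, since the proof for the latter follows the same pattern pointwise on each projection.

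For $R\to R'$, symmetry follows by swapping arguments: given $\rel[R\to R']{t}{t'}$ and $\rel[R]{u}{u'}$, use symmetry of $R$ to get $\rel[R]{u'}{u}$, whence $\rel[R']{tu'}{t'u}$, and then symmetry of $R'$ yields $\rel[R']{t'u}{tu'}$. Transitivity uses that $\rel[R]{u}{u'}$ together with symmetry and transitivity of $R$ gives $\rel[R]{u}{u}$, so from $\rel[R\to R']{t_1}{t_2}$ we get $\rel[R']{t_1u}{t_2u}$, from $\rel[R\to R']{t_2}{t_3}$ we get $\rel[R']{t_2u}{t_3u'}$, and transitivity of $R'$ concludes. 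For compatibility, suppose $\rel[R\to R']{t}{t'}$ and $t_0\eqlc t'$; for any $\rel[R]{u}{u'}$ we have $\rel[R']{tu}{t'u'}$, and since $t'u'\eqlc t_0u'$ by congruence of $\eqlc$, compatibility of $R'$ gives $\rel[R']{tu}{t_0u'}$, as required.

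For the product $R_1\times\cdots\times R_k$, each condition reduces to the corresponding condition on every $R_i$ by looking at projections. Symmetry and transitivity are immediate: given relations on $\proj[k]{i}t$ and $\proj[k]{i}u$ for each $i$, apply symmetry/transitivity of $R_i$ coordinate by coordinate. For compatibility, if $\rel[R_1\times\cdots\times R_k]{t}{u}$ and $t_0\eqlc u$, then $\proj[k]{i}t_0\eqlc \proj[k]{i}u$ by congruence, and compatibility of each $R_i$ transports $\rel[R_i]{\proj[k]{i}t}{\proj[k]{i}u}$ to $\rel[R_i]{\proj[k]{i}t}{\proj[k]{i}t_0}$.

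There is no real obstacle here; the only point worth flagging is that the proofs depend on $\eqlc$ being a genuine congruence with respect to application and, for the product case, preserved under the Church-encoded projections $\proj[k]{i}$. Both are standard consequences of the reduction rules of Fig.~\ref{fig:rules} (contextual closure of $\to$, and hence of $\eqlc$), so they can simply be invoked. Everything else is a symbol-pushing verification.
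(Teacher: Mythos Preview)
Your argument is correct. The paper does not actually supply a proof of this proposition: it simply states the result, remarking beforehand that the closure of PERs under arrow and product is well known (citing Scott), and implicitly leaving the extra verification of compatibility with~$\eqlc$ to the reader. Your proposal fills in precisely that routine verification, and every step is sound; in particular, the use of $\rel[R]{u}{u}$ derived from $\rel[R]{u}{u'}$ via symmetry and transitivity is the standard trick needed for transitivity of $R\to R'$, and the appeal to contextual closure of~$\eqlc$ for both application and the Church-encoded projections is exactly what is required for the compatibility clauses.
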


\paragraph{The category \PER.}
The previous proposition enables providing the category of {\per}s with the structure of a \ccc.
In the category~\PER, objects are the {PER}s compatible with~\lc, and given two {\per}s~$A$ and~$B$ the morphisms of $A→B$ are the equivalence classes in \dom{A→B}.
The identity morphism on~$A$ is~\class[A→A]{`lx.x}, and the composition of~$\class{t}:A→B$ and~$\class{t'}:B→C$ is~$\class{t};\class{t'}=\class[A→C]{`lz.t'(tz)}$.
This defines correctly a category, as the composition is associative and has identity morphisms as neutral elements.
\\
\\
\parbox[b]{.7\linewidth}{
  ~\quad
  The categorical product of two {\per}s $A$ and~$B$ is   $(A`*B,~\class[A`*B→A]{\proj{1}},~\class[A`*B→B]{\proj{2}})$, and for~$\class{t}:C→A$ and~$\class{t'}:C→B$, the pairing of~\class{t_1} and~\class{t_2} is \linebreak[4]
  \begin{math}
    \tuple{\class{t},\class{t'}}=\class[C→A`*B]{`lx.\pair{tx,t'x}}~.
  \end{math}
  It is well defined (in particular it does not depend on the representative that we chose in the equivalence classes~\class{t} and~\class{t'}) and is universal for the diagram on the right.
}
\hfill
\parbox[b]{.25\linewidth}{\qquad
  \begin{tikzpicture}[baseline=(titi.south),node distance=4em]
    \node (c) {$C$};
    \node [below of= c] (p) {$A`*B$};
    \node [left of= p] (a) {$A$};
    \node [right of= p] (b) {$B$};
    \draw[->,dashed] (c) to 
    node[below right,tiny m] {\tuple{\class{t},\class{t'}}} (p);
    \draw[->] (c) to 
    node[above left] {$\scriptstyle \class{t}$} (a);
    \draw[->] (c) to 
    node[above right] {$\scriptstyle \class{t'}$} (b);
    \draw[->] (p) to 
    node[below] (titi) {$\scriptstyle \class{\proj{1}}$} (a);
    \draw[->] (p) to 
    node[below] {$\scriptstyle \class{\proj{2}}$} (b);
  \end{tikzpicture}
}
\\
The terminal object is the maximal \per~$\uno=\terms`*\terms$.
\\
The exponent of~$A$ and~$B$  is $B^A=A→B$, and the corresponding evaluation morphism is~$\ev=\class[B^A`*A→B]{`lx.(\proj{1}x)(\proj{2}x)}$~.
\\
\parbox[t]{0.35\linewidth}{
  \begin{tikzpicture}[node distance=3em and 4em, baseline=(a.base)]
    \node (a) {$C`*A$};
    \node [right= of a] (b) {$B$};
    \node [below= of a] (e) {$B^A`*A$};
    \draw[->] (a) to 
    node[above] (titi) {$\scriptstyle \class{t}$} (b);
    \draw[->] (a) to 
    node[left] {$\scriptstyle `L(\class{t})`*\id$} (e);
    \draw[->] (e) to 
    node[below right] {$\scriptstyle \ev$} (b);
  \end{tikzpicture}
}
\parbox[t]{0.65\linewidth}{
  The curried form of a morphism $\class{t}:C`*A→B$ is then
  \begin{math}
    `L(\class{t})= \class[C→B^A]{`lx.`ly.t\,\pair{x,y}} 
  \end{math}~.
  It is well defined and is the unique morphism that makes the diagram on the left commute.
}

\begin{proposition}
  \label{prop:per-ccc}
  \PER\ is a Cartesian closed category.
\end{proposition}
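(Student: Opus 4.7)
The plan is to verify systematically the three bundles of axioms that constitute a Cartesian closed category: that $\PER$ is a category, that it has binary products together with a terminal object, and that it has exponentials. All the required data has already been constructed above; what remains is essentially bookkeeping, based on three ingredients: the compatibility of each hom-PER with $\eqlc$, the $\beta$-equivalences $\proj[k]{i}\pair[k]{t_1,\dots,t_k}\eqlc t_i$ for the Church encoding of tuples, and the observation that two terms represent the same morphism of $\PER$ as soon as they are related in the appropriate hom-PER.

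First I would check that $\PER$ is a category. Well-definedness of composition is the main point: if $\rel[A\to B]{t}{s}$ and $\rel[B\to C]{t'}{s'}$, then for any $\rel[A]{u}{u'}$ we get $\rel[B]{tu}{su'}$, hence $\rel[C]{t'(tu)}{s'(su')}$, which shows $\rel[A\to C]{\lambda z.t'(tz)}{\lambda z.s'(sz)}$; compatibility of the resulting class with $\eqlc$ is inherited from the target PER. Associativity and the identity laws then reduce to $\beta$-equivalences of the form $(\lambda z.t'(tz))\,u\eqlc t'(tu)$, using rule \rul{AppLam} and the compatibility of the relevant hom-PER with $\eqlc$.

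Next I would turn to products. That $A\times B$ is a PER has already been stated. The projection morphisms are well-defined directly from the definition of the product PER. For pairing, I would verify that $\tuple{\class{t},\class{t'}}$ does not depend on the chosen representatives, and that the triangle equations $\tuple{\class{t},\class{t'}};\class{\proj{i}}=\class{t_i}$ hold modulo $\eqlc$ by $\beta$-reducing $\proj{i}\pair{tx,t'x}$. Uniqueness of the pairing is then essentially automatic: if $\class{h_1}$ and $\class{h_2}$ both satisfy the two projection equations, then for any $\rel[C]{u}{u'}$ the defining condition $\rel[A\times B]{h_1 u}{h_2 u'}$ reduces componentwise to the projection equalities already assumed, hence $\class{h_1}=\class{h_2}$. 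The terminal object $\uno=\terms\times\terms$ is trivially terminal since any two terms are $\uno$-related.

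Finally, for exponentials I would check that $B^A=A\to B$ is a PER and that $\ev$ is a well-defined morphism $B^A\times A\to B$, unwinding definitions to see that if $\rel[B^A\times A]{p}{p'}$ then $\rel[B]{(\proj{1}p)(\proj{2}p)}{(\proj{1}p')(\proj{2}p')}$. The curried morphism $\Lambda(\class{t})=\class{\lambda x.\lambda y.t\pair{x,y}}$ is independent of the representative $t$ for the same reason, and the triangle equation $(\Lambda(\class{t})\times\id);\ev=\class{t}$ unfolds, after $\beta$-reduction, to the requirement that $t\pair{\proj{1}p,\proj{2}p}$ and $t\,p$ be $B$-related whenever $p$ is in the domain of $C\times A$. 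Uniqueness of $\Lambda(\class{t})$ follows by the same computation read in reverse. The step requiring most care is precisely this triangle equation: it would be wrong to hope for $\pair{\proj{1}p,\proj{2}p}\eqlc p$ as a syntactic identity of $\lambda$-terms, since the Church encoding famously fails surjective pairing. The equality only holds as an identity of morphisms in $\PER$, that is, modulo the relation $C\times A$ itself; this is exactly why the partial equivalence relations provide enough slack to model Cartesian closure where raw $\lambda$-conversion does not.
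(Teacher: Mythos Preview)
Your proposal is correct. Note, however, that the paper does not actually supply a proof of this proposition: it sets up the data (composition, identities, product, terminal object, exponent, evaluation, currying), asserts in passing that each construction is ``well defined'' and ``universal'', and relies on the citation to Scott for the fact that partial equivalence relations form a Cartesian closed category. So there is nothing to compare against beyond the definitions themselves, and your verification is strictly more detailed than what the paper provides.

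Your treatment is sound throughout. The one point that deserves emphasis---and you handle it correctly---is the exponential triangle law: the equation $(\Lambda(\class{t})\times\id);\ev=\class{t}$ does \emph{not} reduce to a $\lambda$-calculus identity, because the Church encoding lacks surjective pairing and $\pair{\proj{1}p,\proj{2}p}\eqlc p$ is false in general. What saves the day, as you observe, is that $\pair{\proj{1}p,\proj{2}p}$ and $p$ are related in $C\times A$ whenever $p$ is in the domain of $C\times A$, simply by unfolding the definition of the product PER together with compatibility with $\eqlc$; the morphism $\class{t}$ then transports this relation into $B$. The paper leaves this entirely implicit, so your making it explicit is a genuine improvement.
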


\subsection{Syntactic model in \PER.}

We will now define a \lc-model in the \ccc~\PER.
In this category, there is a trivial reflexive object, that is actually equal to its object of functions (as proved in appendix~\ref{prf:cpl-per}).
\begin{lemma}
  \label{lem:refl-per}
  Let~\D\ be the object~\eqlc\ in~\PER. Then $D=D^D$.
\end{lemma}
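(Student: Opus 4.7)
The plan is to prove the equality of two PERs on terms by double inclusion. Recall that $D$ relates $t$ and $t'$ when $t \eqlc t'$, while $D^D$ relates $t$ and $t'$ when for every pair $u \eqlc u'$ we have $tu \eqlc t'u'$. So the statement reduces to showing that these two conditions are equivalent.

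For the inclusion $D \subseteq D^D$, I would invoke the fact that $\eqlc$ is a congruence: being defined as the reflexive-symmetric-transitive closure of a reduction that is applied in arbitrary contexts, it is automatically compatible with application. Thus, whenever $t \eqlc t'$ and $u \eqlc u'$, we get $tu \eqlc t'u'$, which is exactly the defining condition of $D^D$.

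For the converse inclusion $D^D \subseteq D$, the key observation is that the calculus contains the \rul{LamApp} rule, which is precisely $\eta$-expansion read backwards. Given $(t,t') \in D^D$, pick a variable $x \notin \fv{t} \cup \fv{t'}$. Since $x \eqlc x$, the hypothesis yields $tx \eqlc t'x$. Using compatibility of $\eqlc$ under $\lambda$-abstraction, we deduce $\lambda x.tx \eqlc \lambda x.t'x$. Finally, applying \rul{LamApp} on both sides (which is valid because $x$ is fresh), we obtain
\[
  t \;\eqlc\; \lambda x.tx \;\eqlc\; \lambda x.t'x \;\eqlc\; t'~,
\]
hence $(t,t') \in D$.

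The only point that requires care is the freshness of $x$, which is needed to apply \rul{LamApp}; this is harmless since PERs do not distinguish such details and one is free to choose the variable. Otherwise the argument is completely standard and poses no real obstacle: the availability of the $\eta$-rule in $\eqlc$ is exactly what collapses $D$ and $D^D$, mirroring the classical observation that an extensional reflexive object in the pure lambda calculus satisfies $D \cong D^D$.
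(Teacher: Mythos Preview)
Your proof is correct and follows essentially the same route as the paper: both argue by double inclusion, using the congruence of $\eqlc$ for $D\subseteq D^D$ and, for $D^D\subseteq D$, instantiating the hypothesis at a fresh variable $x$, closing under $\lambda$-abstraction, and then applying \rul{LamApp} on each side.
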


Also \eqlc\ is the object of~\PER that will be used to interpret untyped \lc-terms.
We do not need to define~\lam\ and~\app, and the morphisms~$\fc_i$'s and \case\ are quite intuitive:
informally, \fc\ is the constant function returning~\co, and \case\ takes an argument~$(`q,t)$ in~$\pdt`*\D$ and return~\cas{t}.
In the same way,~\fail\ is just a constant function returning a match failure (we arbitrarily choose one of the possible ones).
This actually defines a \lc-model (appendix~\ref{prf:cpl-per}).

\begin{definition}[Syntactic model]
  The syntactic model (or \emph{PER model}) of the \lc-calculus is
  $\Ms=(\PER,D,\id[D],\id[D],
  (\fc_i)_{1\leq i\leq n},\case,\fail)$,
  where:
  \begin{itemize}
  \item $D$ is the relation~\eqlc.
  \item given $\co$ a constructor, \fc\ is \class[\uno\D]{`lx.\co}.
  \item \case\ is \class[(\pdt`*D)\to D]{`lx.\cas[{(\co[c_i]\mapsto
        \proj[n]{i}(\proj{1}x))_{1\leq i\leq n}}]{\proj{2}x}}.
  \item \fail\ is \class[\uno\to D]{`lx.\cas[\ ]{\co[c_1]}}.
  \end{itemize}
\end{definition}

\begin{proposition}
  \label{prop:ms-model}
  \Ms\ is a \lc-model.
\end{proposition}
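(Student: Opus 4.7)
The plan is to check each of the five conditions defining a $\lc$-model. Conditions on $\C$ and on $D$ are already handled by Proposition~\ref{prop:per-ccc} and Lemma~\ref{lem:refl-per}. The pointhood of the $\fc_i$'s and $\fail$, and the fact that $\case$ is a morphism of $\pdt`*D→D$, amount to verifying that the chosen representative lambda-terms do sit in their equivalence classes, i.e.\ that they relate to themselves in the appropriate hom-PERs. This reduces to a routine check that if $\rel[\pdt`*D]{p}{p'}$, then $\proj{1}p \eqlc \proj{1}p'$ and $\proj{2}p \eqlc \proj{2}p'$, and that $\eqlc$ is compatible with case-construction and substitution of subterms under the projections $\proj[n]{i}$. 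Since Lemma~\ref{lem:refl-per} gives $D=D^D$, we may indeed take $\lam=\app=\id[D]$, so diagram~\drefRefl\ commutes trivially.

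By Proposition~\ref{prop:equiv-cacl}, only one of \drefCA\ and \drefCL\ needs to be verified; I would pick \drefCL. For each remaining diagram the strategy is uniform: pick representative $\lc$-terms along each path, unfold the composition in $\PER$ (which just means substituting inside nested Church-encoded pairs), and witness equality of the resulting equivalence classes by a single application of the corresponding reduction rule of the calculus, together with the $\beta$ and projection reductions needed to simplify the Church encoding. For \drefCO, a representative $p \eqlc \pair[n]{u_1,\dots,u_n}$ of a point of $\pdt$ yields, along the upper path, $\cas[{(\co_j \mapsto \proj[n]{j}(\proj{1}\pair{p,\co_i}))_j}]{\proj{2}\pair{p,\co_i}}$, which after $\beta$-reductions equals $\cas[{(\co_j \mapsto u_j)_j}]{\co_i}$ and closes by \rul{CaseCons} to $u_i \eqlc \proj[n]{i}p$. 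For \drefCL\ we use \rul{CaseLam} in the same manner. For \drefFail, both paths produce the constant term $\cas[\,]{\co_1}$ (the chosen match failure), after one use of \rul{CaseCase} on the left to absorb the vacuous outer case over the empty binding $`q`o\{\,\} = \{\,\}$.

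The technical heart of the argument is diagram \drefCC. Here one must check that the universal pairing $\comp = \tuple{(\id[\pdt]`*\proj[n]{i})\,;\,\case}_{i=1}^n$ in $\PER$, applied on Church-encoded inputs $\pair{p,q}$ with $q \eqlc \pair[n]{t_1,\dots,t_n}$, produces a tuple whose $i$-th component is $\eqlc$-equivalent to $\cas[{(\co_j \mapsto \proj[n]{j}p)_j}]{t_i}$, thereby matching up to $\eqlc$ with the syntactic case-binding $`q\circ`f = \{\co_i \mapsto \cas[`q]{t_i}\}_i$ from Fig.~\ref{fig:rules}. Once this bookkeeping is done, a single application of \rul{CaseCase} equates the two paths.

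The main obstacle is exactly this bookkeeping in \drefCC: carefully matching the shape of the $n$-tuple produced by the categorical $\comp$ against the syntactic composition of case-bindings. All other diagrams follow transparently from the one-step reduction rules and the observation, used throughout, that $\lc$-equivalence is a congruence compatible with all constructors (so passing to equivalence classes is justified, by~(\ref{eq:compat-class})).
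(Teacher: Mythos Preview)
Your proposal is correct and follows essentially the same route as the paper: establish the \ccc\ structure and reflexivity of $D$ via Prop.~\ref{prop:per-ccc} and Lem.~\ref{lem:refl-per}, check well-definedness of $\fc_i$, $\fail$, $\case$, then verify the diagrams by unfolding compositions to representative $\lc$-terms and closing with the corresponding reduction rule (\rul{CaseCons}, \rul{CaseCase}, etc.). The only cosmetic difference is that the paper exercises the option of Prop.~\ref{prop:equiv-cacl} by checking \drefCA\ rather than \drefCL, and carries out the verifications by simplifying the composite lambda-term directly rather than evaluating at a generic point~$p$; neither difference is material.
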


\paragraph{Case-binding completion.}
Remember that \lc-models do not distinguish different match failures (as a matter of fact, all of them are interpreted by~\fail).
That is because the interpretation of a term first ``completes'' each case-binding with branches~$\co_j\mapsto\fail$ if~$\co_j$ is not in its domain (\cf~the description of the denotation of a case-binding page~\pageref{pg:itp-cb}).
Also in the PER model, undefined terms are ``unblocked'' and the rule \rul{CaseCons} can be performed (and give $\cas[\ ]{\co[c_1]}$).
Now we formalise the idea of case-binding completion.
This enables an explicit definition of the interpretation of a term in the PER model, so that we can prove the completeness theorem.

\begin{definition}[Case-completion]
  \label{def:case-compl}
  The case-completion~\cpl{t} of a term~$t$ is defined by induction:

  \hfil
  \begin{math}
    \begin{array}[t]{c!{=}l!{\qquad}c!{=}c!{\qquad}c!{=}l}
      \cpl{x} & x &  \cpl{`lx.t} & `lx.\cpl{t} &
      \cpl{\cas{t}} & \cas[\cpl{`q}]{\cpl{t}}\\
      \cpl{\co} & \co & \cpl{tu} & \cpl{t}\cpl{u} &
      \multicolumn{2}{c}{ }\\
      \cpl{`q} &
      \{\co_i\mapsto u'_i/ 1\leq i\leq n\} &
      \multicolumn{4}{r}{\text{with }
        u'_i=
        \left\{
          \begin{array}[c]{cl}
            \cpl{u_i}&\text{ if }\co_i\mapsto u_i`:`q\\  
            \cas[\ ]{\co_1}&\text{ if }\co_i`;\dom{`q}
          \end{array}
        \right.}
    \end{array}
  \end{math}
  \hfil
\end{definition}

\begin{fact}
  \label{fact:cpl-equal}
  This case-completion does not unify different defined terms:
  if two defined terms have the same case-completion, then they are equal.
\end{fact}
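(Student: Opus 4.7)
The plan is to proceed by structural induction on~$t$ (with $t'$ arbitrary), exploiting the fact that case-completion preserves the outermost syntactic constructor of a term: variables, constructors, applications, abstractions and case constructs are each preserved at the head of~\cpl{u}. Hence $\cpl{t}=\cpl{t'}$ forces $t$ and $t'$ to have the same outermost shape, which settles the variable and constructor cases immediately and reduces the application and abstraction cases to the induction hypothesis applied to the immediate subterms (which remain defined, since $t$ and~$t'$ are).

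The real work is the case $t=\cas{t_0}$, forcing $t'=\cas[`f]{t'_0}$ with $\cpl{t_0}=\cpl{t'_0}$ and $\cpl{`q}=\cpl{`f}$. The induction hypothesis on $t_0$ and $t'_0$ gives $t_0=t'_0$, and it remains to derive $`q=`f$. The key observation I would prove is that for any defined term~$u$, $\cpl{u}\neq\cas[\ ]{\co_1}$: if $u$ is not a case construct then neither is~\cpl{u}; otherwise $\cpl{u}$ is a case construct whose case-binding has full domain $\{\co_1,\dots,\co_n\}$ by construction of the completion, hence non-empty (assuming $n\geq 1$) and distinct from the empty case-binding of the placeholder.

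Since every original branch $`q_{\co_i}$ and $`f_{\co_i}$ is defined (because $t$ and $t'$ are), this observation makes the added branches recognisable: in~\cpl{`q}\ (and~$\cpl{`f}$) the entry at~$\co_i$ equals $\cas[\ ]{\co_1}$ iff $\co_i\notin\dom{`q}$ (resp.~$\dom{`f}$). From $\cpl{`q}=\cpl{`f}$ one therefore reads off $\dom{`q}=\dom{`f}$, and on the common domain $\cpl{`q_{\co_i}}=\cpl{`f_{\co_i}}$, so the induction hypothesis applied to each defined pair of branches gives $`q_{\co_i}=`f_{\co_i}$, hence $`q=`f$. No real obstacle is anticipated; the only delicate point is precisely this recoverability of the added branches, which hinges on $n\geq 1$ so that the placeholder's empty case-binding cannot be mistaken for the completion of a genuine defined case construct.
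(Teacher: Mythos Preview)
Your argument is correct. The paper states this as a \emph{Fact} without proof, so there is no proof in the paper to compare against; your structural-induction argument, with the key observation that the placeholder $\cas[\ ]{\co_1}$ cannot arise as the completion of any defined term (because completed case-bindings have full domain $\{\co_1,\dots,\co_n\}$), is exactly the kind of routine verification the author presumably had in mind and left to the reader.
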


\begin{proposition}
  \label{prop:def-itp}
  In the model~\Ms, the interpretation of a term~$t$ in a context
  $`G=x_1;\cdots;x_k$ is 

  \hfill
  $\itp{t}=\class[D^k\to D]{`lx.\cpl{t}\subs[x_i]{\proj[k]{i}x}}$
  \hfill
  (with~$x$ fresh in~$t$).
\end{proposition}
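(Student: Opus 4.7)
The plan is to proceed by simultaneous structural induction on terms~$t$ and case-bindings~$\theta$, unfolding in each case the categorical definition of Fig.~\ref{fig:itp-cat} inside~$\Ms$ and verifying that the representative produced coincides, modulo~$\eqlc$, with $\lambda x.\cpl{t}\subs[x_i]{\proj[k]{i}x}$. Since equivalence classes in~$\PER$ are closed under~$\eqlc$ by compatibility (equation~\eqref{eq:compat-class}), it suffices to exhibit any representative that is $\lc$-convertible to the target term; every manipulation below then happens under the outer $\lambda x$.

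For the base cases, $\itp{x_i}=\proj[k]{i}$ is the Church projection, which $\eta$-expands to $\lambda x.\proj[k]{i}x$ as required, and $\itp{\co}=\term[D^k];\fc$ is the constant function $\lambda x.\co$, matching $\cpl{\co}=\co$. For application, pairing followed by $\ev$ (using $\app=\id$ in~$\Ms$) yields a representative $\lambda x.(\itp{t}\,x)(\itp{u}\,x)$ which, after applying the induction hypothesis to~$t$ and~$u$ and one $\beta$-reduction under the binder, rewrites to $\lambda x.(\cpl{t}\,\cpl{u})\subs[x_i]{\proj[k]{i}x}$. For a case-binding~$\theta$, the Church $n$-tuple assembling the components~$f_i$ contains $\term[D^k];\fail$ in every missing slot, which by the definition of $\fail$ contributes exactly $\cas[\ ]{\co_1}$: precisely the filler used by~$\cpl{\theta}$ in Def.~\ref{def:case-compl}. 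For the case construct, expanding $\tuple{\itp{\theta};\itp{t}};\case$ with the explicit definition of~$\case$ in~$\Ms$ and applying each $\proj[n]{i}$ to the $n$-tuple gives, branch by branch, a representative $\lambda x.\cas[\cpl{\theta}\subs[x_j]{\proj[k]{j}x}]{\cpl{t}\subs[x_j]{\proj[k]{j}x}}$, after which pulling the substitution outward through the case construct (permitted because substitution distributes over every syntactic constructor) produces $\cpl{\cas{t}}\subs[x_j]{\proj[k]{j}x}$.

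The abstraction case is the main source of friction: the curried form of $\class{g}:C\times A\to B$ in~$\PER$ is $\class{\lambda u.\lambda v.g\,\pair{u,v}}$, so $\itp{\lambda y.t}$ unfolds to $\lambda x.\lambda y.f_t\,\pair{x,y}$ where $f_t$ factors the iso $D^k\times D\eq D^{k+1}$ through $\itp[\Gamma,y]{t}$. Tracking how this iso acts on Church tuples of varying arity and then invoking the induction hypothesis on $\itp[\Gamma,y]{t}$ is where most of the bookkeeping lies; once $\beta\eta$ has done its work (rules \rul{AppLam} and \rul{LamApp}), one recovers $\lambda x.\lambda y.\cpl{t}\subs[x_i]{\proj[k]{i}x}$, which is $\cpl{\lambda y.t}\subs[x_i]{\proj[k]{i}x}$ up to $\alpha$-conversion of $y$ (chosen fresh). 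The principal obstacles are therefore (i)~controlling this isomorphism cleanly without an unwieldy explicit description of Church tuples of arbitrary arity, and (ii)~justifying at each stage that case-completion commutes with the outer substitution $\subs[x_i]{\proj[k]{i}x}$, so that the per-branch and per-subterm induction hypotheses genuinely assemble into the claimed global formula.
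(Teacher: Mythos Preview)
Your proposal is correct and follows the same approach as the paper: structural induction on~$t$, unfolding the categorical interpretation inside~$\Ms$ and checking each case against the target representative modulo~$\eqlc$. The paper's own proof is in fact terser than yours---it dispatches the variable, constructor, application, and abstraction cases in one line each as ``straightforward from definition and induction hypothesis'' and only spells out the case-construct case in detail---so the abstraction case you flag as the ``main source of friction'' is treated there without comment, and your two stated obstacles are bookkeeping rather than genuine difficulties.
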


\subsection{Completeness result.}
\label{sec:compl-res}

The proposition~\ref{prop:def-itp} ensures that if two \lc-terms have the same denotation in the PER model, then they have the same case-completion \textit{modulo}~\D\ (\ie\ they are \lc-convertible).
It does not necessarily means that the two terms are \lc-equivalent themselves, as it is not true for match failure:
\begin{displaymath}
  \begin{array}[t]{c@{\ =\ }c@{\ \eqlc\ }c}
  \cpl{\cas[\co_1\mapsto`ly.yy]{\co_2}} &
  \cas[\co_1\mapsto`ly.yy;\co_2\mapsto\cas[]{\co_1}]{\co_2} &
  \cas[]{\co_1}
  \\
  \cpl{\cas[\co_2\mapsto`ly.y]{\co_1}} &
  \cas[\co_1\mapsto\cas[]{\co_1};\co_2\mapsto`ly.y]{\co_1} &
  \cas[]{\co_1}
  \end{array}
\end{displaymath}
Nevertheless, $\cas[\co_1\mapsto`ly.yy]{\co_2} \not\eqlc \cas[\co_2\mapsto`ly.y]{\co_1}$.
This explains why match failure all have the same interpretation in~\Ms.
However, this defect is restricted to undefined terms.
Now we show that the case-completion does not modify the \lc-equivalence on defined terms.
\begin{proposition}
  \label{prop:completion-eq}
  Let~$t_1$ and~$t_2$ be two hereditarily defined terms.
  Then
  \begin{displaymath}
    \cpl{t_1}\eqlc\cpl{t_2}
    \quad\implies\quad t_1\eqlc t_2
  \end{displaymath}
\end{proposition}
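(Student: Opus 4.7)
The approach is to route through Church--Rosser of the $\lc$-calculus and lift a common reduct on the completed terms back to the originals. From the hypothesis $\cpl{t_1}\eqlc\cpl{t_2}$ and confluence of $\lc$, pick an $s$ with $\cpl{t_1}\twoheadrightarrow s$ and $\cpl{t_2}\twoheadrightarrow s$. The task reduces to projecting these two reduction sequences back onto reductions of $t_1$ and $t_2$, and then invoking Fact~\ref{fact:cpl-equal} (injectivity of $\cpl{\_}$ on defined terms) to conclude.

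As a preliminary simulation, for any term $t$, $t\to t'$ implies $\cpl{t}\twoheadrightarrow\cpl{t'}$. All cases of Fig.~\ref{fig:rules} are immediate except \rul{CaseCase}: one step reduces $\cpl{\cas[`q]{\cas[`f]{v}}}$ to $\cas[\cpl{`q}\circ\cpl{`f}]{\cpl{v}}$, whereas $\cpl{\cas[`q\circ`f]{v}}=\cas[\cpl{`q\circ`f}]{\cpl{v}}$. The discrepancy sits in the branches $\co_j\mapsto\cas[\cpl{`q}]{\cas[\ ]{\co_1}}$ (for $j\notin\dom{`f}$): each of them reduces in one further \rul{CaseCase} step (composition with the empty case-binding) to $\cas[\ ]{\co_1}$, precisely the dummy branch appearing in $\cpl{`q\circ`f}$.

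The core ingredient is a \emph{projection lemma}: for hereditarily defined $t$, if $\cpl{t}\twoheadrightarrow s$ then there exists a hereditarily defined $t'$ with $t\twoheadrightarrow t'$ and $s\twoheadrightarrow\cpl{t'}$. The proof is by induction on the reduction length, with a case analysis on each step of $\cpl{t}$. A step that corresponds to a redex in $t$ is matched by the simulation above; a ``cleanup'' reduction inside a dummy branch (as in the \rul{CaseCase} remark above) leaves $t$ fixed and merely pushes $s$ toward a completion. The only potentially problematic case is a \rul{CaseCons} step $\cas[\cpl{`q}]{\co_j}\to\cas[\ ]{\co_1}$ with $j\notin\dom{`q}$; but this would expose $\cas[`q]{\co_j}$ as a match failure reachable from $t$, contradicting hereditary definedness.

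Projecting each $\cpl{t_i}\twoheadrightarrow s$ gives hereditarily defined $t'_i$ with $t_i\twoheadrightarrow t'_i$ and $s\twoheadrightarrow\cpl{t'_i}$, so $\cpl{t'_1}$ and $\cpl{t'_2}$ are common reducts of $s$. Iterating the confluence--projection cycle, controlled by the measure $\mes{-}$ on the number of residual dummy-branch discrepancies in $s$, one reaches hereditarily defined $t''_1,t''_2$ with $\cpl{t''_1}=\cpl{t''_2}$; Fact~\ref{fact:cpl-equal} then forces $t''_1=t''_2$, and since $t_i\eqlc t''_i$, we conclude $t_1\eqlc t_2$. The principal obstacle I anticipate is making the projection lemma strong enough---either to land on a completion in one shot, or to drive an iteration that demonstrably terminates---so some measure argument on dummy-branch mismatches in $s$ seems essential.
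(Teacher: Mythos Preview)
Your overall strategy---Church--Rosser followed by a projection from reductions of $\cpl{t}$ back to reductions of $t$---is exactly the paper's. The gap lies in the last two steps. First, your inductive proof of the projection lemma does not close: when the first step $\cpl{t}\to s_1$ is a \rul{CaseCase} step, Lemma~\ref{lem:cpl-red}.\ref{it:cpl-red-cc} only yields $s_1\tocc^*\cpl{t_0}$, not $s_1=\cpl{t_0}$; to continue you must join $s_1\to^* s$ with $s_1\tocc^*\cpl{t_0}$ by confluence, and the resulting reduction $\cpl{t_0}\to^* s'$ may well be longer than the original one, so the length induction fails. Second, even granting your projection lemma, the two applications give $s\to^*\cpl{t'_1}$ and $s\to^*\cpl{t'_2}$ for possibly distinct $t'_1,t'_2$; the proposed ``iteration controlled by a measure on dummy-branch discrepancies'' is never defined, and there is no visible reason it should terminate. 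You correctly identify this as the principal obstacle, but you have not resolved it.

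The paper supplies precisely the canonical device you suspect is needed: the \rul{CaseCase} normal form. Since \rul{CaseCase} alone is confluent and strongly normalising, every term $s$ has a unique $\cnf{s}$. The projection lemma is then strengthened to: if $t$ is hereditarily defined and $\cpl{t}\to^* s$, then $\cnf{s}=\cpl{t_0}$ for some $t_0$ with $t\to^* t_0$ (Cor.~\ref{cor:cpl-red}). Now both projections land on the \emph{same} target $\cnf{s}$, hence $\cpl{t_0^{(1)}}=\cpl{t_0^{(2)}}$, and Fact~\ref{fact:cpl-equal} finishes the argument with no iteration at all. The inductive proof of Cor.~\ref{cor:cpl-red} still peels off one step at a time, but systematically passing to $\cnf{\_}$ (via the commutation Lemma~\ref{lem:cnf-cpl} and the postponement Lemma~\ref{lem:lcm-cnf}) absorbs the extra \rul{CaseCase} steps created along the way and lets the induction go through.
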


The proof of this proposition uses rewriting techniques, and relies on several lemmas (whose proofs are given in appendix~\ref{prf:cpl-rewrit}).
For technical reasons, we need to separate the rule \rul{CaseCase} from the other ones.
Also we write~\lcm\ the calculus with all the rules \textit{except} \rul{CaseCase}, and \textsc{cc} the rule \rul{CaseCase}.
\begin{fact}
  \label{fact:red-cpl}
  The definition of case-completion (Def.~\ref{def:case-compl})     preserves all \lc-redexes.
  Also if $t→u$ then~$\cpl{t}→\cpl{u}$, and if $\cpl{t}$ is a normal   form then so is~$t$.
\end{fact}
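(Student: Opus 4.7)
The plan is to prove the three assertions in succession, with the local redex-preservation as the key lemma from which the other two follow by routine inductions. Before anything else I would establish a substitution lemma for completion, namely $\cpl{t\subs[y]{u}} = \cpl{t}\subs[y]{\cpl{u}}$, by a straightforward structural induction on $t$. The only noteworthy point is that the freshly inserted failure branches $\cas[\,]{\co[c_1]}$ are closed, so substitution commutes transparently with their insertion; the same observation yields $\fv{\cpl{t}}=\fv{t}$ and $\fv{\cpl{`q}}=\fv{`q}$, which will be crucial for the side-conditions of \rul{LamApp} and \rul{CaseLam}.

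For the first assertion I would check each of the six rules separately. For \rul{AppLam}, \rul{CaseApp}, and \rul{CaseCase}, completion commutes with the top-level constructor and the redex shape is preserved verbatim. For \rul{CaseLam} and \rul{LamApp} the free-variable conditions transfer because $\cpl{\cdot}$ does not change free-variable sets. For \rul{CaseCons} the decisive point is that $\co\in\dom{`q}$ entails $\co\in\dom{\cpl{`q}}$ with $\cpl{`q}_\co=\cpl{`q_\co}$, so a \rul{CaseCons} redex of $t$ is sent to a \rul{CaseCons} redex of $\cpl{t}$ with the expected reduct. The second assertion is then obtained by induction on the derivation of $t\to u$, using the substitution lemma for \rul{AppLam} and direct computation for the remaining base cases, together with the fact that $\cpl{\cdot}$ commutes with every term constructor to handle the contextual closure.

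The main obstacle is the \rul{CaseCase} case in the second claim. After the outer reduction one reaches $\cas[\cpl{`q}\circ\cpl{`f}]{\cpl{t}}$, whereas $\cpl{\cas[`q\circ`f]{t}} = \cas[\cpl{`q\circ`f}]{\cpl{t}}$. These two case-bindings agree on constructors in $\dom{`f}$ but differ on $\co[c_i]\notin\dom{`f}$: the former contains $\cas[\cpl{`q}]{\cas[\,]{\co[c_1]}}$, while the latter contains just $\cas[\,]{\co[c_1]}$. Additional \rul{CaseCase} steps on each offending branch (using $\cpl{`q}\circ\{\,\}=\{\,\}$) reconcile the two, so the single arrow in the statement must be read as $\to^+$ in this case; this subtlety should be acknowledged before the fact is invoked in the proof of Proposition~\ref{prop:completion-eq}. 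Finally, the third assertion follows by contrapositive from the first: if $t$ contains a redex at some position, then, using that $\cpl{\cdot}$ commutes with all term constructors in context, $\cpl{t}$ contains a corresponding redex, hence cannot be a normal form.
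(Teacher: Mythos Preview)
Your proposal is correct and thorough. The paper states this result as a \emph{Fact} and gives no proof at all, so there is nothing to compare against; your verification by case analysis on the reduction rules, together with the substitution lemma $\cpl{t\subs[y]{u}}=\cpl{t}\subs[y]{\cpl{u}}$ and the free-variable preservation $\fv{\cpl{t}}=\fv{t}$, is exactly what is needed.

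Your observation about the \rul{CaseCase} case is well spotted and genuine: when $\dom{`f}\subsetneq\{\co_1,\dots,\co_n\}$, a single \rul{CaseCase} step on $\cpl{t}$ yields the case-binding $\cpl{`q}\circ\cpl{`f}$, whose branches at $\co_i\notin\dom{`f}$ are $\cas[\cpl{`q}]{\cas[\,]{\co_1}}$ rather than $\cas[\,]{\co_1}$, so one needs further \rul{CaseCase} steps to reach $\cpl{`q\circ`f}$. The paper itself makes exactly this computation in the proof of Lemma~\ref{lem:cpl-red}.\ref{it:cpl-red-cc} (appendix), which confirms your analysis. Since the only use of this Fact in the paper is the normal-form direction (in the proof of Lemma~\ref{lem:cnf-cpl}), the discrepancy is harmless, but you are right that the second clause should literally read $\cpl{t}\to^{+}\cpl{u}$.
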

\begin{lemma}[Reduction on completed terms]
  \label{lem:cpl-red}
  \begin{enumerate}
  \item\label{it:cpl-red-lcm} 
    Let~$t$ be a defined term.\\
    Then, for any term~$t'$,
    \begin{center}
      $\cpl{t}→_{\lcm}t'$\quad implies\quad $t'=\cpl{t_0}$\; for    
      some~$t_0$ such that $t→t_0$.
    \end{center}
  \item \label{it:cpl-red-cc} 
    For any terms~$t, t'$,
    \begin{center}
      $\cpl{t}\tocc t'$\quad implies\quad $t'\tocc^*\cpl{t_0}$\; for   some~$t_0$ such that $t\tocc t_0$.
    \end{center}
  \end{enumerate}
\end{lemma}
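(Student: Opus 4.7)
The plan is to prove both parts by case analysis on the reduction rule applied in $\cpl{t}\to t'$, tracing the redex back to its origin in $t$ with the help of Fact~\ref{fact:red-cpl}. Before doing so, I would establish a preparatory lemma stating that case-completion commutes with substitution:
$$\cpl{t}\subs[y]{\cpl{u}} \;=\; \cpl{t\subs[y]{u}} \qquad\text{and}\qquad \cpl{`q}\subs[y]{\cpl{u}} \;=\; \cpl{`q\subs[y]{u}},$$
proved by a straightforward mutual induction on the size of the term and the case-binding. The only subtlety lies in the default branches $\cas[]{\co_1}$ inserted by completion: they are closed, hence invariant under substitution. This micro-lemma is needed only in the \rul{AppLam} case of part~\ref{it:cpl-red-lcm}.

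For part~\ref{it:cpl-red-lcm}, the key structural observation is that completion preserves the shapes of applications, $`l$-abstractions, and case/abstraction commutations, and modifies case-bindings only by inserting the closed, \lcm-normal branch $\cas[]{\co_1}$. Consequently, the only rule through which completion can create a \emph{new} redex in $\cpl{t}$ with no counterpart in $t$ is \rul{CaseCons}: namely $\cas[\cpl{`q}]{\co_i}$ is a \rul{CaseCons}-redex in $\cpl{t}$ even when $\co_i`;\dom{`q}$, although the corresponding subterm of $t$ is then a match failure rather than a redex. Here the defined hypothesis is used exactly once: it forbids $\co_i`;\dom{`q}$ at any position in $t$. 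For every rule of \lcm, the redex in $\cpl{t}$ at position~$p$ then corresponds to a redex in $t$ at the same position, and reducing the latter yields a term $t_0$ whose completion is precisely $t'$ (using the substitution-completion lemma for \rul{AppLam}; the other four verifications are immediate).

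For part~\ref{it:cpl-red-cc}, a \textsc{cc}-redex $\cas[\cpl{`q}]{\cas[\cpl{`f}]{\cpl{s}}}$ in $\cpl{t}$ at position~$p$ corresponds to a subterm $\cas{\cas[`f]{s}}$ of $t$, reducing in $t$ to $t_0$ with $\cas[`q`o`f]{s}$ at~$p$. After the reduction in $\cpl{t}$ one finds the case-binding $\cpl{`q}`o\cpl{`f}$ at~$p$, whereas $\cpl{t_0}$ carries $\cpl{`q`o`f}$ there. Comparing branch by branch: for $\co_i`:\dom{`f}$ both equal $\cas[\cpl{`q}]{\cpl{`f_{\co_i}}}$; for $\co_i`;\dom{`f}$ the former is $\cas[\cpl{`q}]{\cas[]{\co_1}}$ and the latter is $\cas[]{\co_1}$. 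A single \textsc{cc}-step contracts $\cas[\cpl{`q}]{\cas[]{\co_1}}\tocc\cas[\cpl{`q}`o\{\}]{\co_1}=\cas[]{\co_1}$, so exactly $n-|\dom{`f}|$ further \textsc{cc}-reductions applied inside $t'$ bring it to $\cpl{t_0}$, which accounts for the reflexive-transitive closure in the conclusion.

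The main obstacle I anticipate is not arithmetic but conceptual: correctly isolating the role of the defined hypothesis in part~\ref{it:cpl-red-lcm}, by observing that completion creates new redexes \emph{only} at match-failure positions and \emph{only} for \rul{CaseCons}, and making sure that this observation propagates through the inductive context (so that a defined subterm of $t$ corresponds to each reducible position in $\cpl{t}$). Once this analysis is separated from the rest, the remaining arguments are mechanical comparisons, the substitution-completion commutation being the only nontrivial algebraic identity used.
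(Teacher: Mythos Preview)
Your proposal is correct and follows essentially the same approach as the paper: structural induction (you phrase it via positions, the paper via the subterm relation), the observation that the inserted default branches $\cas[\ ]{\co_1}$ are normal so the only potentially new redexes are \rul{CaseCons} redexes at match-failure positions---ruled out by the defined hypothesis---and, for part~\ref{it:cpl-red-cc}, the same branch-by-branch comparison of $\cpl{`q}`o\cpl{`f}$ with $\cpl{`q`o`f}$. Your explicit substitution-commutation lemma for the \rul{AppLam} case is a detail the paper silently absorbs into ``trivial''; it is indeed needed and your justification is correct.
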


The rule \rul{CaseCase} does not have the same behaviour as the other rules \wrt\ case-completion, and requires a special attention.
It has been proved that the reduction rule \rul{CaseCase} forms a confluent~\cite[Theo.~1]{AAA06} and strongly normalising~\cite[Prop.~2]{AAA06} rewriting system.
So every \lc-term~$t$ has a unique normal form~\cnf{t} for the rule \rul{CaseCase}.
It is characterised by the following equations:
\begin{displaymath}
  \begin{array}[t]{r!{=}l!{\qquad}r!{=}c}
    \cnf x & x & 
    \cnf{\{\co_i \mapsto u_i\,/\, {\scriptstyle  i`:I}\}} &
    \{\co_i \mapsto \cnf{u_i}\,/\, {\scriptstyle i`:I}\} \\
    \cnf \co & \co & 
    \multicolumn{2}{l}{
      \text{If } t = x~|~\co~|~`lx.u~|~t_1t_2~,
      \text{ then}}\\
    \cnf{`lx.t} & `lx. \cnf t &
    \cnf{\cas{t}}& \cas[\cnf{`q}]{\cnf{t}} \\ 
    \cnf{(tu)} & \cnf t \cnf u 
    &
    \cnf{\big(\cas{\cas[`f]{t}}\big)} &
    \cnf{(\cas[`q`o`f]{t})}\\
  \end{array}
\end{displaymath}


\begin{lemma}
  \label{lem:cnf-cpl}
  Commutation case-completion/\rul{cc}-normal form\\
  For any term~$t$, $$\cnf{(\cpl{t})}=\cpl{\cnf{t}}~.$$
\end{lemma}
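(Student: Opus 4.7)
The plan is to use the strong normalisation and confluence of CC (both recalled just before this lemma in the paper), in combination with Lemma~\ref{lem:cpl-red}(\ref{it:cpl-red-cc}) and Fact~\ref{fact:red-cpl}. The key idea is that every CC-reduction starting from $\cpl{t}$ can be traced back, stepwise, to a CC-reduction starting from $t$, so the CC-normal form of $\cpl{t}$ must itself be the case-completion of some CC-reduct of $t$---and, by uniqueness of CC-normal forms, specifically the case-completion of $\cnf{t}$.

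Concretely, I would build a chain $t = t_0 \tocc t_1 \tocc t_2 \tocc \cdots$ maintaining the invariant that, at every stage, $\cpl{t_i} \tocc^* \cnf{\cpl{t}}$. At stage $i$, if $\cpl{t_i}$ already equals $\cnf{\cpl{t}}$ the construction halts; otherwise there is a first step $\cpl{t_i} \tocc s$ of the reduction to $\cnf{\cpl{t}}$, and Lemma~\ref{lem:cpl-red}(\ref{it:cpl-red-cc}) supplies a term $t_{i+1}$ with $t_i \tocc t_{i+1}$ and $s \tocc^* \cpl{t_{i+1}}$. Confluence of CC, applied to the two co-initial reductions from $s$ (one to the normal form $\cnf{\cpl{t}}$, the other to $\cpl{t_{i+1}}$), forces $\cpl{t_{i+1}} \tocc^* \cnf{\cpl{t}}$, preserving the invariant and enabling the next iteration.

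Strong normalisation of CC on $\terms$ guarantees that the chain $t_0 \tocc t_1 \tocc \cdots$ terminates at some $t^*$; by construction this forces $\cpl{t^*} = \cnf{\cpl{t}}$, for otherwise the construction could extract one more step. Since $\cpl{t^*}$ is then CC-normal, Fact~\ref{fact:red-cpl} gives that $t^*$ itself is CC-normal, and uniqueness of CC-normal forms (together with $t \tocc^* t^*$) identifies $t^*$ with $\cnf{t}$. Therefore $\cnf{\cpl{t}} = \cpl{t^*} = \cpl{\cnf{t}}$. The main obstacle I anticipate is the careful bookkeeping of this iterative lifting: invoking confluence at each step to re-establish $\cpl{t_i} \tocc^* \cnf{\cpl{t}}$, and relying on strong normalisation of CC on $\terms$ (rather than a direct structural induction, which breaks down because \rul{CaseCase} can duplicate an inner case-binding across the outer branches and so is not structure-preserving).
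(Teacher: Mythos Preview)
Your argument is correct and is essentially the paper's own proof, recast iteratively: the paper inducts on the length of a maximal \rul{CaseCase}-reduction from~$\cpl{t}$ (well-defined by strong normalisation), using Lemma~\ref{lem:cpl-red}(\ref{it:cpl-red-cc}), confluence, and Fact~\ref{fact:red-cpl} in exactly the same way you do. The only cosmetic difference is that the paper invokes strong normalisation on the $\cpl{t}$ side to justify the induction, whereas you invoke it on the $t$ side to terminate your chain; both are valid and the key ideas coincide.
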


\begin{lemma}
  \label{lem:lcm-cnf}
  For any terms~$t,t'$, if $t→_{\lcm}t'$ then there exists a term~$u$ such that
  \begin{displaymath}
    \cnf{t}~→_{\lcm}^*~u~\tocc^*~\cnf{t'}~.
  \end{displaymath}
\end{lemma}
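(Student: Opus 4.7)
I would prove the lemma by induction on the derivation of $t\to_{\lcm}t'$, which amounts to a structural induction on $t$ together with a case analysis on the rule applied at the contracted redex. Two preliminary observations drive the proof. First, the $\cnf{\cdot}$ operation commutes with every constructor of the calculus at the root, except for $\cas{\cdot}$: for instance $\cnf{(\lambda x.s)} = \lambda x.\cnf{s}$ and $\cnf{sv}=\cnf{s}\cnf{v}$, while $\cnf{\cas{s}}$ behaves specially when $s$ is itself a case construct, namely it collapses the outer and inner case via composition. Second, the only rule that ``clashes'' with this collapse is \rul{CA} and (dually) \rul{CL}: all other rules preserve the property that the redex can still be found at the expected position after normalising $\cc$.

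Before the main induction I would establish the auxiliary substitution lemma that for all $s$, $v$ and $x$,
\[
\cnf{s}\subs[x]{\cnf{v}}\ \tocc^*\ \cnf{s\subs[x]{v}}.
\]
This is needed for the \rul{AL} case and is proved by a routine structural induction on $s$: substitution can create new \rul{CC}-redexes only at a sub-occurrence $\cas{x}$ of $\cnf{s}$ when $\cnf{v}$ is itself a case construct, and such redexes are exactly those collapsed by $\cnf{\cdot}$ on the right-hand side.

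For the main induction, the cases when the redex lies strictly below the root are handled by applying the induction hypothesis inside the appropriate subterm; the only subtlety is in $\cas{t_0}$, where a step from $t_0$ to $t_0'$ may transform $t_0'$ into a case construct and thereby unlock a $\cc$-redex with the outer case---but this is readily absorbed into the trailing $\tocc^*$-segment. For the root cases:
\begin{itemize}
  \item \rul{AL}, \rul{LA}, \rul{CO}: the reduction commutes cleanly with $\cnf{\cdot}$. For \rul{AL}, $\cnf{t}=(\lambda x.\cnf{s})\cnf{v}\to_{\lcm}\cnf{s}\subs[x]{\cnf{v}}$ and the substitution lemma concludes via $\tocc^*$. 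For \rul{LA} and \rul{CO} the reduct reached is literally $\cnf{t'}$, with no $\cc$-step needed.
  \item \rul{CA} (and symmetrically \rul{CL}): write $t=\cas{(sv)}\to(\cas{s})v=t'$. If $s$ is not a case at the top then $\cnf{t}=\cas[\cnf{\theta}]{\cnf{s}\cnf{v}}\to_{\lcm}^{\rul{CA}}(\cas[\cnf{\theta}]{\cnf{s}})\cnf{v}=\cnf{t'}$. If $s=\cas[\phi]{s_0}$, then $\cnf{s}$ has been collapsed into a single case $\cas[\cnf{\psi}]{\cnf{s_0'}}$, and after applying \rul{CA} we find a \rul{CC}-redex between $\cnf{\theta}$ and the outer case of $\cnf{s}$; contracting it reproduces exactly the composition performed by $\cnf{\cas{s}}$ in $\cnf{t'}$.
\end{itemize}

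The main obstacle, as indicated above, is the interaction between \rul{CA} (or \rul{CL}) and the case-composition performed by $\cnf{\cdot}$. The bookkeeping is a bit tedious because the shape of $\cnf{\cas{s}}$ depends recursively on whether $s$ starts with a case, but once one observes that the extra $\cc$-steps needed after the $\lcm$-step are precisely those collapsing the freshly exposed stack of cases, the proof reduces to a systematic verification. All other cases are either immediate or direct applications of the induction hypothesis plus the substitution lemma.
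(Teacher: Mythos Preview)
Your proposal handles the root redexes correctly, and the substitution lemma for \rul{AL} is the right auxiliary fact. The gap is in the inductive scheme itself. You propose induction on the derivation of $t\to_{\lcm}t'$ (equivalently, structural induction on~$t$), and for the congruence cases you say one simply applies the induction hypothesis ``inside the appropriate subterm''. This breaks down precisely when $t=\cas{\cas[`f]{s}}$ and the contracted redex lies in the outer binding~$`q$. In that situation $\cnf{t}$ is \emph{not} $\cas[\cnf{`q}]{\cnf{\cas[`f]{s}}}$: the $\rul{cc}$-normalisation has already collapsed the nested cases, so that every branch of the resulting case binding has the form $\cnf{\cas{`j_{\co}}}$ with $`j$ the composite of the inner bindings. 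The single $\lcm$-step in~$`q$ is thus duplicated across all these branches, and to progress one needs the induction hypothesis on each $\cas{`j_{\co}}\to_{\lcm}\cas[`q']{`j_{\co}}$. But $\cas{`j_{\co}}$ is \emph{not} a subterm of~$t$; it arises only after performing some $\rul{cc}$-reductions. Your derivation-based induction gives no handle on it.

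The paper resolves this by replacing structural induction with induction on a bespoke size function $s(\cdot)$ satisfying both $s(\text{subterm})<s(\text{term})$ and $s(\cas[`q`o`f]{u})<s(\cas{\cas[`f]{u}})$. This second clause is exactly what lets one apply the hypothesis to $\cas{`j_{\co}}$ (which is a subterm of a $\rul{cc}$-reduct of~$t$, hence of strictly smaller measure). If you want to keep your overall plan, you need to strengthen the induction to such a measure; the remaining case analysis you sketched is then essentially correct.
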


\begin{corollary}
  \label{cor:cpl-red}
  If~$t$ is hereditarily defined, then for any~$t'$, 
  \begin{displaymath}
    \cpl{t}→^*t' \quad\text{implies}\quad
    \cnf{t'}=\cpl{t_0} \text{ for some } t_0
    \text{ such that }
    t→^*t_0~.
  \end{displaymath}
\end{corollary}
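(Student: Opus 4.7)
The plan is to proceed by induction on the length~$n$ of the reduction $\cpl{t}\to^* t'$, using the three lemmas about case-completion at our disposal. Along the way we will rely on the well-known fact that \textsc{cc}-reduction is confluent and strongly normalising (so every term has a unique \textsc{cc}-normal form, and two joinable terms have the same \textsc{cc}-normal form), and on the fact that hereditary definedness is preserved under reduction, so that every intermediate term~$t_i$ at which we aim Lemma~\ref{lem:cpl-red} remains defined.

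For the base case $n=0$, we have $t'=\cpl{t}$, and Lemma~\ref{lem:cnf-cpl} gives $\cnf{t'}=\cnf{\cpl{t}}=\cpl{\cnf{t}}$, so~$t_0=\cnf{t}$ works, since $t\tocc^*\cnf{t}$. For the inductive step, write $\cpl{t}\to^{n-1}u\to t'$; the IH yields some~$t_1$ with $t\to^*t_1$ and $\cnf{u}=\cpl{t_1}$, and $t_1$~is hereditarily defined. We now distinguish whether the last step is in~\lcm\ or is a \textsc{cc}-step.

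If $u\tocc t'$, then $\cnf{u}=\cnf{t'}$ by confluence of~\textsc{cc}, so $\cnf{t'}=\cpl{t_1}$ and we simply take $t_0=t_1$. If $u\to_{\lcm}t'$, Lemma~\ref{lem:lcm-cnf} gives $\cnf{u}\to_{\lcm}^* v \tocc^*\cnf{t'}$ for some~$v$. Thus $\cpl{t_1}\to_{\lcm}^* v$ and we apply Lemma~\ref{lem:cpl-red}(\ref{it:cpl-red-lcm}) step by step along this reduction. Each application is legitimate because, starting from the hereditarily defined term~$t_1$, every reduct stays hereditarily defined, hence defined. We obtain a term~$t_2$ with $t_1\to_{\lcm}^*t_2$ and $v=\cpl{t_2}$. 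Then $\cpl{t_2}\tocc^*\cnf{t'}$, and since $\cnf{t'}$ is \textsc{cc}-normal, \textsc{cc}-confluence forces $\cnf{t'}=\cnf{\cpl{t_2}}$, which by Lemma~\ref{lem:cnf-cpl} equals $\cpl{\cnf{t_2}}$. Setting $t_0=\cnf{t_2}$ gives $t\to^*t_1\to_{\lcm}^*t_2\tocc^*t_0$, completing the step.

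The main delicate point, and the real reason why we must invoke all three lemmas, is that case-completion does \emph{not} commute with a single \textsc{cc}-step on the nose: the \rul{CaseCase} rule composes case-bindings, so completing before or after a \textsc{cc}-step gives different (though \textsc{cc}-joinable) terms. This is precisely the gap handled by Lemma~\ref{lem:cpl-red}(\ref{it:cpl-red-cc}) and by collapsing the discrepancy via \textsc{cc}-normal forms through Lemma~\ref{lem:cnf-cpl}. The hypothesis that~$t$ is hereditarily defined is used crucially to be able to iterate Lemma~\ref{lem:cpl-red}(\ref{it:cpl-red-lcm}) along the simulated \lcm-reduction; without it a \rul{CaseCons} redex on an out-of-domain constructor could be fired on the completed side but would be stuck on the original side, breaking the simulation.
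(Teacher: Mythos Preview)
Your proof is correct and follows essentially the same route as the paper's: induction on the length of the reduction $\cpl{t}\to^*t'$, with the same case split on whether the last step is a \textsc{cc}-step or a $\lcm$-step, and the same chain Lemma~\ref{lem:lcm-cnf} $\Rightarrow$ iterated Lemma~\ref{lem:cpl-red}(\ref{it:cpl-red-lcm}) $\Rightarrow$ Lemma~\ref{lem:cnf-cpl} in the $\lcm$ case. Your exposition is in fact slightly more careful than the paper's in making explicit why the iteration of Lemma~\ref{lem:cpl-red}(\ref{it:cpl-red-lcm}) is licensed (hereditary definedness of every reduct of~$t_1$).
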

\begin{proof}
  By induction on the reduction~$\cpl{t}→^*t'$.\\
  If $\cpl{t}=t'$, take $t_0=~\cnf{t}$.
  Now assume $\cpl{t}→^*u→_{R}t'$.
  By induction hypothesis, there is some~$u_0$ such that   $\cnf{u}=\cpl{u_0}$ and $t→^*u_0$.
  If~$u$ reduces on~$t'$ with the rule~$R=\rul{CaseCase}$, then   $\cnf{t'}=\cnf{u}=\cpl{u_0}$, and $t_0=u_0$ does the job.
  Otherwise, $\cpl{t}→^*u→_{\lcm}t'$.
 
  \vspace{-15pt}
  \hfil
    \begin{tikzpicture}[baseline=(st.base)]
      \matrix[row sep=10pt,column sep=15pt]{
        \node (t) {\cpl{t}}; &&&& \node (u) {$u$}; && \node (t') {$t'$};\\
        &&&&&&\\
        &&&& \node (u0) {\cnf{u}=\cpl{u_0}}; & \node (u1) {\cpl{u_1}}; & \node (t0) {\cnf{t'}=\cpl{\cnf{u_1}}}; \\
        \node (st) {$t$}; &&&& \node (su0) {$u_0$}; & \node (su1) {$u_1$}; & \node (st0) {$\cnf{u_1}$};\\
      };
      \draw[->,to right] (t) to node[etoile] {} (u);
      \draw[->,to right] (u) to node[above,tiny m] {\lcm} (t');
      \draw[->,to right] (u0) to node[above,tiny m] {{\lcm}^*} (u1);
      \draw[->,to right] (u1) to node[above,tiny m] {\rul{cc}^*} (t0);
      \draw[->,dotted] (u) to node[m,left]{\rul{cc}^*}  (u0);
      \draw[->,dotted] (t') to node[m,right]{\rul{cc}^*} (t0);
      \draw[->,to right] (st) to node[etoile] {} (su0);
      \draw[->,to right] (su0) to node[etoile] {} (su1);
      \draw[->,to right] (su1) to node[at end, below,tiny m]       {\rul{cc}^*} (st0);
      \draw[->,serpent] (st) to (t);
      \draw[->,serpent] (su0) to (u0);
      \draw[->,serpent] (st0) to (t0);
    \end{tikzpicture}
    
    \noindent
    First of all, $u→_{\lcm}t'$ implies   $\cnf{u}~→_{\lcm}^*~u'~\tocc^*~\cnf{t'}$ for some~$u'$ (Lem.~\ref{lem:lcm-cnf}). \linebreak[4]
  Also $\cpl{u_0}~→_{\lcm}^*~u'$, and thus $u'=\cpl{u_1}$ for some term~$u_1$ such that $u_0→_{\lcm}^*u_1$ (Lem.~\ref{lem:cpl-red}.\ref{it:cpl-red-lcm}, since~$u_0$ is defined).
  Moreover, $\cpl{u_1}\tocc^*\cnf{t'}$ implies that~$\cnf{t'}$ is the \rul{CaseCase} normal form of~$\cpl{u_1}$.
  Hence $\cnf{t'}~=~\cnf{\cpl{u_1}}~=~\cpl{\cnf{u_1}}$ (by   Lem.~\ref{lem:cnf-cpl}).
  Also we can chose $t_0=\cnf{u_1}$.
  \qed
\end{proof}

Now we have all the ingredients we need to prove that the case-completion  preserves the \lc-equivalence on hereditarily defined terms.

\begin{proof} (of Prop.~\ref{prop:completion-eq}).
\\ \indent
  \begin{tikzpicture}[baseline=(tt1.base)]
  \matrix[row sep=10pt,column sep=10pt]{
    & \node (tt1) {\cpl{t_1}}; & \node (eq) {\eqlc}; &
    \node (tt2) {\cpl{t_2}}; & \\
    \node (t1) {$t_1$}; && \node (u) {$u$}; && \node (t2) {$t_2$};\\
    && \node (tu) {\cnf{u}=\cpl{u'}}; && \\
    && \node (u') {u'}; && \\
  };
  \draw[->] (tt1) to node[etoile,auto=right] {} (u);
  \draw[->] (tt2) to node[etoile] {} (u);
  \draw[->] (u) to node[left, tiny m] {\rul{cc}^*} (tu);
  \draw[->] (t1) to node[etoile,auto=right] {} (u');
  \draw[->] (t2) to node[etoile] {} (u');
  \draw[->,serpent] (t1) to (tt1);
  \draw[->,serpent] (t2) to (tt2);
  \draw[->,serpent] (u') to (tu);
\end{tikzpicture}
  \hfill
  \parbox[t]{0.5\linewidth}{
    Let~$t_1, t_2$ hereditarily defined such that  $\cpl{t_1}\eqlc\cpl{t_2}$.
    Since the \lc-calculus satisfies the Church-Rösser property, there  is a term~$u$ such that~$\cpl{t_1}→^*u$ and~$\cpl{t_2}→^*u$.

    Hence Cor.~\ref{cor:cpl-red} provides a term~$u'$ such that  $\cnf{u} = \cpl{u'}$, and $t_i→^*u'$ for each $i`:\{1,2\}$.
    Thus $t_1\eqlc u'\eqlc t_2$.
    \qed
  }
\end{proof}

Together with the explicit definition of the interpretation of a term in the PER-model, this gives the result of completeness of \lc-models for terms with no match failure.

\begin{corollary}[Completeness]
  \label{cor:compl-per}
  Let~$t_1$ and~$t_2$ be two hereditarily defined terms whose free variables are in~$`G=\{x_1,\dots,x_k\}$ such that $\itp{t_1}=\itp{t_2}$ in the syntactic model~\Ms, then~$t_1\eqlc t_2$.
\end{corollary}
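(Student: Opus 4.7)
}
The strategy is to unpack the categorical equality $\itp{t_1}=\itp{t_2}$ in the \PER-model $\Ms$ via Proposition~\ref{prop:def-itp}, reduce it to an equivalence $\cpl{t_1}\eqlc\cpl{t_2}$ between the case-completions, and then conclude by Proposition~\ref{prop:completion-eq}, which requires precisely the hereditarily-defined hypothesis.

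First, by Proposition~\ref{prop:def-itp} the interpretations are the equivalence classes
\begin{math}
\itp{t_i} \;=\; \class[D^k\to D]{`lx.\cpl{t_i}\subs[x_j]{\proj[k]{j}x}}
\end{math}
for $i=1,2$ (with $x$ fresh). The hypothesis $\itp{t_1}=\itp{t_2}$ therefore amounts to saying that the two representatives $s_i := `lx.\cpl{t_i}\subs[x_j]{\proj[k]{j}x}$ are related in the \per\ $D^k\to D$. To exploit this, I would apply both sides to the Church-style tuple $p := \pair[k]{x_1,\dots,x_k}$. Since $\proj[k]{j}\,p\eqlc x_j$ and each $x_j$ is $D$-related to itself, $p$ is $D^k$-accessible, so $s_1\,p$ and $s_2\,p$ are $D$-related, i.e.\ $s_1\,p\eqlc s_2\,p$.

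Next, $`b$-reducing each $s_i\,p$ and using the compatibility of $\eqlc$ with substitution together with $\proj[k]{j}\,p\eqlc x_j$, one obtains
\begin{displaymath}
s_i\,p \;\eqlc\; \cpl{t_i}\subs[x_j]{\proj[k]{j}p} \;\eqlc\; \cpl{t_i}\subs[x_j]{x_j} \;=\; \cpl{t_i}.
\end{displaymath}
Combining this with $s_1\,p\eqlc s_2\,p$ yields $\cpl{t_1}\eqlc\cpl{t_2}$. Since $t_1$ and $t_2$ are hereditarily defined, Proposition~\ref{prop:completion-eq} then gives exactly $t_1\eqlc t_2$, which is the desired conclusion.

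The routine part of this argument is the unpacking of the \per\ definition and the substitution manipulations; the genuine content is packed in the two propositions invoked. In particular, the hereditarily-defined hypothesis is indispensable only at the final step, where Proposition~\ref{prop:completion-eq} is used: this reflects precisely the caveat mentioned after Theorem~\ref{theo:compl} that different match-failures collapse to the same denotation \fail\ and thus cannot be distinguished without this hypothesis.
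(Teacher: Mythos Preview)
Your proposal is correct and follows essentially the same route as the paper's own proof: invoke Proposition~\ref{prop:def-itp} to identify the representatives, apply them to the Church tuple $\pair[k]{x_1,\dots,x_k}$ to extract $\cpl{t_1}\eqlc\cpl{t_2}$, and finish with Proposition~\ref{prop:completion-eq}. You are in fact slightly more explicit than the paper in checking that $p$ is $D^k$-accessible and in spelling out the $`b$/substitution step, but the argument is the same.
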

\begin{proof}
  By Prop.~\ref{prop:def-itp}, if $t_1$ and $t_2$ have the same interpretation in~\Ms, it means that
  \begin{displaymath}
    \class[D^k\to D]{`lx.\cpl{t_1}\subs[x_i]{\proj[k]{i}x}} =
    \class[D^k\to D]{`lx.\cpl{t_2}\subs[x_i]{\proj[k]{i}x}}~.
  \end{displaymath}
  Hence   $\rel{(`lx.\cpl{t_1}\subs[x_i]{\proj[k]{i}x})~\pair[k]{x_1,\dots,x_k}~}{~(`lx.\cpl{t_2}\subs[x_i]{\proj[k]{i}x})~\pair[k]{x_1,\dots,x_k}~}$.
  Since~$D$ is the \lc-equivalence relation on terms, it means that  $\cpl{t_1}\eqlc\cpl{t_2}$, which entails $t_1\eqlc t_2$ by   Prop.~\ref{prop:completion-eq}.
  \qed\end{proof}

\textit{A fortiori} if two hereditarily defined terms have the same  interpretation in \emph{any} \lc-model then they are \lc-equivalent, since~\Ms\ is a \lc-model.
This achieves the proof of Completeness theorem~(Theo.~\ref{theo:compl}).
\\

Notice that the separation theorem for the lambda calculus with constructors~\cite[Theo.~2]{AAA06} specifies that two hereditarily defined terms are either \lc-equivalent or (weakly) separable.
So any terms that can be separated by this syntactic lemma are also semantically distinguished by our definition of model.
However a slight modification of this definition could allow to semantically separate more terms.
If, instead of having one fail constant~\fail we had one for each constructor (say $\fail_1, fail_2$ \etc), we could ``complete'' a case binding with the corresponding fail constant in each undefined branch.
This would enable keeping track of the constructor that raises the match failure.
For instance, $\cas[\co_1\mapsto`lx.x]{\co_2}$ would be denoted by~$\fail_2$ and $\cas[\co_1\mapsto`lx.x]{\co_3}$ by~$\fail_3$.
Only terms like $\cas[\co_1\mapsto`lx.x]{\co_2}$ and $\cas[\co_3\mapsto`lx.xx]{\co_2}$ would not be semantically separated.


\section*{Conclusion}
\label{sec:concl}

We have defined a notion of categorical model for the lambda calculus with constructors that is reasonably complex:
in addition to the usual axioms of a \ccc, it involves three morphisms (or family of morphisms) and the commutation of six simple diagrams.
We have also proved that this  categorical model is complete for terms with no match failure.

Still, completeness does not hold for match failures.
This is due to the way we interpret the case-bindings.
Since the denotation we give to them is a point of~\pdt, it requires to ``fill'' artificially every undefined branch of a case-binding.
A way to cope with this problem could be to first identify the domain
$I``(=\llbracket 1..n\rrbracket$ of a case-binding $`q=\{\co_i\mapsto u_i/i`:I\}$, and interpret it by the point $(u_i)_{i`:I}$ of $D^{n_I}$ (where $n_I$ is the cardinal of~$I$).
The object that represents case-bindings would then be the sum (the dual notion of product) $\sum_{I``(=\llbracket 1..n\rrbracket}D^{n_I}$.
However, the definition loses its relative simplicity and some difficulties arise to define the case composition.

\paragraph{Future work}
A natural question is now to find some concrete instances of the categorical model.
The PER model is one, but it would be of great interest to have some non syntactic models.
We could try to adapt the historically first model of the pure lambda calculus~\cite{Scott70}.
However there is no reason for the usual Scott's~$D_{\infty}$ domain to satisfy the commutation of our diagrams.
A first step could be to find out a domain equation to characterise the lambda calculus with constructors, and then solve it with Scott's technique.

An other issue is to define a categorical model for the \textit{typed} \lc-calculus~\cite{Petit11}.
This type system is rather complex, basically because of the reduction rule \rul{CaseApp} that transforms a sub-term that is \textit{a priori} a function into a sub-term that is \textit{a priori} a data-structure.
To deal with this difficulty (and also to enable the typing of variadic constructors), the type syntax includes an application construct and the type system uses sub-typing.
Also defining a typed categorical model for the lambda calculus with constructors probably requires a categorical definition of this type application, and a way to express categorically this sub-typing relation.


\bibliographystyle{plain}
\bibliography{bibli}

\appendix
\newpage
\section{Proof of Soundness}
\label{prf:snd}

\textbf{Proposition \ref{prop:sound}.}
{\itshape
  If~$\M=(\C,D,\lam,\app,(\fc[{\co[c_i]}])^n_{i=1},\case,\fail)$ is a \lc-model, then for any~$`G=\{x_1,\dots,x_k\}$ and any terms~$t_1,t_2$ such that $\fv{t_1}``(=`G$ and $t_1→t_2$, the interpretation given in Fig.~\ref{fig:itp-cat} satisfies
  \begin{math}
    \itp{t_1}=\itp{t_2}
  \end{math}.
}
\begin{proof}
  Let~$t_1,t_2$ be two \lc-terms such that $t_1→t_2$.
  We prove by induction on the structure of~$t_1$ that for any~$`G$
  containing all free variables of~$t_1$, $\itp{t_1}=\itp{t_2}$.
  If the reduction does not involve a head redex, we immediately
  conclude with induction hypothesis.
  So we consider all possible reductions in head position:
  \begin{itemize}
  \item $t_1=(`lx.t)\,u$\ \ and\ \ $t_2=t\subs{u}$.\\
    \itp{t_1} =
    \begin{tikzpicture}[baseline=(a.base)]
      \node (a) {$D^k$};
      \node[right=2.5 of a] (b) {$D`*D$};
      \node[right=1.5 of b] (c) {$D^D`*D$};
      \node[right= of c] (d) {$D$};
      \draw[->] (a) to 
      node[above] {$\scriptstyle\tuple{(`L(f_t);\lam),\itp{u}}$}(b);
      \draw[->] (b) to 
      node[above] {$\scriptstyle\app`*\id[D]$} (c);
      \draw[->] (c) to 
      node[above] {$\scriptstyle\ev$} (d);
    \end{tikzpicture}\\
    with $f_t=$
    \begin{tikzpicture}[baseline=(a.base)]
      \node (a) {$D^k`*D$}; 
      \node[right= 0.5 of a] (b) {$D^{k+1}$}; 
      \node[right= of b] (c) {$D$};
      \draw [->] (a) to 
      node[above] {\eq} (b);
      \draw [->] (b) to 
      node[above] {$\scriptstyle \itp[`G,x]{t}$} (c);
    \end{tikzpicture}~.
    Thus 
    \begin{displaymath}
      \begin{array}[t]{cclr}
        \itp{t_1} &=&
        \tuple{\id[D],\itp{u}}~;~(`L(f_t);\lam;\app)`*\id[D]~;~\ev 
        & \\
        &=&
        \tuple{\id[D],\itp{u}}~;~`L(f_t)`*\id[D]~;~\ev 
        & \drefRefl \\
        &=& \tuple{\id[D],\itp{u}}~;~f_t &
        (\text{Def. of exponential})\\
        &=& \itp{t\subs{u}}&(\text{Lem.~\ref{lem:subst}})
      \end{array}
    \end{displaymath}
  \item $t_1=`lx.tx$ (with $x`;\fv{t}$) and $t_2=t$. Then
    $\itp{t_1}=`L(f_{tx})~;~\lam$ \\
    where $f_{tx}=$
    \begin{tikzpicture}[baseline=(a.base)]
      \node (a) {$D^k`*D$}; 
      \node[right=0.5 of a] (b) {$D^{k+1}$}; 
      \node[right=1.5 of b] (c) {$D`*D$}; 
      \node[right= of c] (d) {$D^D`*D$}; 
      \node[right=0.5 of d] (e) {$D$};
      \draw [->] (a) to 
      node[above] {\eq} (b);
      \draw [->] (b) to 
      node[above]{$\scriptstyle\tuple{\itp[`G,x]{t},\itp[`G,x]{x}}$} (c);
      \draw [->] (c) to 
      node[above] {$\scriptstyle \app`*\id[D]$} (d);
      \draw [->] (d) to 
      node[above] {$\scriptstyle \ev$} (e);
    \end{tikzpicture}~.\\
    But $x`;\fv{t}$ implies
    $\itp[`G,x]{t}=\tuple{\proj[k+1]{1},\dots,\proj[k+1]{k}}\;;\;\itp{t}$
    by weakening property (Lem.~\ref{lem:ctx-chg}), and
    $\itp[`G,x]{x}=\proj[k+1]{k+1}$.\\
    So 
    $f_{tx} =$
    \begin{tikzpicture}[baseline=(a.base)]
      \node (a) {$D^k`*D$}; 
      \node[right=0.5 of a] (b) {$D^{k+1}$}; 
      \node[right=3 of b] (c) {$D^k`*D$}; 
      \node[right=2 of c] (d) {$D^D`*D$}; 
      \node[right=0.5 of d] (e) {$D$};
      \draw [->] (a) to 
      node[above] {\eq} (b);
      \draw [->] (b) to node[above] {$\scriptstyle\tuple{
          \tuple{\proj[k+1]{1},\dots,\proj[k+1]{k}},\proj[k+1]{k+1}}$}   (c);
      \draw [->] (c) to 
      node[above] {$\scriptstyle (\itp{t};\,\app)`*\id[D]$} (d);
      \draw [->] (d) to 
      node[above] {$\scriptstyle \ev$} (e);
      \draw[ad] (a) to
      node [above] {$\scriptstyle \id[D^k`*D]$} (c);
    \end{tikzpicture}.\\
    By uniqueness of the exponential, $`L(f_{tx}) = \itp{t};\,\app$, and
    $\itp{t_1}=\itp{t};\,\app;\,\lam=\itp{t}$ by~\drefRefl.
  \item $t_1=\cas{\co[c_i]}$ and $t_2=u_i$,\quad
    where $`q=\{\co[c_j]\mapsto u_j/{j`:J}\}$, with 
    $J``(=\llbracket 1..n\rrbracket$.\\
    Then  $\itp{t_1}=\tuple{\,\tuple{f_1,\dots,f_n}\,,\,\itp{\co[c_i]}\,}~;~\case$
    \quad with $f_j$ = 
    \begin{math}
      \left\{
        \begin{array}[c]{l}
          \itp{u_j}\qquad\text{if }j`:J\\
          \term[D^k];\fail\qquad\text{otherwise}
        \end{array}
      \right.
    \end{math}\\
    and $\itp{\co[c_i]}=\term[D^k];\fc[c_i]$.\\
    The following diagram commutes:
    \begin{tikzpicture}[baseline=(a0.base)]
      \matrix[anchor=south,column sep=4em,row sep=2em]
      {
        \node (a0) {$D^k$}; & 
        \node (a1) {$\pdt`*\uno$}; &
        \node (a2) {$\pdt`*D$};\\
        & \node (b1) {$\pdt$}; &
        \node (b2) {$D$};\\
      };
      \draw[->] (a0) to node[above] {
        $\scriptstyle\tuple{\,\tuple{f_1,\dots,f_n}\,,\,\term[D^k]\,}$
      } (a1);
      \draw[->] (a1) to node[above] 
      {$\scriptstyle \id[\pdt]`*\fc[c_i]$} (a2);
      \draw[->] (a0) to node[below left] 
      {$\scriptstyle \tuple{f_1,\dots\,,f_n}$} (b1);
      \draw[->] (b1) to node[below] 
      {$\scriptstyle\proj[n]{i}$} (b2);
      \draw[->] (a1) to node[left] {$\scriptstyle\eq$} (b1);
      \draw[->] (a2) to node[right] {$\scriptstyle\case$} (b2);
      \path (a1) to node {\drefCO} (b2);
    \end{tikzpicture},\\
    so $\itp{t_1}~=~\tuple{f_1,\dots\,,f_n}\;;\;\proj[n]{i}
    ~=~f_i~=~\itp{u_i}$.
  \item $t_1=\cas{(tu)}$\quad and\quad $t_2=(\cas{t})~u$.\\
    \begin{math}
      \itp{t_1} = \tuple{\,\itp{`q}\,,\,\itp{tu}}~;~\case
    \end{math}\quad
    with \; $\itp{tu}=\tuple{\itp{t},\itp{u}}~;~(\app`*\id[D])~;~\ev$\\
    \begin{math}
      \itp{t_2} = \tuple{
        \big( \tuple{\,\itp{`q}\,,\,\itp{t}}~;~\case \big)\,,\,
        \itp{u}
      }~;~(\app`*\id[D]);\ev
    \end{math}\\
    So \itp{t_1}~=~\itp{t_2} because the following diagram commutes:

    \begin{tikzpicture}
      \matrix[anchor=south,column sep=3em,row sep=1.5em]
      { & 
        \node (d) {$\pdt`*(D`*D)$}; &
        \node (d1){$\pdt`*(D^{D}`*D)$}; &
        \node (d2){$\pdt`*D$}; & \\
        \node (i) {$D^k$}; &&
        \node {\drefCA}; &&
        \node (f) {$D$}; \\ &
        \node (g){$(\pdt`*D)`*D$}; &
        \node (g1) {$D`*D$}; &
        \node (g2) {$D^D`*D$}; &\\
      };
      \draw [->](i) to
      node[above left] {$\scriptstyle\tuple{\,
          \itp{`q}\,,\,\tuple{\itp{t},\itp{u}}}$} (d);
      \draw [->](i) to
      node[below left] {$\scriptstyle\tuple{\,\tuple{
            \itp{`q}\,,\,\tuple{\itp{t}}},\itp{u}}$} (g);
      \draw [<->] (g) to
      node [label=left:{$\scriptstyle\eq$}] {} (d);
      \draw [->] (d) to node[label=above:{$\scriptstyle\id[]`*(\app`*\id[])$}] 
      {} (d1);
      \draw [->] (d1) to
      node [label=above:{$\scriptstyle\id[]`*\ev$}] {} (d2);
      \draw [->] (d2) to
      node [label=above right:{$\scriptstyle\case$}] {} (f);
      \draw [->] (g) to
      node [label=below:{$\scriptstyle\case`*\id[]$}] {} (g1);
      \draw [->] (g1) to
      node [label=below:{$\scriptstyle\app`*\id[]$}] {} (g2);
      \draw [->] (g2) to
      node [label=below right:{$\scriptstyle\ev$}] {} (f);
    \end{tikzpicture}

  \item $t_1=\cas{`lx.t}$ and $t_2=`lx.\cas{t}$\quad
    with $x`;\fv{`q}$.\\
    \begin{math}
      \itp{t_1}~= \tuple{\itp{`q},(`L(f_t);\lam)}~;~\case
    \end{math}\quad
    with $f_t=$
    \begin{tikzpicture}[baseline=(a.base)]
      \node (a) {$D^k`*D$}; 
      \node[right= 0.5 of a] (b) {$D^{k+1}$}; 
      \node[right= of b] (c) {$D$};
      \draw [->] (a) to 
      node[above] {\eq} (b);
      \draw [->] (b) to 
      node[above] {$\scriptstyle \itp[`G,x]{t}$} (c);
    \end{tikzpicture}, and
    \\
    \begin{math}
      \itp{t_2}~= `L(f_{\cas{t}});\lam
    \end{math}\quad
    with $f_{\cas{t}}=$
    \begin{tikzpicture}[baseline=(a.base)]
      \node (a) {$D^k`*D$}; 
      \node[right= 0.5 of a] (b) {$D^{k+1}$}; 
      \node[right= 1.2 of b] (c) {$\pdt`*D$}; 
      \node[right= 0.7 of c] (d) {$D$};
      \draw [->] (a) to 
      node[above] {\eq} (b);
      \draw [->] (b) to 
      node[above, tiny m] {\tuple{\itp[`G,x]{`q},\itp[`G,x]{t}}}(c);
      \draw [->] (c) to 
      node[above, tiny m] {\case} (d);
    \end{tikzpicture}.
    \\
    So
    \begin{math}
      \begin{array}[t]{c!{=}l!{\qquad}r}
        \itp{t_1} & \tuple{\itp{`q},(`L(f_t);\lam)}~;~\case & \\
        & \tuple{\itp{`q},`L(f_t)}~;~(\id[\pdt]`*\lam)~;~\case & \\
        & \tuple{\itp{`q},`L(f_t)}~;~\abstr{\case}~;~\lam &
        \text{by}~\drefCL\\
      \end{array}
    \end{math}
    
    Hence $\itp{t_1}=\itp{t_2}$ if
    \begin{math}
      \tuple{\itp{`q},`L(f_t)}~;~\abstr{\case} = `L(f_{\cas{t}})
    \end{math}.
    
    Remember that 
    $\abstr{\case}=`L(f_{\case})$, with\\
    $f_{\case}=$
    \begin{tikzpicture}[vcenter]
      \matrix[column sep=12pt]{
        \node[m] (a) {(\pdt`*D^{D})`*D}; &
        \node[m] (b) {\pdt`*(D^{D}`*D)}; &&&
        \node[m] (c) {\pdt`*D}; &&
        \node[m] (d) {D}; \\
      };
      \draw[->] (a) to node[above,tiny m] {\eq} (b);
      \draw[->] (b) to node[above,tiny m] {\id[\pdt]`*\ev} (c);
      \draw[->] (c) to node[above,tiny m] {\case} (d);
    \end{tikzpicture}.
    To simplify this equation, we use this intermediate lemma (that follows from the uniqueness of exponent).
    \begin{lemma}
      In any \ccc, given four objects~$A,B,C$ and~$C'$, and three morphisms $g:C`*A→B$, $g':C'`*A→B$ and $h:C→C'$, $$`L(g)=h;`L(g') \iff g=(h`*\id[A]);g'~.$$
    \end{lemma}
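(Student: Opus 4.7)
The plan is to invoke the universal property of exponentials in both directions, using that $- \times \id_A$ is functorial so that $(f;f') \times \id_A = (f \times \id_A);(f' \times \id_A)$.

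First I would recall the defining property of the curried form: for any morphism $k:X \times A \to B$, $\Lambda(k)$ is the unique morphism $X \to B^A$ satisfying $(\Lambda(k) \times \id_A);\ev = k$. In particular, $(\Lambda(g) \times \id_A);\ev = g$ and $(\Lambda(g') \times \id_A);\ev = g'$.

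For the ``$\Rightarrow$'' direction, assuming $\Lambda(g) = h;\Lambda(g')$, I would compute
\[
  g \;=\; (\Lambda(g) \times \id_A);\ev \;=\; ((h;\Lambda(g')) \times \id_A);\ev \;=\; (h \times \id_A);(\Lambda(g') \times \id_A);\ev \;=\; (h \times \id_A);g',
\]
where the third equality uses functoriality of the product.

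For the ``$\Leftarrow$'' direction, assuming $g = (h \times \id_A);g'$, I would rewrite
\[
  g \;=\; (h \times \id_A);(\Lambda(g') \times \id_A);\ev \;=\; ((h;\Lambda(g')) \times \id_A);\ev,
\]
again by functoriality. Since $\Lambda(g)$ is the unique morphism $C \to B^A$ satisfying $(\Lambda(g) \times \id_A);\ev = g$, it follows that $\Lambda(g) = h;\Lambda(g')$.

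There is no real obstacle here: the lemma is entirely a consequence of the universal property of the exponential and functoriality of $\_ \times \id_A$, both available in any Cartesian closed category. The only bookkeeping is to be careful about associativity of $\times$ and to note that both directions rely on the same pair of equations, read in opposite orders.
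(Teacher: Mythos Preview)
Your proof is correct and is precisely the argument the paper has in mind: the paper does not spell out a proof at all, merely noting that the lemma ``follows from the uniqueness of exponent,'' and your two computations are exactly the standard unpacking of that remark.
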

    
    Thus $\itp{t_1}=\itp{t_2}$ 
    \quad if\quad
    \begin{math}
      (\tuple{\itp{`q},`L(f_t)}`*\id[d])~;~f_{\case} = f_{\cas{t}}
    \end{math}.
    
    Remark that 
    \begin{math}
      (\tuple{\itp{`q},`L(f_t)}`*\id[d])~;~f_{\case} =
      \text{lhs}~;~\case
    \end{math}, with\\
    lhs\hspace{-5pt}
    \begin{tikzpicture}[baseline=(a1.base)]
      \matrix[column sep=25pt]{
        \node[tiny m] (a1) {=~D^k\!`*D}; &&&&
        \node[tiny m] (c1) {\pdt\!`*(D^{D}\!`*D)}; &&
        \node[tiny m] (d1) {\pdt\!`*D}; \\
        \node[tiny m] (a2) {=~D^k\!`*D}; &&
        \node[tiny m] (b2) {\pdt\!`*(D^{k}\!`*D)}; &&
        \node[tiny m] (c2) {\pdt\!`*(D^{D}\!`*D)}; &&
        \node[tiny m] (d2) {\pdt\!`*D}; \\
        \node[tiny m] (a3) {=~D^k\!`*D}; &&
        \node[tiny m] (b3) {\pdt\!`*(D^{k}\!`*D)}; &&&&
        \node[tiny m] (d3) {\pdt\!`*D}; \\
      };
      \node[tiny m] (b1) at ($(a1)!0.6!(c1)$) {(\pdt\!`*D^{D})\!`*D};
      \draw[->,to right] (a1) to node[above,tiny m]
      {\tuple{\itp{`q},`L(f_t)}\!`*\!\id[D]} (b1);
      \draw[->,to right] (b1) to node[above,tiny m] {\eq} (c1);
      \draw[->,to right] (c1) to node[above,tiny m]
      {\id[\pdt]\!`*\ev} (d1);
      \draw[->,to right] (a2) to node[above,tiny m]
      {\tuple{(\proj{1}\,;\,\itp{`q}),\id[]}} (b2);
      \draw[->,to right] (b2) to node[above,tiny m] 
      {\id[\pdt]\!`*\!(`L(f_t)\!`*\!\id[D])} (c2);
      \draw[->,to right] (c2) to node[above,tiny m]
      {\id[\pdt]\!`*\ev}  (d2);
      \draw[->,to right] (a3) to node[above,tiny m]
      {\tuple{(\proj{1}\,;\,\itp{`q}),\id[]}} (b3);
      \draw[->,to right] (b3) to node[above,tiny m] 
      {\id[\pdt]`*f_t} (d3);
    \end{tikzpicture}

    On the other hand, $f_{\cas{t}}=\text{rhs}~;~\case$, with\\
    rhs\hspace{-5pt}
    \begin{tikzpicture}[baseline=(a1.base)]
      \matrix[column sep=12pt]{
        \node[tiny m] (a1) {=~D^k\!`*\!D}; &
        \node[tiny m] (b1) {D^{k+1}}; &
        \node[tiny m] (c1) {D^{k+1}`*D^{k+1}}; &&&&&&
        \node[tiny m] (e1) {\pdt\!`*\!D}; &\\
        \node[tiny m] (a2) {=~D^k\!`*\!D}; &
        \node[tiny m] (b2) {D^{k+1}}; &
        \node[tiny m] (c2) {D^{k+1}`*D^{k+1}}; &&&
        \node[tiny m] (d2) {D^{k}`*D^{k+1}}; &&&
        \node[tiny m] (e2) {\pdt\!`*\!D}; &
        \node (f2) {\hspace{-15pt}\tiny (Lem.~\ref{lem:ctx-chg})};\\
        \node[tiny m] (a3) {=~D^k\!`*\!D}; &&
        \node[tiny m] (b3) {(D^k\!`*\!D)`*(D^k\!`*\!D)}; &&&
        \node[tiny m] (d3) {D^{k}`*D^{k+1}}; &&&
        \node[tiny m] (e3) {\pdt\!`*\!D}; &\\
        \node[tiny m] (a4) {=~D^k\!`*\!D}; &&
        \node[tiny m] (b4) {(D^k\!`*\!D)`*(D^k\!`*\!D)}; &&&&&&
        \node[tiny m] (e4) {\pdt\!`*\!D}; &\\
      };
      \draw[->,to right] (a1) to node[above,tiny m] {\eq} (b1);
      \draw[->,to right] (b1) to node[above,tiny m]{\tuple{\id[],\id[]}} (c1);
      \draw[->,to right] (c1) to node[above,tiny m]{\itp[`G,x]{`q}`*\itp[`G,x]{t}} (e1);
      
      \draw[->,to right] (a2) to node[above,tiny m] {\eq} (b2);
      \draw[->,to right] (b2) to node[above,tiny m]{\tuple{\id[],\id[]}} (c2);
      \draw[->,to right] (c2) to node[above,tiny m]{\tuple{...,\proj[k+1]{k}}`*\id[]} (d2);
      \draw[->,to right] (d2) to node[above,tiny m]{\itp{`q}`*\itp[`G,x]{t}} (e2);
      
      \draw[->,to right] (a3) to node[above,tiny m]{\tuple{\id[],\id[]}} (b3);
      \draw[->,to right] (b3) to node[above,tiny m] {\proj{1}`*\eq}(d3);
      \draw[->,to right] (d3) to node[above,tiny m]{\itp{`q}`*\itp[`G,x]{t}} (e3);
      \draw[->,to right] (a4) to node[above,tiny m]{\tuple{\id[],\id[]}} (b4);
      \draw[->,to right] (b4) to node[above,tiny m]{(\proj{1}\,;\,\itp{`q})`*f_t} (e4);
    \end{tikzpicture}

    Finally
    \begin{math}
      \text{rhs} = \text{lhs} = 
      \tuple{(\proj{1}\,;\,\itp{`q})~,~f_t}
    \end{math},\quad
    and so \itp{t_1}=\itp{t_2}.
  \item $t_1=\cas{\cas[`f]{t}}$\; and\; $t_2=\cas[`q`o`f]{t}$.\\
    \begin{math}
      \itp{t_1}~=   \big(\tuple{\itp{`q},\tuple{\itp{`f},\itp{t}}}\big)~;~
      (\id[\pdt]`*\case)~;~\case
    \end{math},\quad
    and\\
    \begin{math}
      \itp{t_2}=\big(\tuple{\itp{`q`o`f},\itp{t}}\big)~;~\case
    \end{math}.\\
    Both terms have the same interpretation if the following diagram  commute:
    \begin{center}
      \begin{tikzpicture}[baseline=(a.base)]
        \matrix[column sep=2em,row sep=3em]{&
          \node (b) {$\pdt`*(\pdt`*D)$}; &&
          \node (c) {$\pdt`*D$}; & \\
          \node (a) {$D^k$}; &&
          \node (e) {$(\pdt`*\pdt)`*D$}; &&
          \node (d) {$D$}; \\
          && \node (f) {$\pdt`*D$}; &&\\
        };
        \draw[->] (a) to node[above left] {$\scriptstyle
          \tuple{\itp{`q},\tuple{\itp{`f},\itp{t}}}$} (b);
        \draw[->,to right] (b) to node[above] {$\scriptstyle
          \id[\pdt]`*\case$} (c);
        \draw[->] (c) to node[above right] {$\scriptstyle\case$}(d);
        \draw[->] (a) to node[below left] {$\scriptstyle
          \tuple{\itp{`q`o`f},\itp{t}}$} (f);
        \draw[->] (f) to node[below right] {$\scriptstyle\case$}(d);
        \draw[<->] (b) to node[above right] {$\scriptstyle\eq$} (e);
        \draw[->] (e) to node[right] {$\scriptstyle\comp`*\id[D]$} (f);
        \draw[->,to right] (a) to node[above] {$\scriptstyle
          \tuple{\tuple{\itp{`q},\itp{`f}},\itp{t}}$} (e);
        \path (e) to node (D) {\drefCC} (c);
        \node at (D -| b) {\LARGE$\circlearrowright$};
        \node at ($(b)!0.75!(f)$) {\small(Lem.~\ref{lem:cat-cc})};
      \end{tikzpicture}
    \end{center}
    The upper triangle commutes by uniqueness of the product,
    the triangle below commutes if~\drefFail\ commutes (consequence of Lem.~\ref{lem:cat-cc}), and the right part of the diagram is  exactly~\drefCC.
    Also the interpretation is correct \wrt\ {\sc CaseCase}  if~\drefCC\  and~\drefFail\ commute.
\qed
  \end{itemize}
\end{proof}

\section{Proofs for Completeness}
\label{prf:cpl}

\subsection{Some properties of \PER.}
\label{prf:cpl-per}

\noindent
\textbf{Lemma \ref{lem:refl-per}.}
{\itshape
    Let~\D\ be the object~\eqlc\ in~\PER. Then $D=D^D$.
}
\begin{proof}
  \begin{description}
  \item[$\subseteq$:] If~$\rel{t}{t'}$, then $\rel{u}{u'}$ implies
    $\rel{tu}{t'u'}$ by definition of~$D$.
    This means~$\rel[D^D]{t}{t'}$
  \item[$\supseteq$:] Assume $\rel[D^D]{t}{t'}$, and choose~$x$ not free in~$t$ nor~$t'$.
    Since $\rel{x}{x}$, then $\rel{tx}{t'x}$.
    So $\rel{`lx.tx}{`lx.t'x}$ by contextual closure, and $\rel{t}{t'}$ by~\rul{LamApp}.
    \qed
  \end{description}
\end{proof}


\noindent
\textbf{Proposition \ref{prop:ms-model}.}
{\itshape
  Let $\Ms=(\PER,D,\id[D],\id[D],
  (\fc_i)_{1\leq i\leq n},\case,\fail)$,
  where:
  \begin{itemize}
  \item $D$ is the relation~\eqlc.
  \item given $\co$ a constructor, \fc\ is \class[\uno\D]{`lx.\co}.
  \item \case\ is \class[(\pdt`*D)\to D]{`lx.\cas[{(\co[c_i]\mapsto
        \proj[n]{i}(\proj{1}x))_{1\leq i\leq n}}]{\proj{2}x}}.
  \item \fail\ is \class[\uno\to D]{`lx.\cas[\ ]{\co[c_1]}}.
  \end{itemize}
  \Ms\ is a \lc-model.
}
\begin{proof}
  \PER\ is a Cartesian closed category by Prop.~\ref{prop:per-ccc}, and $\id[\D]$ is an isomorphism from $\D$ to~$D^\D$ by Lem.~\ref{lem:refl-per}.
  We first check that the morphisms are well-defined:
  \begin{itemize}
  \item $\fc`:\dom{\uno→D}$ for each~constructor~\co.
    Indeed, for any terms $u,u'$,
    \linebreak[4]
    \begin{math}
      (`lx.\co)~u~\eqlc~\co~\eqlc~(`lx.\co)~u'
    \end{math}.
    Hence \rel[\uno→D]{`lx.\co}{`lx.\co}.
    In the same way, $\fail`:\dom{\uno→D}$.
  \item $\case`:\dom{\pdt`*\D→\D}$ since
    \begin{math}
      `lx.\cas[{(\co_i\mapsto
        \proj[n]{i}(\proj{1}x))_{i=1}^n}]{\proj{2}x}
      `: (\pdt`*D)\to D
    \end{math}.
    Indeed, let $\rel[(\pdt`*D)]{t}{u}$.
    By definition, \rel[D]{\proj[n]{i}(\proj{1}t)}{\proj[n]{i}(\proj{1}u)}, and \rel[D]{\proj{2}t}{\proj{2}u}.
    Thus\\
    \begin{math}
      \begin{array}[t]{c!{\eqlc}l}
        \big( `lx.\cas[{(\co[c_i]\mapsto
          \proj[n]{i}(\proj{1}x))_{i=1}^n}]{\proj{2}x} \big)t
        & \cas[{(\co[c_i]\mapsto
          \proj[n]{i}(\proj{1}t))_{i=1}^n}]{\proj{2}~t}\\
        &\cas[{(\co[c_i]\mapsto
          \proj[n]{i}(\proj{1}u))_{i=1}^n}]{\proj{2}~u}\\
        & \big( `lx.\cas[{(\co[c_i]\mapsto
          \proj[n]{i}(\proj{1}x))_{i=1}^n}]{\proj{2}x} \big)u
      \end{array}
    \end{math}
  \end{itemize}
  Finally by Prop.~\ref{prop:equiv-cacl} it is sufficient to show that the diagrams \drefRefl,  \drefCO, \drefCA, \drefCC\ and \drefFail\ of Fig.~\ref{fig:com-diag} commute.
  For~\drefRefl\ it is obvious with $\lam=\app=\id[D]$.
  We show the commutation porperty for the other diagram.
  \begin{description}
  \item[{\drefCO:}]
    We show that $rhs~=~\proj[n]{i}$, where 
    rhs~=~$h_{\eq}~;~(\id[\pdt]`*\fc[\co_i])~;~\case$ 
    (with $h_{\eq}=\class[\pdt→\pdt`*\uno]{`lx.\pair{x,x}}$).
    Notice that 
    \begin{math}
      (\id[\pdt]`*\fc[\co_i]) =
      \class{`lx.\pair{\proj{1}x,(`lx.\co_i)(\proj{2}x)}}
    \end{math}~.
    We simplify~rhs, considering terms up to     \lc-equivalence~\eqref{eq:compat-class}.\\
    \begin{math}
      \begin{array}[t]{c!{=}l!}
        \text{rhs} & \class[\pdt→D]{`lz.t_{\case}\big(
          (`lx.\pair{\proj{1}x,(`lx.\co_i)x})~((`lx.\pair{x,x})z)            \big)} \\
        & \class[\pdt→D]{`lz.t_{\case}\big(
          \pair{\proj{1}{\pair{z,z}},(`lx.\co_i)(\proj{2}\pair{z,z})}
          \big)} \\
        & \class[\pdt→D]{`lz.t_{\case}(\pair{z,\co_i})} \\
        & \class[\pdt→D]{`lz.
          \cas[{(\co_i\mapsto\proj[n]{i}(\proj{1}\pair{z,\co_i})
            )_{i=1}^n}]{\proj{2}\pair{z,\co_i}}
        } \\
        & \class[\pdt→D]{`lz.
          \cas[{(\co_i\mapsto\proj[n]{i}(\proj{1}\pair{z,\co_i})
            )_{i=1}^n}]{\proj{2}\pair{z,\co_i}}
        } \\
        & \class[\pdt→D]{`lz.
          \cas[{(\co_i\mapsto\proj[n]{i}~z)_{i=1}^n}]{\co_i}} \\
        & \class[\pdt→D]{`lz.\proj[n]{i}~z} \qquad
        \text{by \rul{CaseCons}}\\
        & \proj[n]{i} \\
      \end{array}
    \end{math}
  \item[{\drefCA:}]
    We show that lhs~=~rhs, where
    \begin{math}
      \text{lhs} = (\case`*\id[D])~;~(\app`*\id[D])~;~\ev
    \end{math}, \\
    and 
    \begin{math}
      \text{rhs} = h_{\eq}~;~(\id[\pdt]`*(\app`*\id[D]))~;~
      (\id[\pdt]`*\ev);\case
    \end{math},
    with 
    
    \hfil
    $h_{\eq}=\class[(\pdt`*D)`*D→\pdt`*(D`*D)]{`lx.
      \pair{\proj{1}(\proj{1}x),\pair{\proj{2}(\proj{1}x),\proj{2}x}}}$.
    \\
    Notice that $\app`*\id[D]=\id[D`*D]$, so
    \begin{math}
      \text{lhs} = (\case`*\id[D])~;~\ev
    \end{math}, 
    and 

    \hfill
    \begin{math}
      \text{rhs} = h_{\eq}~;~(\id[\pdt]`*\ev);\case
    \end{math}.
    
    \begin{math}
      \begin{array}[t]{c!{=}l}
        \text{lhs} & \class{`lz.
          (`lx.(\proj{1}x)(\proj{2}x))~
          \big((`lx.\pair{t_{\case}(\proj{1}x),\proj{2}x})z\big)}\\
        &\class{`lz.(`lx.(\proj{1}x)(\proj{2}x))~
          \pair{t_{\case}(\proj{1}z),\proj{2}z}}\\
        & \class{`lz.(t_{\case}(\proj{1}z))~(\proj{2}z)}\\
        & \class{`lz.\big(
          \cas[{(\co_i\mapsto
            \proj[n]{i}(\proj{1}(\proj{1}z)))_{i=1}^n}]{\proj{2}(\proj{1}z)}\big)
          (\proj{2}z)}\\
      \end{array}
    \end{math}

    \begin{math}
      \begin{array}[t]{c!{=}l}
        \text{rhs} & \class{`lz. t_{\case}~
          (`ly.\pair{\proj{1}y,(`lx.(\proj{1}x)(\proj{2}x))(\proj{2}y)})~
          ((`lx.
          \pair{\proj{1}(\proj{1}x),\pair{\proj{2}(\proj{1}x),\proj{2}x}})z)}\\
        & \class{`lz. t_{\case}~
          (`ly.\pair{\proj{1}y,(\proj{1}(\proj{2}y))(\proj{2}(\proj{2}y))})~
          \pair{\proj{1}(\proj{1}z),\pair{\proj{2}(\proj{1}z),\proj{2}z}}}\\
        & \class{`lz. t_{\case}~
          \pair{\proj{1}(\proj{1}z),(\proj{2}(\proj{1}z))(\proj{2}z)}}\\
        & \class{`lz. \cas[{(\co_i\mapsto
            \proj[n]{i}(\proj{1}(\proj{1}z)))_{i=1}^n}]{
            \big(\proj{2}(\proj{1}z)~(\proj{2}z)\big)} }\\
        & \class{`lz. \big( \cas[{(\co_i\mapsto
            \proj[n]{i}(\proj{1}(\proj{1}z)))_{i=1}^n}]{
            \proj{2}(\proj{1}z)}\big)(\proj{2}z) }
        \qquad \text{by \rul{CaseApp}}\\
      \end{array}
    \end{math}
  \item[{\drefCC:}]
    Let lhs~=~$(\comp`*\id[D])~;~\case$,\quad and 
    rhs~=~$h_{\eq}~;~(\id[\pdt]`*\case)~;~\case$, with

    \hfil
    $h_{\eq}=\class[(\pdt`*\pdt)`*D→\pdt`*(\pdt`*D)]{
      `lx.\pair{\proj{1}(\proj{1}x)\,,\,
        \pair{\proj{2}(\proj{1}x)\,,\,\proj{2}x}}
    }$.
    \\
    Then \drefCC\ commutes means lhs~=~rhs.
    \\
    Remember that $\comp:\pdt`*\pdt→\pdt$ is the pairing of all     $(\id[\pdt]`*\proj[n]{i})~;~\case$.
    Thus

    \hfil
    \begin{math}
      \begin{array}[t]{r!{=}l}
        \comp &\class{`lx.\pair{\dots,
            (`ly.t_{\case}~\pair{\proj{1}y\,,\,\proj[i]{n}(\proj{2}y)})x,
            \dots}} \\
        & \class{`lx.\pair{\dots,
            t_{\case}~\pair{\proj{1}x\,,\,\proj[i]{n}(\proj{2}x)},
            \dots}} \\
        \comp`*\id[D] &
        \class{`lx. \pair{~\pair{\dots,
              t_{\case}~\pair{\proj{1}(\proj{1}x)\,,\,
                \proj[i]{n}(\proj{2}(\proj{1}x))},
              \dots}~,\,
            \proj{2}x\,}} \\
        \text{lhs} &
        \class{ `lz. t_{\case}~
          \pair{~\pair{\dots,
              t_{\case}~\pair{\proj{1}(\proj{1}z)\,,\,
                \proj[i]{n}(\proj{2}(\proj{1}z))},
              \dots}~,\,
            \proj{2}z\,}
        } \\
        & \class{ `lz. 
          \cas[{(\co_i\mapsto t_{\case}~\pair{\proj{1}(\proj{1}z)\,,\,
              \proj[i]{n}(\proj{2}(\proj{1}z))}
            )_{i=1}^n}]{\proj{2}z}
        } \\
        & \class{ `lz. 
          \cas[{(\co_i\mapsto t_{\case}~\pair{\proj{1}(\proj{1}z)\,,\,
              \proj[i]{n}(\proj{2}(\proj{1}z))}
            )_{i=1}^n}]{(\proj{2}z)}
        } \\
        & \class{ `lz. 
          \cas[{(\co_i\mapsto 
            \cas[{(\co_j\mapsto
              \proj[n]{j}(\proj{1}(\proj{1}z)))_{j=1}^n}]{
              (\proj[i]{n}(\proj{2}(\proj{1}z)))}
            )_{i=1}^n}]{(\proj{2}z)}
        } \\
      \end{array}
    \end{math}
    \\
    \begin{math}
      \begin{array}[t]{c!{=}l}
        \text{rhs} & \class{
          `lz.t_{\case}~\big((`lx.\pair{\proj{1}x\,,\,t_{\case}(\proj{2}x)})~
          \pair{\proj{1}(\proj{1}z)\,,\,
            \pair{\proj{2}(\proj{1}z)\,,\,\proj{2}z}}
          \big) }\\
        & \class{ `lz.t_{\case}~(
          \pair{\proj{1}(\proj{1}z)\,,\,t_{\case}
            \pair{\proj{2}(\proj{1}z)\,,\,\proj{2}z}}) }\\
        & \class{ `lz.
          \cas[{(\co_i\mapsto
            \proj[n]{i}(\proj{1}(\proj{1}z)))_{i=1}^n}]{
            t_{\case}\,\pair{\proj{2}(\proj{1}z)\,,\,\proj{2}z}}
        }\\
        & \class{ `lz.
          \cas[{(\co_i\mapsto
            \proj[n]{i}(\proj{1}(\proj{1}z)))_{i=1}^n}]{
            \cas[{(\co_j\mapsto
              \proj[n]{j}(\proj{2}(\proj{1}z)))_{j=1}^n}]{
              (\proj{2}z)}}
        }\\
        & \class{ `lz.
          \cas[{(\co_j\mapsto
            \cas[{(\co_i\mapsto
              \proj[n]{i}(\proj{1}(\proj{1}z)))_{i=1}^n}]{
              \proj[n]{j}(\proj{2}(\proj{1}z))})_{j=1}^n}]{
            (\proj{2}z)}
        } \quad \text{\small (by \rul{CaseCase})}\\
      \end{array}
    \end{math}
  \item[{\drefFail:}]
    This diagram commutes if lhs~=~rhs, with lhs~=~$\proj{2}~;~\fail$,
    \\
    and rhs~=~$(\id[\pdt]`*\fail)~;~\case$.
    \\
    \begin{math}
      \begin{array}[b]{c!{=}l}
        \text{lhs} & \class[\pdt`*\uno→D]{
          `lz.(`lx.\cas[]{\co_1})~(\proj{2}z)} \\
        & \class[\pdt`*\uno→D]{`lz.\cas[]{\co_1}} \\
        \text{rhs} & \class[\pdt`*\uno→D]{`lz.t_{\case}~
          \pair{\proj{1}z,\,,(`lx.\cas[]{\co_1})~(\proj{2}z)}}\\
        & \class[\pdt`*\uno→D]{`lz.t_{\case}~
          \pair{\proj{1}z,\,,\cas[]{\co_1}}}\\
        & \class[\pdt`*\uno→D]{`lz.
          \cas[{(\co_i\mapsto
            \proj[n]{i}(\proj{1}z))_{i=1}^n}]{\cas[]{\co_1}}}\\
        & \class[\pdt`*\uno→D]{`lz.\cas[]{\co_1}} \qquad
        \text{\small (by \rul{CaseCase})}\\
      \end{array}
    \end{math}
    \qed
  \end{description}
\end{proof}


\noindent
\textbf{Proposition \ref{prop:def-itp}.}
{\itshape
  In the model~\Ms, the interpretation of a term~$t$ in a context
  $`G=x_1;\cdots;x_k$ is 

  \hfill
  $\itp{t}=\class[D^k\to D]{`lx.\cpl{t}\subs[x_i]{\proj[k]{i}x}}$
  \hfill
  (with~$x$ fresh in~$t$).
}
\begin{proof}
  The proof proceeds by structural induction on~$t$.
  If~~$t=x_i$ or~$t=\co$, we just have to write the definition of~\itp{t}.
  If~$t=`lx_{k+1}.t_0$ or~$t=t_1t_2$, the equation is straightforward from definition of~\itp{t} and induction hypothesis.
  We detail the proof when~$t=\cas{u}$:\\
  $\itp{t}=\tuple{\itp{`q};\itp{u}};\case$,\ 
  with $\itp{`q}=\tuple{f_1,\dots,f_n}$ where $f_j=\itp{u_j}$ if $\co_j\mapsto u_j`:`q$, and $f_j=\term[D^k];\fail\ (=\class[D^k→D]{`lx.\cas[\ ]{\co[c_1]}})$ if~$\co_j`;\dom{`q}$.
  So
  \begin{displaymath}
    \itp{t}=\class[D^k→D]{`lx.t_{\case}~\pair{t_{`q}x,t_ux}}
  \end{displaymath}
  with $\case=\class[\pdt`*D→D]{t_{\case}}$,\ 
  $\itp{`q}=\class[D^k→\pdt]{t_{`q}}$,\ and
  $\itp{u}=\class[D^k→D]{t_u}$.
  By induction hypothesis, we can chose
  $t_u=`lx.\cpl{u}\subs[x_i]{\proj[k]{i}x}$, and
  $t_{`q}=`lx.\pair[n]{t_1x,\dots,t_nx}$ with 
  $t_j=`lx.\cpl{u_j}\subs[x_i]{\proj[k]{i}x}$ 
  if~$\co_j\mapsto u_j`:`q$, and $t_j=`lx.\cas[\ ]{\co[c_1]}$ if~$\co_j`;\dom{`q}$.\\
  Also
  \begin{math}
    \begin{array}[t]{c!{\eqlc}l}
      `lx.t_{\case}\,,\,\pair{t_{`q}x,t_ux} &
      `lx.t_{\case}\,,\,\pair{\, 
\pair[n]{t_1x,\dots,t_nx}\,,\,\cpl{u}\subs[x_i]{\proj[k]{i}x}\,}\\
      & `lx. \cas[{(\co_j\mapsto t_jx)_{j=1}^n}]
      {\cpl{u}\subs[x_i]{\proj[k]{i}x}}\\
      & `lx.\cpl{\cas{u}}\,\subs[x_i]{\proj[k]{i}}
    \end{array}
  \end{math}\\
  Indeed, $t_jx\eqlc\cpl{u_j}\subs[x_i]{\proj[k]{i}x}$ \  if~$\co_j\mapsto u_j`:`q$,\ \ and $t_j\eqlc\cas[\ ]{\co_1}$ \ if~$\co_j`;\dom{`q}$.\\
  Since~$D^k→D$ is compatible with~\eqlc,\  $\itp{t}=\class[D^k\to D]{`lx.\cpl{t}\subs[x_i]{\proj[k]{i}x}}$.
  \qed
\end{proof}


\subsection{Some rewriting properties}
\label{prf:cpl-rewrit}

\textbf{Lemme \ref{lem:cpl-red}.\ref{it:cpl-red-lcm}
  (\lcm reduction on completed terms).}\\
{\itshape
  Let~$t$ be a defined term.
  Then, for any term~$t'$,
  \begin{center}
    $\cpl{t}→_{\lcm}t'$\quad implies\quad $t'=\cpl{t_0}$\; for
    some~$t_0$ such that $t→t_0$.
  \end{center}
}
\begin{proof}
  By structural induction on~$t$.
  First notice that every \rul{CaseCons} redex present in~\cpl{t} corresponds to a \rul{CaseCons} redex in~$t$, as~$t$ is defined.
  Moreover, $\cas[\ ]{\co_1}$ is not reducible so every redex in a sub-term of~$\cpl{t}$ corresponds to a redex in a sub-term of~$t$
  Also if the reduction~$\cpl{t}→t'$ is performed in a (strict) sub-term of~$\cpl{t}$, we can immediately conclude with induction hypothesis.
  So it is sufficient to check the lemma for the five possible reductions in head position $\cpl{t}\rightarrowtriangle t'$, which is trivial.
  \qed
\end{proof}


\noindent
\textbf{Lemme \ref{lem:cpl-red}.\ref{it:cpl-red-cc}
  (\rul{CaseCase} reduction on completed terms).}\\
{\itshape
  For any term~$t, t'$,
  \begin{center}
    $\cpl{t}→_{cc}t'$\quad implies\quad $t'→_{cc}^*\cpl{t_0}$\; for     some~$t_0$ such that $t→_{cc}t_0$
  \end{center}
}
\begin{proof}
  By by structural induction on~$t$.
  If the \rul{CaseCase} reduction occurs in a strict sub-term of $\cpl{t}$ then we conclude with induction hypothesis.
  Otherwise $t=\cas{\cas[`f]{u}}$, and $t'=\cas[\cpl{`q}`o\cpl{`f}]{\cpl{u}}$.
  Then we take~$t_0=\cas[`q`o`f]{u}$, since $\cpl{`q}`o\cpl{`f}→_{cc}^*\cpl{`q`o`f}$.
  Indeed, if $`f=\{\co_i\mapsto u_i/i`:I\}$ then 
  \begin{displaymath}
    \begin{array}[t]{ccl}
      \cpl{`q}`o\cpl{`f} &=&
      \{\co_i\mapsto \cas[\cpl{`q}]{\cpl{u_i}}/i`:I\} \cup
      \{\co_i\mapsto \cas[\cpl{`q}]{\cas[]{\co_1}}/i`;I\} \\
      \cpl{`q`o`f} &=&
      \{\co_i\mapsto \cas[\cpl{`q}]{\cpl{u_i}}/i`:I\} \cup
      \{\co_i\mapsto \cas[]{\co_1}/i`;I\}
    \end{array}
  \end{displaymath}
  Also $t'→_{cc}^*\cpl{t_0}$.
  \qed
\end{proof}


\noindent
\textbf{Lemma \ref{lem:cnf-cpl}
  (Commutation case-completion/\rul{cc}-normal form).}\\
{\itshape
  For any term~$t$, $$\cnf{(\cpl{t})}=\cpl{\cnf{t}}~.$$
}
\begin{proof}
  By induction on the size of the maximal reduction  $\cpl{t}\tocc\cnf{(\cpl{t})}$.
  If $\cpl{t}=\cnf{(\cpl{t})}$, then $\cpl{t}$ is \rul{CaseCase}-normal, and so is~$t$ (Fact.\ref{fact:red-cpl}).
  Thus~$t=\cnf{t}$ and $\cpl{t}=\cpl{\cnf t}$.
  Otherwise let $\cpl{t}\tocc t'\tocc^* \cnf{(\cpl{t})}$.
  By Lem.~\ref{lem:cpl-red}.\ref{it:cpl-red-cc}, there is a term~$t_0$ such that $t'\tocc^*\cpl{t_0}$ and $t→_{cc}t_0$.
  Hence $\cpl{t} \tocc^+ \cpl{t_0} \tocc^* \cnf{(\cpl{t})} = \cnf{(\cpl{t_0})}$.
  By induction hypothesis, $\cnf{(\cpl{t_0})}=\cpl{\cnf{t_0}}$.
  Moreover $\cnf{t_0}=\cnf{t}$, so
  \begin{math}
    \cpl{(\cnf{t})} = \cpl{(\cnf{t_0})} = \cnf{(\cpl{t_0})} = \cnf{(\cpl{t})} 
  \end{math}.
  \qed
\end{proof}


\noindent
\textbf{Lemma \ref{lem:lcm-cnf}.}
{\itshape
  For any terms~$t,t'$, if $t→_{\lcm}t'$ then there exists a term~$u$ such that
  \begin{displaymath}
    \cnf{t}→_{\lcm}^*u\tocc^*\cnf{t'}~.
  \end{displaymath}
}
\begin{proof}
  The proof proceeds by induction on~$\mes{t}$, the structural measure of~$t$ defined by
  \begin{displaymath}
    \begin{array}[t]{c!{=}c!{\qquad}c!{=}c!{\quad}c!{=}c}
      \mes x & 1 &  \mes{`lx.t} & \mes t + 1 &
      \mes{\cas{t}} & \mes t `* (\mes{`q} + 2)\\
      \mes \co & 1 & \mes{tu} & \mes t + \mes u &
      \mes{`q} & \sum_{\co`:\dom{`q}} \mes{`q_{\co}} \\
    \end{array}
  \end{displaymath}
  Notice that this measure decreases with the subterm relation but also with \rul{CaseCase} reduction ($\mes{\cas{\cas[`f]{u}}}>\mes{}\cas[`q`o`f]{u}$ for any~$`q,`f,t$).
  For any term~$s$ (or any case-binding~$`q$), $s'$ (\resp~$`q'$) represents a term (\resp\ a case-binding) such that $s→_{\lc}s'$ (\resp~$`q_{\co}→_{\lc}`q'_{\co}$ for some $\co`:\dom{`q}$, and $`q_{\co'}=`q'_{\co'}$ for $\co'\neq\co$)
  \begin{itemize}
  \item If $t$ is an application, either $t=t_1t_2$ and $t'=t'_1t_2$ (or $t'=t_1t'_2$) and we conclude with induction hypotheses, or $t=(`lx.t_1)t_2$ and $t'=t_1\subs{t_2}$.
    In that case, $\cnf{t}=(`lx.\cnf{t_1})\cnf{t_2} \to_{\lcm} (\cnf{t_1})\subs{\cnf{t_2}} \tocc^* \cnf{(\cnf{t_1})\subs{\cnf{t_2}}}$.
    Moreover, $\cnf{(\cnf{t_1})\subs{\cnf{t_2}}} = \cnf{(t_1\subs{t_2})}$.
    Thus $\cnf{t} \to_{\lcm} (\cnf{t_1})\subs{\cnf{t_2}} \tocc^* \cnf{t'}$.
  \item If $t$ is an abstraction, either $t=`lx.t_0$ and $t'=`lx.t'_0$ and we conclude with induction hypothesis, or $t=`lx.t'x$ with~$x`;\fv{t'}$.
    In that case, $\cnf{t}=`lx.\cnf{t'}x →_{\lcm} \cnf{t'}$.
  \item If $t=\cas{x}$, then $t'=\cas[`q']{x}$ and we conclude with induction hypothesis.
  \item If $t=\cas{\co}$, then either $t'=\cas[`q']{\co}$ and we conclude with induction hypothesis, or $t'=`q_{\co}$ and $\cnf{t}=\cas[\cnf{`q}]{\co}→_{\lcm}\cnf{`q_{\co}}$.
  \item If $t=\cas{t_1t_2}$, then either $t'=\cas[`q']{t_1t_2}$ and we conclude with induction hypothesis, 
    or $t'=\cas{t_0}$ with $t_1t_2→_{\lcm}t_0$ or $t'=(\cas{t_1})t_2$.

    In the second case, by induction hypothesis there is some~$u_0$ such that \linebreak[4] $\cnf{t_1t_2}→_{\lcm}^*u_0\tocc^*\cnf{t_0}$.
    Hence
    \begin{displaymath}
      \cnf{t}=\cas[\cnf{`q}]{\cnf{t_1t_2}} →_{\lcm}^* \cas[\cnf{`q}]{u_0}   \tocc^* \cas[\cnf{`q}]{\cnf{t_0}} \tocc^*   \cnf{\cas[\cnf{`q}]{\cnf{t_0}}}~.
    \end{displaymath}
    Moreover, every sub-term of~\cnf{t'} is in \rul{CaseCase} normal form, so \linebreak[4] $\cnf{t'}=\cnf{\cas[\cnf{`q}]{\cnf{t_0}}}$.
    Thus
    \begin{math}
      \cnf{t} →_{\lcm}^* \cas[\cnf{`q}]{u_0} \tocc^* \cnf{t'}
    \end{math}.
    \\
    In the last case, $\cnf{t}=\cas[\cnf{`q}]{(\cnf{t_1}\cnf{t_2})}$, so

    \hfil
    \begin{math}
      \cnf{t} →_{\lcm}
      (\cas[\cnf{`q}]{\cnf{t_1}})\cnf{t_2} \tocc^*
      \cnf{(\cas[\cnf{`q}]{\cnf{t_1}})}\cnf{t_2} =   \cnf{\cas{t_1}}\cnf{t_2}
    \end{math}.

  \item If $t=\cas{`lx.t_0}$, idem as previous case.
  \item If $t=\cas{\cas[`f]{t_0}}$, then either $t'=\cas{\cas[`f']{t_0}}$, or $t'=\cas{\cas[`f]{t'_0}}$, or $t'=\cas[`q']{\cas[`f]{t_0}}$.

    In the first case, write $t_1=\cas[`q`o`f]{t_0}$ and $t'_1=\cas[`q`o`f']{t_0}$.
    Remark that $\mes{t_1}<\mes{t}$ (since the structural measure decreases by \rul{CaseCase}-reduction), \linebreak[4] and that $t_1→_{\lc}t'_1$.
    By induction hypothesis, there is some~$u$ such that \linebreak[4]
    \begin{math}
      \cnf{t_1}→_{\lcm}^*u\tocc^*\cnf{t'_1}
    \end{math}.
    Since $\cnf{t}=\cnf{t_1}$ and $\cnf{t'}=\cnf{t'_1}$ we are done.

    In the second case, same method but with $t'_1=\cas[`q`o`f]{t'_0}$.

    In the last case, write $t=\cas{\cas[`f_1]{\cdots\cas[`f_k]{u_0}}}$, where $u_0$ is not a case construct (thus $k\geq 1$).
    Then $\cnf{t}=\cas[\cnf{(`q`o`j)}]{\cnf{u_0}}$, with $`j=`f_1`o(\cdots`o`f_k)$, and $\cnf{t'}=\cas[\cnf{(`q'`o`j)}]{\cnf{u_0}}$ (since $((`q`o`f_1)`o\cdots)`o`f_k \tocc^* `q`o`j$).

    Let us explicit~\cnf{t}\ and~\cnf{t'}:
    \begin{math}
      \begin{array}[t]{c!{=}l}
        \cnf{t} & \cas[
        \co\mapsto\cnf{\cas{`j_{\co}}}\,/\,\co`:\dom{`j}
        ]{\cnf{u_0}}\\
        \cnf{t'} & \cas[
        \co\mapsto\cnf{\cas[`q']{`j_{\co}}}\,/\,\co`:\dom{`j}
        ]{\cnf{u_0}}\\
      \end{array}
    \end{math}

    Remark that $\mes{\cas{`j_{\co}}}\leq\mes{t}$ (the structural measure decreases by \rul{CaseCase}-reduction, and preserves the order of sub-term relation), and that \linebreak[4] $\cas{`j_{\co}}→_{\lcm}\cas[`q']{`j_{\co}}$.
    Hence , by induction hypothesis, for each~$\co`:\dom{`j}$ there is a term~$u_{\co}$ such that
    $\cnf{\cas{`j_{\co}}}→_{\lcm}^*u_{\co}\tocc^*\cnf{\cas[`q']{`j_{\co}}}$.
    Thus

    \hfill
    \begin{math}
      \cnf{t}→_{\lcm}^*u\tocc^*\cnf{t'}
      \qquad\text{for}\qquad
      u=\cas[\co\mapsto u_{\co}\,/\,\co`:\dom{`j}]{\cnf{u_0}}~.
    \end{math}
    \qed
  \end{itemize}
\end{proof}


\end{document}